\documentclass[11pt]{article}
\usepackage[title,titletoc]{appendix}
\usepackage[margin=1in]{geometry}
\usepackage[T1]{fontenc}
\usepackage{lmodern}
\usepackage{microtype}
\usepackage{amsmath,amssymb,amsthm,mathtools}
\usepackage{booktabs,tabularx,array}
\usepackage{enumitem}
\usepackage[round,authoryear]{natbib}
\usepackage{graphicx}
\usepackage[hidelinks]{hyperref}

\allowdisplaybreaks
\numberwithin{equation}{section}
\setlist[enumerate]{itemsep=0.25em,topsep=0.5em}
\setlist[itemize]{itemsep=0.2em,topsep=0.4em}

\newtheorem{theorem}{Theorem}[section]
\newtheorem{proposition}[theorem]{Proposition}
\newtheorem{lemma}[theorem]{Lemma}
\newtheorem{corollary}[theorem]{Corollary}
\theoremstyle{definition}
\newtheorem{definition}[theorem]{Definition}
\newtheorem{assumption}[theorem]{Assumption}
\newtheorem{example}[theorem]{Example}
\theoremstyle{remark}
\newtheorem{remark}[theorem]{Remark}

\newcommand{\PP}{\mathbb P}
\newcommand{\EE}{\mathbb E}
\newcommand{\TV}{\operatorname{TV}}
\newcommand{\cL}{\mathcal L}
\newcommand{\1}{\mathbf 1}
\newcommand{\wbar}{\bar w}
\newcommand{\Del}{\Delta_0}
\newcommand{\Lb}{\bar\ell}
\newcommand{\rb}{\bar r}
\newcommand{\sigH}{\sigma^H}

\title{Washed Out by the Crowd?\\[0.25em]
\large Accountability under Sequential Review}
\author{%
  {\large Siming Ye}\footnote{%
    Department of Economics, \emph{Georgetown University}. 
    Email: \texttt{sy677@georgetown.edu}. \\
    I owe my greatest debt to Daniel Rappoport, who has guided this project from its earliest stages and shaped its development at every turn. I also thank my advisor, Christopher Chambers, for his continued encouragement and support. I further thank Kevin He, Roger Lagunoff, Axel Anderson and Garance Genicot for their helpful comments. All errors are my own.
  }%
}
\date{\textbf{Preliminary Draft} \\[1em] \today}
\begin{document}
\maketitle

\begin{abstract}
When an early information producer is judged only after others have reviewed and revised the work, the same review that sharpens the final decision can blur the question of who deserves the credit. This paper asks how an organization can still reward careful early work once a chain of later reviewers has acted on it. In the model, an analyst's hidden effort makes an initial report more likely to be right; a sequence of reviewers then reacts to it; and the organization can pay only on the record this process leaves behind. The main result splits the value of any such record into two parts: how much effort improved the first report, and how well the final record still indicates whether the first report was accurate. When later reviewers can overturn a flawed report, review improves decisions but washes the analyst's effort out of the final outcome; therefore, rewarding the final outcome stops working, while rewarding agreement between the first and last word has incentive value instead. When the first report is simply copied by downstream reviewers, the reverse holds. Which reward is better comes down to one thing: how likely review is to repair an initial mistake.
\end{abstract}

\noindent\textbf{Keywords:} moral hazard, social learning, performance measurement.\\
\textbf{JEL codes:} D82, D83, L23.

\section{Introduction}\label{sec:introduction}

Many organizational decisions begin with an initial assessment and close only after later people have reviewed it. A junior analyst screens a project before an investment committee decides; a compliance officer flags a transaction before legal or managerial review; a claims adjuster recommends a disposition before supervisors settle the file. The effort behind the initial assessment is costly and difficult to observe. The record left by the subsequent review is easier to verify.

This paper asks how an organization can use that later record to reward careful early work. The difficulty is that the review process itself responds to the report under scrutiny: reviewers see earlier recommendations, receive fresh information, and act on their own incentives to match the state. A strong review can correct a flawed initial report; a weak one may simply repeat it. The same scrutiny that improves the final decision can therefore blur what the record says about the analyst who filed the initial report.

The model embeds the hidden-action problem in a binary review chain. The state is $\theta\in\{-1,+1\}$; an analyst chooses costly, unobserved effort and then produces a public report $a_0\in\{-1,+1\}$, with higher effort making the report more likely to match the state. Downstream reviewers act in sequence, and the principal can pay the analyst only on a record that is contractible when the process settles. The central question is whether that record still separates histories that began with a correct first report from histories that began with a mistaken one.

Two incomplete records organize the comparison. Under outcome evaluation, the principal verifies the state and the final action, and the natural favorable event is terminal correctness. Under process evaluation, the principal archives the initial and final recommendations, and the natural favorable event is first--last agreement. Agreement carries information only when review could have produced disagreement: when reviewers mechanically defer, it is a rubber stamp; when they tend to overturn flawed reports, it is evidence that the first report survived scrutiny.

The main result is an accounting identity. Effort reaches any downstream record only through the probability that the initial report is correct, and conditional on that event the review process unfolds identically after high effort and after a secret deviation. The incentive content of a record therefore factors into two terms: the primitive effect of effort on first-report accuracy, and the record's ability to distinguish the correct-start branch from the mistaken-start branch.

The memory-1 benchmark supplies the dynamics behind the title. Each reviewer observes only her predecessor's action and one bounded private signal. When the initial report is contestable, the public belief climbs toward the level at which a single private signal can just no longer overturn the inherited action. Along that path, terminal correctness grows insensitive to the analyst's hidden effort---the crowd washes it out---while first--last agreement gains incentive content, because correct reports are preserved and mistaken reports are repaired. When the initial report is uncontestable, and reviewers copy it, the ranking flips: terminal correctness retains the full effort imprint, and agreement is degenerate.

The remainder of the paper proceeds as follows. Section~\ref{sec:literature} reviews the related literature. Before specifying the full learning model, Section~\ref{sec:simple-example} provides a reduced-form example to illustrate the core mechanisms of preservation and repair. We formally define the baseline environment in Section~\ref{sec:environment}, which allows us to establish the main record-strength and branch-decomposition results in Section~\ref{sec:records}. Sections~\ref{sec:memory1} through~\ref{sec:numerical} solve and numerically illustrate the memory-1 benchmark, focusing on the separation between the on-path belief path and the hidden-deviation law. Finally, Sections~\ref{sec:beyond} to~\ref{sec:asymmetry} consider extensions of the baseline setup—including non-Markovian dynamics, retained traces, and asymmetric priors—before Section~\ref{sec:discussion} concludes.

\section{Related literature}\label{sec:literature}

The paper connects moral hazard with social learning. The downstream review social learning process belongs to the literature on Bayesian observational learning and cascades, including \citet{Banerjee1992}, \citet{BikhchandaniHirshleiferWelch1992}, \citet{SmithSorensen2000}, \citet{AcemogluDahlehLobelOzdaglar2011}, and the finite-memory environment of \citet{DrakopoulosOzdaglarTsitsiklis2013}. This branch of literature asks when a sequence of actions aggregates information about the underlying state. A more recent study, \citet{KartikLeeLiuRappoport2024}, identifies a general condition, \textit{excludability}, under which preferences and information jointly support learning on observational networks, and shows that the standard unbounded-beliefs benchmark can be too strong in multi-state environments. 

Deviating from Bayesian learning environments, \citet{JadenChen2026} show that once agents are uncertain about others' signal structures, informational cascades can arise from an asymmetry in worst-case beliefs under max-min preferences, rather than from the standard comparison between accumulated public evidence and the strength of a private signal. \citet{SmithSorensenTian2021} study herding and optimal experimentation from a normative standpoint, and one of their findings is that efficient incentives reward agents who are later imitated, a reward for being agreed with, in a different guise, that reappears in the analysis below.

These papers ask how well a sequence of actions learns or uses information about the state. This paper asks what the same downstream process preserves of an earlier, hidden choice of effort: whether the record it leaves behind still carries a contractible trace of that choice.

The contracting angle of this project roots in the informativeness principle in moral hazard \citep{Holmstrom1979} and related work on performance measurement \citep{Baker1992,ChaigneauEdmansGottlieb2014}. Bounded payments matter because a fixed wage cap can turn weak statistical attribution into outright infeasibility, as \citet{JewittKadanSwinkels2008} show. Because the analyst is herself an information producer, the model also relates to delegated
information acquisition \citep{ChadeKovrijnykh2016, Carroll2019, ClarkReggiani2021, WhitmeyerZhang2023, Wu2026} and to the contractible or partially contractible evaluation of expert information \citep{PapireddygariWaggoner2022, Xia2025}. In our context, by contrast, downstream reviewers transform the original report before any record is settled and compensation is paid, so evaluation reaches the analyst only through the transformed record.

A related strand of studies asks what information about an agent's action should be made public in order to discipline behavior. \citet{Prat2005} shows that observing an agent's action directly, rather than only its consequence, can backfire by inducing conformity. A related tension appears below between rewarding a final outcome and rewarding agreement with an initial report, though for a different reason: there, the principal is constrained by whatever record a review process happens to leave behind, rather than choosing a disclosure policy that itself changes the analyst's incentive to report honestly. \citet{HalacLipnowskiRappoport2021} instead show how private, rank-contingent contracts can implement a desired action profile as the unique equilibrium.

More broadly, this paper examines how individuals interpret others' interpretations and how that information is passed down. In a continuous-action social learning experiment, \citet{AngrisaniGuarinoJehielKitagawa2021} find that subjects weight their own signal approximately correctly but place less-than-Bayesian, roughly constant weight on the information in predecessors' actions -- consistent with relative overconfidence, or mistrust of predecessors' ability to read their own signals, rather than with redundancy neglect, which would instead overweight early movers. A complementary departure runs the other way: \citet{DasarathaHe2020} show that agents who naively neglect correlation among their neighbors' observations can overweight early movers, so that some network structures make the first report disproportionately influential. This paper keeps Bayesian downstream reviewers as its clean benchmark, but nothing in the argument depends on that choice: whatever rule reviewers actually follow, Bayesian or not, that rule alone pins down how often a correct initial report survives review and how often an incorrect one is corrected, and it is those two probabilities that carry all of the analyst's incentives.

Finally, the paper relates to work on organizational information processing and communication in hierarchies \citep{SahStiglitz1986, BoltonDewatripont1994, MarschakReichelstein1998,
Garicano2000, AryaGloverMittendorf2006}, which studies how organizations aggregate and route information. This paper asks a complementary question: holding an information-routing architecture fixed, how does it change what a later record reveals about an earlier, hidden choice of effort? The chain structure used below, in which each person observes only a message left by an immediate predecessor rather than the full history, appears outside social learning as well. \citet{AnderliniGerardiLagunoff2010} study a similarly dynastic, message-based architecture in which social memory and evidence shape conflict rather than the aggregation of information about a state.

\section{A simple example: preservation, repair, and attribution}\label{sec:simple-example}

Consider a binary task. The state of the world is $\theta\in\{-1,+1\}$, and reports use the same two labels. A report is correct when it equals the state. An analyst first chooses effort $e\in\{H,L\}$ and then produces the initial report $a_0$. Effort is not observed by the reviewers or by the principal. Its direct effect is only to make the first report more accurate:
\begin{equation}\label{eq:simple-alpha}
  \alpha_e:=\PP(a_0=\theta\mid e),
  \qquad
  \alpha_H>\alpha_L.
\end{equation}
Write
\begin{equation}\label{eq:simple-delta}
  \Del:=\alpha_H-\alpha_L>0
\end{equation}
for the primitive gain in first-report accuracy. High effort therefore shifts probability mass from mistaken starts to correct starts.

After the initial report, downstream reviewers conduct the review. A reviewer may observe the current report, later evidence, or earlier review activity, depending on the procedure. The reviewer does not observe the analyst's effort, and while reviewing she does not know the state. Her strategy is the rule that maps what she observes into the next reviewed report. Those strategies may use beliefs about whether the inherited report is correct, but the beliefs and the resulting choices are part of the same review procedure.

The important point for incentives is that the same review procedure is used after either effort choice. A low-effort analyst does not announce that she chose low effort, and reviewers do not see it. Low effort therefore shifts probability toward mistaken starts while leaving the treatment of any given review history unchanged. All probabilities below are computed under this fixed procedure.

Let $a_T$ denote the final reviewed report. For the two endpoint records considered here, the whole review procedure can be summarized by two conditional probabilities:
\begin{equation}\label{eq:simple-SR}
  S:=\PP(a_T=\theta\mid a_0=\theta),
  \qquad
  R:=\PP(a_T=\theta\mid a_0\ne\theta).
\end{equation}
The number $S$ is preservation. Starting from a correct first report, review ends with a correct final report with probability $S$. The number $R$ is repair. Starting from a mistaken first report, review corrects the mistake by the end with probability $R$.

Suppose first that the settlement record verifies the state and the final reviewed report, but not the analyst's initial report. The principal can then reward terminal correctness, $a_T=\theta$. Under effort $e$, the review starts from a correct report with probability $\alpha_e$ and from a mistaken report with probability $1-\alpha_e$. Hence
\begin{equation}\label{eq:simple-out-prob}
  \PP(a_T=\theta\mid e)=\alpha_e S+(1-\alpha_e)R.
\end{equation}
The high-minus-low probability gap for terminal correctness is therefore
\begin{equation}\label{eq:simple-out-gap}
  \PP(a_T=\theta\mid H)-\PP(a_T=\theta\mid L)=\Del(S-R).
\end{equation}
Terminal correctness rewards effort only when a correct first report is more likely than a mistaken first report to lead to a correct final report. If review makes the final report equally accurate after both kinds of starts, terminal correctness measures the quality of review but no longer attributes much to the analyst's effort.

Now suppose instead that the settlement record keeps the analyst's initial report and the final reviewed report, but not the state. The principal can then reward first--last agreement, $a_T=a_0$. On a correct start, agreement means that the correct first report survived review, so it occurs with probability $S$. On a mistaken start, agreement means that the mistake survived review, so it occurs with probability $1-R$. Thus
\begin{equation}\label{eq:simple-agree-prob}
  \PP(a_T=a_0\mid e)=\alpha_e S+(1-\alpha_e)(1-R),
\end{equation}
and the high-minus-low probability gap for first--last agreement is
\begin{equation}\label{eq:simple-agree-gap}
  \PP(a_T=a_0\mid H)-\PP(a_T=a_0\mid L)=\Del(S+R-1).
\end{equation}
Agreement rewards effort only when review tends to preserve correct starts and reject mistaken starts. If every initial report is rubber-stamped, then $S=1$ and $R=0$; agreement occurs under both effort levels and has no incentive content.

Both records point in the intended direction when
\begin{equation}\label{eq:simple-positive-orientation}
  S>R,
  \qquad
  S+R>1.
\end{equation}
The first inequality says that terminal correctness is more likely after a correct first report than after a mistaken one. The second says that first--last agreement is more likely after a correct first report than after a mistaken one, since the agreement probabilities on the two starts are $S$ and $1-R$.

\begin{proposition}[Endpoint incentive comparison]\label{prop:simple-threshold}
Suppose \eqref{eq:simple-positive-orientation} holds. Then first--last agreement has a weakly larger high-minus-low probability gap than terminal correctness if and only if
\begin{equation}\label{eq:simple-half}
  R\ge \frac12.
\end{equation}
Consequently, if high effort has incremental cost $\kappa>0$ and the contract pays a bonus only when the chosen record succeeds, the minimum bonus required under first--last agreement is weakly smaller than the minimum bonus required under terminal correctness if and only if $R\ge1/2$.
\end{proposition}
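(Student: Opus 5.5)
The plan is to read both claims directly off the gap formulas \eqref{eq:simple-out-gap} and \eqref{eq:simple-agree-gap}. These rest only on the decomposition by the correctness of the first report, together with the fact that the review procedure does not depend on effort; both facts were established just above the statement.

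\textbf{Step 1: compare the two gaps.} Since $\Del>0$, the agreement gap weakly exceeds the correctness gap if and only if
\[
\Del(S+R-1)\ge \Del(S-R)\iff S+R-1\ge S-R\iff 2R\ge 1 .
\]
The variable $S$ cancels, so preservation plays no role in the ranking; only repair does. Condition \eqref{eq:simple-positive-orientation} is not used here. Its role is to make both gaps strictly positive, so that each record is correctly oriented and the bonus comparison in the next step is meaningful.

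\textbf{Step 2: translate the gap ranking into bonus sizes.} Consider a contract paying a base wage $w$ plus a bonus $b\ge0$ when the chosen event $E$ occurs, where $E$ is either $\{a_T=\theta\}$ or $\{a_T=a_0\}$. The analyst's payoff difference between high and low effort is
\[
b\bigl(\PP(E\mid H)-\PP(E\mid L)\bigr)-\kappa .
\]
Write $g_E:=\PP(E\mid H)-\PP(E\mid L)$ for the gap of the chosen record. Incentive compatibility therefore reads $b\,g_E\ge\kappa$. Because $g_E>0$ under \eqref{eq:simple-positive-orientation}, the minimum bonus is $b_E^*=\kappa/g_E$. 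I would note explicitly that a risk-neutral analyst is implicit here, along with the condition that limited liability or participation constraints do not alter the minimum bonus; this is consistent with the phrasing "pays a bonus only when the chosen record succeeds." Since $\kappa>0$, the map $g\mapsto\kappa/g$ is strictly decreasing on $(0,\infty)$. Hence $b^*_{\text{agree}}\le b^*_{\text{out}}$ holds iff $g_{\text{agree}}\ge g_{\text{out}}$, which by Step 1 holds iff $R\ge1/2$.

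\textbf{Main obstacle.} The argument is routine. The only care needed is in setting up the bonus claim: I must state what "minimum bonus" means, namely the smallest $b$ satisfying the effort incentive constraint, and I must use positivity of both gaps to invert them. Without \eqref{eq:simple-positive-orientation}, a nonpositive gap would make the required bonus infinite or the record useless, and the equivalence could fail.
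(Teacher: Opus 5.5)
Your proposal is correct and matches the paper's proof essentially step for step. Both arguments cancel $\Del$ and $S$ to reduce the gap comparison to $R\ge 1/2$. Both then use positive orientation to write each minimum bonus as $\kappa$ divided by a strictly positive gap, so the record with the larger gap needs the smaller bonus.
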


\begin{proof}
Under \eqref{eq:simple-positive-orientation}, both probability gaps are positive. Equations \eqref{eq:simple-out-gap} and \eqref{eq:simple-agree-gap} show that first--last agreement has the larger gap exactly when
\[
  \Del(S+R-1)\ge \Del(S-R).
\]
Because $\Del>0$, this inequality is equivalent to $S+R-1\ge S-R$, or $R\ge1/2$. For the bonus statement, a bonus $b$ paid on a record raises the analyst's expected payoff from high effort relative to low effort by $b$ times that record's probability gap. The minimum implementing bonus is therefore $\kappa$ divided by the relevant positive gap. The record with the larger gap requires the smaller bonus.
\end{proof}

The threshold has a simple interpretation. Preservation $S$ helps both records in the same way: after a correct first report, both terminal correctness and first--last agreement benefit when that report survives review. The records differ after a mistaken first report. Repair helps terminal correctness, but failed repair creates agreement for the wrong reason. Once repair is more likely than failed repair, agreement turns into evidence that the first report survived a review process that usually catches mistakes.

\section{Model and contractible records}\label{sec:environment}

\subsection{State, effort, and the initial report}\label{subsec:initial}

The state is binary,
\[
  \theta\in\{-1,+1\},
  \qquad
  \PP(\theta=+1)=\PP(\theta=-1)=\frac12.
\]
An initial analyst, indexed by $0$, chooses hidden effort
\[
  e\in\{H,L\}.
\]
The analyst is risk neutral. Let
\[
  \kappa:=k(H)-k(L)>0
\]
be the incremental cost of high effort.

The investigation produces a binary public report
\[
  a_0\in\{-1,+1\}.
\]
The report is the recorded output of the investigation technology, fixed once the investigation closes; the analyst has no separate reporting stage in which to revise it. This isolates moral hazard in information production from the problem of truthful reporting, which Section~\ref{subsec:scope} discusses. The report's accuracy is symmetric:
\begin{equation}\label{eq:initial-tech}
  \PP(a_0=\theta\mid e)=\alpha_e,
  \qquad
  1>\alpha_H>\alpha_L>\frac12.
\end{equation}
Write
\begin{equation}\label{eq:Delta}
  \Del:=\alpha_H-\alpha_L>0.
\end{equation}
The parameter $\Del$ is the primitive statistical imprint of effort on the initial public report.

\begin{remark}[A primitive signal interpretation]\label{rem:primitive-signal}
One can obtain \eqref{eq:initial-tech} from a private investigation signal $s_0\in\{-1,+1\}$ satisfying $\PP(s_0=\theta\mid e)=\alpha_e$, together with an automatic reporting rule $a_0=s_0$. Nothing in the analysis requires the analyst to make a downstream state-matching decision. The analyst produces information; later agents use it.
\end{remark}

\subsection{Downstream review and the fixed-continuation comparison}\label{subsec:downstream}

There are downstream reviewers $t=1,\ldots,T$. Reviewer $t$ chooses
\[
  a_t\in\{-1,+1\}
\]
and has state-matching utility
\[
  u_t(a_t,\theta)=\1\{a_t=\theta\}.
\]
The information available to reviewer $t$ depends on the review architecture. In the memory-1 benchmark, reviewer $t$ observes only $(a_{t-1},x_t)$, where $x_t$ is a fresh private signal. Section~\ref{sec:beyond} allows arbitrary public histories.

Effort is not observed by downstream reviewers. The intended equilibrium has high effort, so reviewers interpret the initial report using the accuracy $\alpha_H$. Let
\[
  \sigH
\]
denote the resulting Bayesian continuation strategy profile. For each actual effort level $e$, define
\begin{equation}\label{eq:Pe}
  \PP^e(\cdot):=\PP(\cdot\mid e,\sigH),
  \qquad
  \EE^e[\cdot]:=\EE[\cdot\mid e,\sigH].
\end{equation}
Thus $\PP^H$ is the on-path law. Under $\PP^L$, the analyst secretly chooses low effort while every downstream reviewer continues to use the high-effort continuation rule.

The comparison holds the continuation fixed because the deviation is secret: a low-effort analyst changes the distribution of $a_0$ and, through it, the distribution of every later history, while each reviewer keeps applying the thresholds implied by $\sigH$. Recomputing the continuation from the accuracy $\alpha_L$ would instead describe a chain in which low effort is publicly anticipated, and that object never enters the analyst's incentive constraint.

\subsection{Downstream exogeneity}\label{subsec:exogeneity}

Let $Z_T$ collect all downstream primitive randomness and public variables other than $(\theta,a_0)$, including private signals, randomization devices, and later actions. The maintained restriction is that effort has no direct effect on these objects once the state and the initial report are fixed.

\begin{assumption}[Downstream exogeneity]\label{ass:exogeneity}
There exists a probability kernel $K$ such that, for every effort $e$,
\begin{equation}\label{eq:kernel-factorization}
  \PP^e(a_0=a,Z_T\in B\mid\theta)
  =q_e(a\mid\theta)K(B\mid\theta,a),
\end{equation}
where $q_e$ is the initial-report technology implied by \eqref{eq:initial-tech}, and $K$ does not depend on $e$. The downstream strategy profile embedded in $K$ is fixed at $\sigH$.
\end{assumption}

Assumption~\ref{ass:exogeneity} allows effort to change the distribution of the first report but rules out a second effort-dependent trace after $(\theta,a_0)$ is fixed. If reviewers observe raw notes, elapsed investigation time, or a confidence score whose conditional distribution depends on effort, that variable must be included in the initial public state. The two-branch decomposition below can then be generalized, but it cannot be applied while silently omitting that trace.

\subsection{Contractible records}\label{subsec:records-environment}

Let
\[
  h_T:=(a_0,a_1,\ldots,a_T)
\]
be the public action history. A contractible record is a measurable function of the objects verifiable at settlement. The paper compares two benchmark technologies.

\begin{definition}[Outcome evaluation]\label{def:outcome}
At settlement the principal observes $(\theta,a_T)$, but $a_0$ is unavailable or noncontractible. The canonical favorable event is
\[
  E_T^\theta:=\{a_T=\theta\}.
\]
\end{definition}

\begin{definition}[Process evaluation]\label{def:process}
At settlement the principal observes $(a_0,a_T)$, but $\theta$ is unavailable or noncontractible. The canonical favorable event is
\[
  E_T^A:=\{a_0=a_T\}.
\]
\end{definition}

Table~\ref{tab:technologies} separates the information environments. Under hidden truth in the symmetric model, a lone terminal action carries no incentive information (Corollary~\ref{cor:single-action}); the settlement record must retain the state or a link back to the initial report. If both $\theta$ and $a_0$ are contractible, direct audit dominates delayed endpoint evaluation for the narrow purpose of inducing the initial effort.

\begin{table}[t]
\centering
\caption{Settlement information and the natural record}
\label{tab:technologies}
\begin{tabularx}{0.92\textwidth}{@{}>{\raggedright\arraybackslash}p{0.20\textwidth}>{\raggedright\arraybackslash}X>{\raggedright\arraybackslash}X@{}}
\toprule
 & Initial report $a_0$ unavailable & Initial report $a_0$ contractible \\
\midrule
State $\theta$ unavailable
& A terminal action alone is marginally uninformative under sign symmetry.
& Process evaluation uses $(a_0,a_T)$ and rewards agreement or disagreement. \\
\addlinespace
State $\theta$ contractible
& Outcome evaluation uses $(\theta,a_T)$ and rewards terminal correctness or error.
& Direct audit uses $(\theta,a_0)$; downstream review is unnecessary for initial-effort incentives. \\
\bottomrule
\end{tabularx}
\end{table}

The distinction can reflect technological or institutional constraints. Historical reports may be overwritten, paraphrased, or legally unusable in compensation; conversely, the truth may never be verified while version histories keep the initial and final recommendations easy to compare. The analysis treats these as different contractibility regimes: each regime specifies which record is available for compensation at settlement.

\paragraph{Timing.}
Before effort is chosen, the principal commits to an evaluation technology and a wage schedule. Nature then draws the state, the analyst chooses effort, and the investigation technology produces $a_0$. Reviewers act sequentially under the continuation rule induced by the equilibrium belief about effort. At date $T$, the contractible record is revealed and the committed wage is paid. Neither the analyst's wage nor the settlement record enters reviewers' state-matching payoffs.

\subsection{Wages and implementation}\label{subsec:implementation}

A wage is bounded by limited liability and an upper cap,
\begin{equation}\label{eq:wage-bounds}
  0\le w\le\wbar.
\end{equation}
The analyst receives no intrinsic payoff from the state or downstream actions. High effort is implemented if
\begin{equation}\label{eq:IC}
  \EE^H[w]-\EE^L[w]\ge\kappa.
\end{equation}
Participation is suppressed, or equivalently is handled by a base salary outside the incremental bonus problem. The wage $w$ and cap $\wbar$ below should therefore be read as incremental performance pay. The analysis concerns the additional cost of satisfying \eqref{eq:IC}.

For a contractible record $G$, define
\begin{equation}\label{eq:cost}
  C^*(G;\kappa)
  :=
  \inf_{0\le w\le\wbar}\EE^H[w(G)]
\end{equation}
subject to \eqref{eq:IC}, with $C^*(G;\kappa)=+\infty$ if the constraint is infeasible. When the state is part of the record, it is included in $G$.

\begin{table}[t]
\centering
\caption{Main notation}
\label{tab:notation}
\begin{tabularx}{\textwidth}{@{}lX@{}}
\toprule
Symbol & Meaning \\
\midrule
$\theta$ & Binary state. \\
$e\in\{H,L\}$ & Initial analyst's hidden effort. \\
$\alpha_e$ & Accuracy of the initial report under effort $e$. \\
$\Del=\alpha_H-\alpha_L$ & Primitive effort effect on initial-report accuracy. \\
$\kappa$ & Incremental cost of high effort. \\
$a_t$ & Initial report or downstream public action at date $t$. \\
$\sigH$ & Downstream continuation rule induced by the high-effort interpretation of the initial report. \\
$\hat r_t$ & On-path belief path: probability under $\PP^H$ that $a_t$ is correct; generates reviewer thresholds. \\
$r_t(e)$ & Actual correctness probability under effort $e$, evaluated under the fixed continuation $\sigH$. \\
$S_T$ & Preservation: terminal correctness after a correct initial report. \\
$R_T$ & Repair: terminal correctness after a mistaken initial report. \\
$D(G)$ & Total-variation distance between the high- and low-effort laws of record $G$. \\
\bottomrule
\end{tabularx}
\end{table}

Three of these objects carry the analysis. The on-path belief path $(\hat r_t)$ generates the thresholds reviewers apply; the deviation accuracy $r_t(L)$ evolves under those fixed thresholds; and the record distance $D(G)$ caps the incentive power available from bounded wages.

\section{Information content of downstream records}\label{sec:records}

This section gives a record distance a direct incentive meaning: with bounded wages, that distance is exactly the largest incentive gap the principal can generate from the record. It then traces every record's value to a single channel, the separation between the branch where the initial report was correct and the branch where it was mistaken.


For any contractible record $G$, define
\begin{equation}\label{eq:D}
  D(G):=\TV\!\left(\cL^H(G),\cL^L(G)\right),
\end{equation}
where
\[
  \TV(P,Q):=\sup_B|P(B)-Q(B)|.
\]
The definition applies whether or not $G$ includes the state.

\begin{proposition}[Maximum incentive gap]\label{prop:maxgap}
For every record $G$,
\begin{equation}\label{eq:maxgap}
  \sup_{0\le w\le\wbar}
  \left\{\EE^H[w(G)]-\EE^L[w(G)]\right\}
  =\wbar D(G).
\end{equation}
Consequently, some bounded wage on $G$ implements high effort if and only if
\begin{equation}\label{eq:feasibility}
  \wbar D(G)\ge\kappa.
\end{equation}
\end{proposition}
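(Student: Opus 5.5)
The plan is to prove the two inequalities behind \eqref{eq:maxgap} separately and then read off \eqref{eq:feasibility}. Write $P:=\cL^H(G)$ and $Q:=\cL^L(G)$ for the laws of the record under the on-path and secret-deviation measures of \eqref{eq:Pe}. A wage on $G$ is a measurable $w$ with $0\le w\le\wbar$. The incentive gap is then $\int w\,d(P-Q)$. Both $P$ and $Q$ are laws of the same random element under the same fixed continuation $\sigH$, so they live on a common measurable space and this integral is well defined.

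\emph{Achievability.} For any measurable set $B$, the wage $w=\wbar\1_B(G)$ is admissible and yields gap $\wbar\,(P(B)-Q(B))$. Taking the supremum over $B$ gives a lower bound of $\wbar\sup_B(P(B)-Q(B))$. This supremum equals $\TV(P,Q)$ because both $P$ and $Q$ have total mass one: $P(B^c)-Q(B^c)=-(P(B)-Q(B))$, so the one-sided supremum coincides with the absolute one in the definition of $\TV$. The supremum is attained by the Hahn set $B^*=\{dP/d\mu>dQ/d\mu\}$, where $\mu=P+Q$. This step also shows the supremum in \eqref{eq:maxgap} is a maximum, which matters for the "if" direction of \eqref{eq:feasibility}.

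\emph{Upper bound.} For any admissible $w$, take a Hahn decomposition of the signed measure $P-Q$ with positive set $B^*$. Then
\[
\int w\,d(P-Q)\le\int_{B^*}w\,d(P-Q)\le \wbar\,(P-Q)(B^*)=\wbar D(G),
\]
using $w\ge0$ on $(B^*)^c$, where $P-Q\le0$, and $w\le\wbar$ on $B^*$, where $P-Q\ge0$. Combining the two parts gives \eqref{eq:maxgap}.

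\emph{Feasibility.} By \eqref{eq:IC}, high effort is implementable by some bounded wage on $G$ if and only if the maximal gap is at least $\kappa$. Since the maximum is attained by the bonus $\wbar\1_{B^*}(G)$, this condition is exactly $\wbar D(G)\ge\kappa$. The only delicate point is this attainment: it is what makes the "if" direction hold with weak inequality, including at equality. It follows from the Hahn--Jordan decomposition (or, for a finite record, simply from taking $B^*=\{g:P(g)>Q(g)\}$). Nothing else in the argument uses Assumption~\ref{ass:exogeneity}; that assumption enters only later, when $D(G)$ is decomposed into branches.
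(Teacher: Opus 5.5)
Your proposal is correct and follows essentially the same route as the paper. Both arguments take a Hahn decomposition of $P-Q$ and use the complement identity to show that the one-sided supremum equals $\TV(P,Q)$. Both then bound $\int w\,d(P-Q)\le\wbar\,(P-Q)(B^*)$ and note that $\wbar\1_{B^*}$ attains it, which is what gives the weak inequality in \eqref{eq:feasibility}; splitting the integral over $B^*$ and its complement is just the paper's Jordan-part computation written differently.
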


The result gives $D(G)$ a direct economic interpretation: it is the maximum incentive gap per unit of the wage cap. The gap-maximizing wage pays on record values that are relatively more likely after high effort, the cells with $\PP^H(G=g)>\PP^L(G=g)$ when $G$ is finite and the region where the high-effort density exceeds the low-effort density otherwise. When a record is said below to lose incentive power, this contracting object is what shrinks.

For finite records, the minimum-cost contract has a closed form: the principal fills the cells that deliver the most incentive per expected high-effort dollar.

\begin{proposition}[Minimum-cost contract on a finite record]\label{prop:finite-contract}
Suppose $G$ takes values $g_1,\ldots,g_n$. Let
\[
  p_i^e:=\PP^e(G=g_i),
  \qquad
  \delta_i:=p_i^H-p_i^L,
\]
and let $I_+:=\{i:\delta_i>0\}$. For $i\in I_+$ define the high-effort cost per unit of incentive
\[
  \rho_i:=\frac{p_i^H}{\delta_i}.
\]
Order the indices in $I_+$ so that $\rho_1\le\cdots\le\rho_m$. If \eqref{eq:feasibility} holds, let $k$ be the smallest index satisfying
\[
  \wbar\sum_{i=1}^k\delta_i\ge\kappa.
\]
There is a minimum-cost contract with
\begin{equation}\label{eq:finite-optimal-wage}
  w_i=
  \begin{cases}
  \wbar, & i<k,\\[0.2em]
  \displaystyle
  \frac{\kappa-\wbar\sum_{j<k}\delta_j}{\delta_k}, & i=k,\\[0.8em]
  0, & \text{otherwise}.
  \end{cases}
\end{equation}
If several cells have the same cutoff ratio, payments can be redistributed among those tied cells without changing cost.
\end{proposition}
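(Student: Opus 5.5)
The plan is to treat the problem as a fractional knapsack linear program. With $G$ finite, a wage is a vector $w=(w_1,\ldots,w_n)\in[0,\wbar]^n$, the objective is $\sum_i p_i^H w_i$, and \eqref{eq:IC} becomes the single linear constraint $\sum_i \delta_i w_i\ge\kappa$. The first step is a reduction. Any cell with $\delta_i\le0$ contributes nonnegative cost and nonpositive incentive, so setting $w_i=0$ there weakly lowers cost and weakly raises the left side of the constraint. Hence we may restrict attention to $w$ supported on $I_+$. Feasibility of \eqref{eq:feasibility} guarantees that a feasible point exists, because by Proposition~\ref{prop:maxgap} $\wbar D(G)=\wbar\sum_{i\in I_+}\delta_i\ge\kappa$. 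It follows that the index $k$ is well defined and that the proposed $w_k$ lies in $(0,\wbar]$. The program is a linear objective over a nonempty compact polytope, so a minimizer exists.

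Next I will verify that the proposed contract is feasible and binds. By construction $\sum_i\delta_iw_i=\kappa$ exactly, and all payments lie in $[0,\wbar]$ by the minimality of $k$. For optimality, I will use an exchange argument, or equivalently LP duality with multiplier $\lambda=\rho_k$ on the incentive constraint. I will take the dual certificate route, since it is cleanest. For any feasible $w$ supported on $I_+$,
\[
\sum_i p_i^Hw_i-\rho_k\kappa\;\ge\;\sum_{i\in I_+}(p_i^H-\rho_k\delta_i)w_i=\sum_{i\in I_+}\delta_i(\rho_i-\rho_k)w_i .
\]
The last sum is minimized over $w_i\in[0,\wbar]$ by taking $w_i=\wbar$ when $\rho_i<\rho_k$ and $w_i=0$ when $\rho_i>\rho_k$, with tied cells free. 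Combining this lower bound with the binding constraint shows that the proposed $w$ attains the bound. Specifically, for the proposed $w$ the first inequality holds with equality because the constraint binds. The second-stage minimum is also attained, since the cells before $k$ have $\rho_i\le\rho_k$ and are paid $\wbar$, and the cells after $k$ have $\rho_i\ge\rho_k$ and are paid $0$.

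The step needing most care is making this certificate rigorous in the presence of ties and the dropped cells with $\delta_i\le0$. Cells $i<k$ with $\rho_i=\rho_k$ contribute zero to the bracketed sum whatever their wage, so they do no harm. The subtle point is the first inequality, which uses $\sum\delta_iw_i\ge\kappa$ multiplied by $\rho_k>0$. It therefore holds for every feasible $w$, including those that pay on cells with $\delta_i\le0$. For such cells, $p_i^H-\rho_k\delta_i\ge0$, so they only add nonnegative terms, and the reduction step is absorbed automatically. The tie statement follows from the same identity. Any redistribution among cells with $\rho_i=\rho_k$ that keeps $\sum\delta_iw_i=\kappa$ leaves $\sum_i p_i^Hw_i=\rho_k\kappa+\sum_{i}\delta_i(\rho_i-\rho_k)w_i$ unchanged, because the tied terms vanish. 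Hence the cost is unchanged.
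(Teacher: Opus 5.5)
Your proof is correct, but it certifies optimality by a different route than the paper. The paper first restricts to $I_+$ and substitutes $y_i=w_i\delta_i$, which turns the problem into a continuous knapsack with unit costs $\rho_i$. It then argues by pairwise exchange: moving incentive mass from a paid cell with higher $\rho$ to a cheaper cell that still has slack weakly lowers cost, so some optimum fills cells in increasing order of $\rho_i$. You instead use weak duality with multiplier $\rho_k$ on the incentive constraint. That gives, for every feasible $w$,
\[
  \sum_i p_i^H w_i\ \ge\ \rho_k\kappa-\wbar\sum_{i:\rho_i<\rho_k}\delta_i(\rho_k-\rho_i),
\]
and the greedy wage \eqref{eq:finite-optimal-wage} attains this bound.

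Your route has three advantages. First, it verifies the specific contract in one step. It needs neither the existence of a minimizer nor an iterated sequence of exchanges, both of which the paper's argument uses only implicitly. Second, the cells with $\delta_i\le0$ take care of themselves, because $p_i^H-\rho_k\delta_i\ge p_i^H\ge0$; your preliminary reduction to $I_+$ is therefore redundant. Third, it delivers the minimum cost in closed form and identifies $\rho_k$ as the shadow price of incentive.

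The paper's exchange argument is more transparent about the economic logic of buying incentive at the cheapest available rate first, and it shows directly how an arbitrary optimum can be moved to the greedy one. The tie statement comes out the same way in both approaches, because tied cells trade cost against incentive at the common rate $\rho_k$.
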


The proposition is a fractional-knapsack characterization. A cell with $\delta_i>0$ is evidence in favor of high effort. The principal first pays on the cells that generate the most incentive per expected high-effort dollar. Cells with $\delta_i\le0$ are never rewarded in a minimum-cost contract under limited liability.

\begin{corollary}[Binary record]\label{cor:binary-cost}
Suppose the record reduces to an event $E$ with
\[
  p_H:=\PP^H(E)>p_L:=\PP^L(E).
\]
The minimum-cost contract pays zero on $E^c$ and
\begin{equation}\label{eq:binary-bonus}
  b^*:=\frac{\kappa}{p_H-p_L}
\end{equation}
on $E$. It is feasible if and only if $b^*\le\wbar$, and its high-effort expected cost is
\begin{equation}\label{eq:binary-cost}
  \frac{\kappa p_H}{p_H-p_L}.
\end{equation}
\end{corollary}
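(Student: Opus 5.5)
The plan is to specialize Proposition~\ref{prop:finite-contract} to the two-cell record $G=\1_E$ and then read off the bonus and its cost. I would not treat this as a new optimization.

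First, I would set up the two cells, $g_1=E$ and $g_2=E^c$. Their probabilities are $p_1^e=\PP^e(E)$ and $p_2^e=1-\PP^e(E)$. So $\delta_1=p_H-p_L>0$ and $\delta_2=-(p_H-p_L)<0$. This gives $I_+=\{1\}$, and the remark after Proposition~\ref{prop:finite-contract} says cells with $\delta_i\le0$ are never rewarded, so $w(E^c)=0$.

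Second, I would check feasibility. By Proposition~\ref{prop:maxgap}, the record distance is $D(G)=\sup_B|\PP^H(B)-\PP^L(B)|$. Over the four subsets of the two-cell space this equals $p_H-p_L$. Condition \eqref{eq:feasibility} then reads $\wbar(p_H-p_L)\ge\kappa$, which is exactly $b^*\le\wbar$.

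Third, assuming feasibility, I would apply \eqref{eq:finite-optimal-wage} with $k=1$. Since there are no indices $j<1$, the wage on $E$ is $\kappa/\delta_1=b^*$, and the expected cost under $\PP^H$ is $p_H b^*$, which is \eqref{eq:binary-cost}.

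As a sanity check that does not rely on the knapsack result, I would also verify optimality directly. For any feasible $(w_1,w_2)$, the incentive constraint \eqref{eq:IC} becomes $(p_H-p_L)(w_1-w_2)\ge\kappa$. The expected cost is $p_Hw_1+(1-p_H)w_2$. For fixed $w_1-w_2$, this cost is increasing in $w_2\ge0$, so it is minimized at $w_2=0$. The constraint then binds at $w_1=b^*$. No tie-breaking issue arises, because only one cell has $\delta_i>0$.

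The only step needing care is matching $D(G)$ for the binary record to $p_H-p_L$, so that the feasibility clause of the corollary agrees with Proposition~\ref{prop:maxgap}. That is a one-line enumeration of events, so I expect no real obstacle.
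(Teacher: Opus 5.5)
Your proposal is correct and follows the paper's route. The paper gives no separate proof; it treats the corollary as the two-cell case of Proposition~\ref{prop:finite-contract}, with $I_+=\{E\}$, $k=1$, and feasibility read off from \eqref{eq:feasibility} using $D(G)=p_H-p_L$. Your direct check, which rewrites the cost as $p_H(w_1-w_2)+w_2$ with $p_H>0$, also shows the minimizer is unique, which justifies the definite article in ``the minimum-cost contract'' beyond the existence statement of the knapsack proposition.
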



Let
\[
  C:=\{a_0=\theta\},
  \qquad
  M:=\{a_0\ne\theta\}.
\]
For a record $G$, define the conditional laws
\begin{equation}\label{eq:branch-laws}
  \nu_G^C:=\cL(G\mid C,\sigH),
  \qquad
  \nu_G^M:=\cL(G\mid M,\sigH).
\end{equation}
The record may include the state. Under the symmetric initial technology, the conditional distribution of $(\theta,a_0)$ within either branch does not depend on effort. On $C$, the pairs $(+1,+1)$ and $(-1,-1)$ each have probability one half; on $M$, the pairs $(+1,-1)$ and $(-1,+1)$ each have probability one half.

\begin{theorem}[Initial-report decomposition]\label{thm:decomposition}
Under Assumption~\ref{ass:exogeneity}, for every record $G$ and effort $e$,
\begin{equation}\label{eq:mixture}
  \cL^e(G)=\alpha_e\nu_G^C+(1-\alpha_e)\nu_G^M.
\end{equation}
Equivalently, for every event $B$ in the record space,
\begin{equation}\label{eq:event-gap}
  \PP^H(G\in B)-\PP^L(G\in B)
  =\Del\bigl[\nu_G^C(B)-\nu_G^M(B)\bigr].
\end{equation}
Therefore
\begin{equation}\label{eq:D-decomp}
  D(G)=\Del\,\TV(\nu_G^C,\nu_G^M).
\end{equation}
In particular,
\begin{equation}\label{eq:D-upper}
  D(G)\le\Del
\end{equation}
for every downstream record.
\end{theorem}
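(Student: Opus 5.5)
The plan is to prove the mixture identity \eqref{eq:mixture} first and then read off the other three claims as algebra on signed measures. Since $G$ is a measurable function of $(\theta,a_0,Z_T)$, where the state is included when it is part of the record, it suffices to compute the joint law of $(\theta,a_0,Z_T)$ under $\PP^e$.

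By Assumption~\ref{ass:exogeneity} and the uniform prior, this joint law is
\[
  \PP^e(\theta=s,\,a_0=a,\,Z_T\in B)=\tfrac12\,q_e(a\mid s)\,K(B\mid s,a).
\]
Under the symmetric technology \eqref{eq:initial-tech}, $q_e(a\mid s)=\alpha_e$ when $a=s$ and $1-\alpha_e$ when $a\ne s$. Splitting the four pairs $(s,a)$ into the branches $C$ and $M$ gives
\[
  \cL^e(\theta,a_0,Z_T)=\alpha_e\,\mu^C+(1-\alpha_e)\,\mu^M .
\]
Here $\mu^C$ puts mass $\tfrac12$ on each pair $(s,s)$ and attaches the kernel $K(\cdot\mid s,s)$ to it. Likewise $\mu^M$ puts mass $\tfrac12$ on each pair $(s,-s)$ and attaches $K(\cdot\mid s,-s)$.

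The key observation is that $\mu^C$ and $\mu^M$ do not depend on $e$. This is exactly where the fixed continuation $\sigH$ built into $K$ and the effort-independence of the within-branch distribution of $(\theta,a_0)$ are used. Conditioning on $C$ under either effort returns $\mu^C$, so $\mu^C$ is the law $\cL(\,\cdot\mid C,\sigH)$; similarly $\mu^M=\cL(\,\cdot\mid M,\sigH)$. Pushing forward through the measurable map defining $G$ then yields $\cL^e(G)=\alpha_e\nu_G^C+(1-\alpha_e)\nu_G^M$, which is \eqref{eq:mixture}.

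For \eqref{eq:event-gap}, evaluate \eqref{eq:mixture} at $e=H$ and $e=L$ on an event $B$ and subtract. This gives
\[
  \PP^H(G\in B)-\PP^L(G\in B)=(\alpha_H-\alpha_L)\nu_G^C(B)-(\alpha_H-\alpha_L)\nu_G^M(B),
\]
which is $\Del[\nu_G^C(B)-\nu_G^M(B)]$. Taking the supremum of absolute values over $B$, and using $\Del>0$, gives \eqref{eq:D-decomp} directly from the definition of $\TV$ in \eqref{eq:D}. Finally, \eqref{eq:D-upper} follows because the total-variation distance between two probability measures is at most one.

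The only step needing care is the identification of the mixture components with the conditional laws \eqref{eq:branch-laws}. One must check that "conditional on $C$" means the same measure under $\PP^H$ and $\PP^L$, which requires both the kernel factorization and the symmetry of $q_e$. Without symmetry, the within-branch weights on $(+1,+1)$ versus $(-1,-1)$ could in principle depend on $e$. Under \eqref{eq:initial-tech} they are $\tfrac12$ each for both efforts, so the argument goes through. I would state this check explicitly and treat the remaining steps as routine.
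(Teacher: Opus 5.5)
Your proposal is correct and follows essentially the same route as the paper. Like the paper, you write $G$ as a function of $(\theta,a_0,Z_T)$, combine the kernel factorization with the effort-free within-branch law of $(\theta,a_0)$ (one half on each pair) to get the mixture, then subtract, take the supremum, and bound total variation by one. The only cosmetic difference is that you build the mixture for the joint law of $(\theta,a_0,Z_T)$ and push it forward, whereas the paper argues event by event through $\PP^e(G\in B\mid\theta,a_0)$.
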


Equation \eqref{eq:D-decomp} is the main accounting identity of the paper. It separates the analyst's information-production technology, summarized by $\Del$, from the monitoring content of the record, summarized by the distance between the correct-report and incorrect-report branches. The second term can be zero even when effort strongly affects $a_0$, and it can equal one only when the record reveals the initial branch perfectly.

The upper bound has two interpretations. First, downstream review cannot create statistical information about effort that was absent from the initial report. Second, a rich downstream trace can nevertheless make the initial effort contractible when the state itself is not. The review process transforms the original signal into a new record; the transformation may preserve or destroy branch separation.


\begin{proposition}[Coarsening cannot improve a record]\label{prop:data-processing}
If $G'$ is a measurable function of $G$, then
\begin{equation}\label{eq:data-processing}
  D(G')\le D(G).
\end{equation}
More specifically,
\[
  \TV(\nu_{G'}^C,\nu_{G'}^M)
  \le
  \TV(\nu_G^C,\nu_G^M).
\]
\end{proposition}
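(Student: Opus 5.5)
The plan is to prove the total-variation data-processing inequality directly from the supremum definition of $\TV$. Then Theorem~\ref{thm:decomposition} transfers it between the two displayed inequalities. Write $G'=f(G)$ for a measurable map $f$ from the record space of $G$ to that of $G'$.

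The key step is the general fact that for any two probability measures $P,Q$ on the record space of $G$, and any measurable $f$, $\TV(P\circ f^{-1},Q\circ f^{-1})\le\TV(P,Q)$. By definition,
\[
  \TV(P\circ f^{-1},Q\circ f^{-1})=\sup_{B'}\bigl|P(f^{-1}(B'))-Q(f^{-1}(B'))\bigr|,
\]
where $B'$ ranges over measurable sets of the $G'$ space. Each preimage $f^{-1}(B')$ is a measurable set of the $G$ space, so this supremum is taken over a subfamily of the sets appearing in $\sup_B|P(B)-Q(B)|$. It is therefore no larger. No finiteness or density assumption is needed, so the argument covers records that include the state or continuous signals.

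I would apply this fact twice. First, take $P=\cL^H(G)$ and $Q=\cL^L(G)$. Since $\cL^e(G')=\cL^e(G)\circ f^{-1}$, this gives $D(G')\le D(G)$ immediately from \eqref{eq:D}. Second, take $P=\nu_G^C$ and $Q=\nu_G^M$. The branch laws \eqref{eq:branch-laws} are conditional laws under the same fixed continuation $\sigH$, so pushing forward gives $\nu_{G'}^C=\nu_G^C\circ f^{-1}$ and $\nu_{G'}^M=\nu_G^M\circ f^{-1}$. This yields the branch-level inequality. Alternatively, the branch inequality combined with \eqref{eq:D-decomp} and $\Del>0$ reproduces the first inequality, since both $D(G)$ and $D(G')$ equal $\Del$ times their branch distances. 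This second route also shows that coarsening can only lower the monitoring factor and never touches $\Del$.

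I expect no serious obstacle. The only point needing care is checking that the conditional branch laws of $G'$ really are the pushforwards of those of $G$. This follows because conditioning on $C$ or $M$ commutes with applying the deterministic map $f$. It would matter more if $f$ were allowed to use extra randomness; a garbling by an independent kernel would still satisfy the inequality by a convexity argument, but the statement only requires the measurable-function case.
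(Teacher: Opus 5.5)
Your proposal is correct and follows the paper's argument. The paper uses the same preimage step to bound $|\nu_{G'}^C(B)-\nu_{G'}^M(B)|$ by $\TV(\nu_G^C,\nu_G^M)$, then multiplies by $\Del$ via \eqref{eq:D-decomp}; applying that step directly to $\cL^H(G),\cL^L(G)$, as you do, is a harmless variant that obtains $D(G')\le D(G)$ without needing Assumption~\ref{ass:exogeneity}.
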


Thus the full public history is an upper benchmark for every statistic computed from it. Let
\[
  H_T:=(a_0,a_1,\ldots,a_T).
\]
Then
\begin{equation}\label{eq:full-history-benchmark}
  D(H_T)
  =\Del\,\TV\!\left(
      \cL(H_T\mid a_0=\theta),
      \cL(H_T\mid a_0\ne\theta)
    \right),
\end{equation}
and every endpoint record measurable from $H_T$ has weakly smaller distance. First--last agreement is therefore a transparent benchmark within a larger family: when the full history is archived, Proposition~\ref{prop:finite-contract} describes the payments a principal would spread across richer patterns.

\begin{corollary}[Direct audit]\label{cor:direct-audit}
If $(\theta,a_0)$ is contractible, then
\begin{equation}\label{eq:direct-D}
  D(\theta,a_0)=\Del.
\end{equation}
Adding any downstream history cannot increase the distance:
\[
  D(\theta,a_0,a_1,\ldots,a_T)=\Del.
\]
When feasible, the minimum-cost direct-audit contract rewards $\{a_0=\theta\}$ with bonus $\kappa/\Del$ and has expected high-effort cost
\begin{equation}\label{eq:direct-cost}
  C^{\mathrm{direct}}=\frac{\kappa\alpha_H}{\Del}.
\end{equation}
\end{corollary}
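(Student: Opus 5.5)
The plan is to apply Theorem~\ref{thm:decomposition} and then Corollary~\ref{cor:binary-cost}, using the fact that the record $G=(\theta,a_0)$ reveals the branch. The event $C=\{a_0=\theta\}$ is a measurable function of $(\theta,a_0)$. So $\nu_G^C$ is supported on $\{(+1,+1),(-1,-1)\}$, and $\nu_G^M$ is supported on the disjoint set $\{(+1,-1),(-1,+1)\}$. Two mutually singular laws have total-variation distance one: take $B=\{\theta=a_0\}$, so $\nu_G^C(B)-\nu_G^M(B)=1$. Equation \eqref{eq:D-decomp} then gives $D(\theta,a_0)=\Del$, which is \eqref{eq:direct-D}.

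For the enriched record $G=(\theta,a_0,a_1,\ldots,a_T)$ I will give a two-sided bound.
\begin{itemize}
\item Upper bound: the decomposition holds for every record, and \eqref{eq:D-upper} gives $D(G)\le\Del$.
\item Lower bound: $(\theta,a_0)$ is a coordinate projection of $G$, so Proposition~\ref{prop:data-processing} gives $D(G)\ge D(\theta,a_0)=\Del$.
\end{itemize}
Alternatively, the branch event $\{a_0=\theta\}$ is still measurable from the richer record, so $\TV(\nu_G^C,\nu_G^M)=1$ directly. Either route gives equality. The same reasoning shows that appending any downstream randomness $Z_T$ leaves the distance unchanged.

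For the contract, I will first apply Proposition~\ref{prop:finite-contract} to the four-cell record $(\theta,a_0)$. Under \eqref{eq:initial-tech} and the uniform prior, the two correct cells each have probability $\alpha_e/2$, so their $\delta_i=\Del/2>0$. The two mistaken cells have $\delta_i=-\Del/2<0$, so they are never paid. The correct cells share the ratio $\rho=\alpha_H/\Del$, so by the tie remark the optimal wage can be taken symmetric. That reduces it to a bonus on the binary event $E=\{a_0=\theta\}$ with $p_H=\alpha_H$ and $p_L=\alpha_L$. Corollary~\ref{cor:binary-cost} then gives bonus $\kappa/\Del$, feasible iff $\kappa/\Del\le\wbar$ (matching \eqref{eq:feasibility} with $D=\Del$), and cost $\kappa\alpha_H/\Del$, which is \eqref{eq:direct-cost}.

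The one point I expect to need care is the reduction from the four-cell record to the binary event. Tied cells can receive different payments, so I must check that every optimal wage has cost $\kappa\,p^H(\text{paid set})/\delta(\text{paid set})=\kappa\alpha_H/\Del$ whenever it pays only on cells with ratio $\alpha_H/\Del$. That is immediate because the ratio is common to both cells. I also need that paying anything on the enriched record cannot beat this cost, although the statement only asserts the direct-audit contract itself. If needed, the argument is as follows. Any wage on the richer record satisfies
\[
\EE^H[w]-\EE^L[w]=\Del\,(\EE[w\mid C]-\EE[w\mid M]).
\]
Limited liability gives $\EE[w\mid M]\ge 0$, so $\EE[w\mid C]\ge\kappa/\Del$, and hence
\[
\EE^H[w]\ge\alpha_H\,\EE[w\mid C]\ge\kappa\alpha_H/\Del.
\]
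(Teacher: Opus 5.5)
Your proposal is correct and follows the paper's argument, which the paper only sketches in the remark after the corollary: $C$ and $M$ occupy disjoint cells of $(\theta,a_0)$, so the branch distance in Theorem~\ref{thm:decomposition} is one. The enriched record is then squeezed between data processing and the bound $D\le\Del$, and Corollary~\ref{cor:binary-cost} supplies the bonus and cost. Your closing bound $\EE^H[w]\ge\alpha_H\,\EE[w\mid C]\ge\kappa\alpha_H/\Del$ is a worthwhile addition: it shows that $\kappa\alpha_H/\Del$ is a cost floor for every record satisfying Assumption~\ref{ass:exogeneity}. That gives optimality of the direct-audit bonus directly and makes the tie-handling in the four-cell reduction unnecessary.
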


Direct audit reaches the upper bound because the branches $C$ and $M$ occupy disjoint cells once both the state and the initial report are observed. The formula also fixes the benchmark for the two settlement regimes: each replaces one coordinate of the direct-audit pair with the terminal action and pays on what remains.

\begin{corollary}[A single public action under hidden truth]\label{cor:single-action}
Suppose the downstream environment and strategy profile are sign symmetric: flipping the state and every action leaves induced probabilities unchanged. Then, for every date $t$ and effort $e$,
\begin{equation}\label{eq:single-uniform}
  \PP^e(a_t=+1)=\frac12.
\end{equation}
Consequently,
\[
  D(a_t)=0
\]
for every $t$, including $t=0$.
\end{corollary}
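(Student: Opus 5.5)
The argument is a symmetry argument applied to the law $\PP^e$, which is generated by the fixed continuation $\sigH$ and the actual effort $e$. Define the flip map $\phi$ that sends $(\theta,a_0,Z_T)$ to $(-\theta,-a_0,-Z_T)$, where $-Z_T$ negates every private signal and later action; any sign-neutral randomization devices are left unchanged. The hypothesis says that $\phi$ preserves the induced downstream probabilities. In the language of Assumption~\ref{ass:exogeneity}, this means the kernel satisfies $K(-B\mid -\theta,-a)=K(B\mid\theta,a)$. I would first check that the remaining primitives are also invariant under $\phi$. The prior on $\theta$ is uniform. The initial technology \eqref{eq:initial-tech} is symmetric, since $q_e(a\mid\theta)$ depends only on whether $a=\theta$, so $q_e(-a\mid-\theta)=q_e(a\mid\theta)$. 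Combining these with the factorization \eqref{eq:kernel-factorization}, the joint law of $(\theta,a_0,Z_T)$ under $\PP^e$ is invariant under $\phi$, for each $e$ separately.

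It is important that this holds for $e=L$ as well as for $e=H$. The continuation $\sigH$ is held fixed, and only $\alpha_e$ changes. Because the symmetry of the initial technology holds for every accuracy level $\alpha_e$, the low-effort law inherits the invariance without any further argument. This is the only step needing care: the symmetry must be of the whole joint law under the fixed continuation, not merely of the on-path equilibrium.

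Given invariance, $a_t$ and $-a_t$ have the same distribution under $\PP^e$. Since $a_t\in\{-1,+1\}$, this yields $\PP^e(a_t=+1)=\PP^e(a_t=-1)=\tfrac12$, which is \eqref{eq:single-uniform}. For $t=0$ the claim can also be checked directly: $\PP^e(a_0=+1)=\tfrac12\alpha_e+\tfrac12(1-\alpha_e)=\tfrac12$.

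Finally, because $\cL^H(a_t)=\cL^L(a_t)$ are both the uniform law on $\{-1,+1\}$, the definition \eqref{eq:D} gives $D(a_t)=0$ immediately. As a cross-check against Theorem~\ref{thm:decomposition}, the identity \eqref{eq:D-decomp} requires $\nu^C_{a_t}=\nu^M_{a_t}$. By the same flip argument applied within each branch, both of these conditional laws are uniform. The branches $C$ and $M$ are themselves $\phi$-invariant, and within each branch the two sign pairs of $(\theta,a_0)$ carry equal weight. This alternative route reaches the same conclusion.
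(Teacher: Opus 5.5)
Your proof is correct and is essentially the argument the paper relies on; the paper gives no separate proof here. The key step is invariance of the joint law under the sign flip for each $\PP^e$, the same device used in the proof of Lemma~\ref{lem:action-symmetry}, and the symmetric initial technology ensures that the secret low-effort law under the fixed continuation $\sigH$ inherits that invariance; your cross-check via Theorem~\ref{thm:decomposition}, showing $\nu^C_{a_t}=\nu^M_{a_t}$ because both branches are flip-invariant, is an equally valid route to $D(a_t)=0$.
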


A hidden-truth principal therefore cannot infer effort from the sign of one action. The useful process information is relational: whether two actions agree, whether a later action reverses an earlier one, or more generally which history was generated.


How a record's value moves with the review horizon depends on how the downstream architecture treats the two branches. A process that eventually reaches the truth on both branches drives the branch laws of the terminal action together, so terminal correctness loses its grip on initial effort; the same process separates the branch laws of agreement, because correct reports survive while mistaken ones are reversed. Mechanical copying does the opposite: it ties the terminal action to the initial report and makes agreement certain. The next sections derive both effects in a model whose full path can be solved.

\section{Memory-1 social learning}\label{sec:memory1}

This section focuses specifically on a memory-1 review chain, in which each reviewer observes only her predecessor's action and one private signal. This section defines the on-path belief path and solves its dynamics; the path generates the thresholds reviewers use. Section~\ref{sec:hidden} then introduces hidden deviations and measures what those fixed thresholds transmit of the analyst's effort.


Reviewer $t$ observes $(a_{t-1},x_t)$. Conditional on $\theta$, the private signals $(x_t)_{t\ge1}$ are independent and identically distributed and are independent of the initial report. Let $f_+$ and $f_-$ denote their conditional densities and define the private log-likelihood ratio
\begin{equation}\label{eq:private-llr}
  \ell_t:=\log\frac{f_+(x_t)}{f_-(x_t)}.
\end{equation}

\begin{assumption}[Symmetric bounded experiment]\label{ass:signals}
There is $\Lb\in(0,\infty)$ such that $\ell_t\in[-\Lb,\Lb]$ almost surely. Under $\theta=+1$, $\ell_t$ has a continuous density $h$ that is strictly positive on $(-\Lb,\Lb)$, with cumulative distribution function $H$. The binary experiment is sign symmetric:
\begin{equation}\label{eq:signal-symmetry}
  \cL(\ell_t\mid\theta=-1)
  =\cL(-\ell_t\mid\theta=+1).
\end{equation}
There are no atoms at the support endpoints.
\end{assumption}

The LLR property and sign symmetry are distinct restrictions. The first relates the density under the two states at the same LLR realization; the second identifies the density under one state at opposite realizations.

\begin{lemma}[LLR density identity]\label{lem:llr-identity}
Under Assumption~\ref{ass:signals},
\begin{equation}\label{eq:llr-identity}
  h(-z)=e^{-z}h(z),
  \qquad z\in(-\Lb,\Lb).
\end{equation}
\end{lemma}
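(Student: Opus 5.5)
The plan combines the defining property of a log-likelihood ratio with the sign symmetry in Assumption~\ref{ass:signals}. Write $h_+:=h$ for the density of $\ell_t$ under $\theta=+1$ and $h_-$ for its density under $\theta=-1$. The goal is to relate $h_-$ to $h_+$ in two independent ways and then equate the two expressions.

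\emph{First relation (LLR property).} Because $\ell_t$ is the log-likelihood ratio of the private signal, the law of $\ell_t$ under $\theta=+1$ has Radon--Nikodym derivative $e^{z}$ at $z$ relative to its law under $\theta=-1$. For any measurable $B\subseteq[-\Lb,\Lb]$,
\[
  \PP(\ell_t\in B\mid\theta=+1)=\int_{\{x:\ell(x)\in B\}} f_+(x)\,dx=\int_{\{x:\ell(x)\in B\}} e^{\ell(x)}f_-(x)\,dx=\int_B e^{z}\,\PP(\ell_t\in dz\mid\theta=-1).
\]
The middle equality uses $f_+=e^{\ell}f_-$, which holds pointwise wherever both densities are positive. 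Points where exactly one density vanishes have infinite LLR, and these are excluded because $\ell_t\in[-\Lb,\Lb]$ almost surely. It follows that the law under $\theta=-1$ has density $h_-(z)=e^{-z}h_+(z)$ on $(-\Lb,\Lb)$. This also requires the law under $\theta=-1$ to be absolutely continuous, which follows from sign symmetry and the continuity of $h$.

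\emph{Second relation (sign symmetry).} By \eqref{eq:signal-symmetry}, the law of $\ell_t$ under $\theta=-1$ equals the law of $-\ell_t$ under $\theta=+1$. Hence $h_-(z)=h_+(-z)=h(-z)$.

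Equating the two expressions for $h_-$ gives $h(-z)=e^{-z}h(z)$ for almost every $z$. Both sides are continuous on $(-\Lb,\Lb)$ because $h$ is continuous, so the identity extends from almost every $z$ to every $z$. Strict positivity of $h$ is not needed here; the no-atom condition at the endpoints ensures nothing is lost at $\pm\Lb$.

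The main obstacle is the change-of-measure step in the first relation: pushing the pointwise identity $f_+=e^{\ell}f_-$ forward to the law of $\ell_t$, together with the null-set issues where $f_\pm$ vanish. I would handle it with the displayed integral identity, stated directly in terms of the pushforward measures, rather than by manipulating densities pointwise.
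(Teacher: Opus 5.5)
Your proposal is correct and follows essentially the same route as the paper: push $f_+=e^{\ell}f_-$ forward to the law of $\ell_t$ to get $h_-(z)=e^{-z}h(z)$, use sign symmetry to get $h_-(z)=h(-z)$, and then extend the resulting almost-everywhere identity to all of $(-\Lb,\Lb)$ by continuity of $h$. The differences are cosmetic. You write the density ratio in the direction $e^{z}$, from the $\theta=-1$ law to the $\theta=+1$ law. Inverting it is immediate because $e^{z}>0$, so your separate absolute-continuity remark is unnecessary. You are also more careful than the paper about the null sets where $f_\pm$ vanish.
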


The largest private LLR magnitude is $\Lb$. Define
\begin{equation}\label{eq:rbar}
  \rb:=\frac{e^{\Lb}}{1+e^{\Lb}}.
\end{equation}
An inherited public action with correctness probability at least $\rb$ carries an LLR magnitude of at least $\Lb$, so no private signal can strictly overturn it and the next reviewer copies. We call $\rb$ the \emph{copying frontier}: the confidence level at which bounded private information loses the power to reverse the inherited action.

\subsection{The on-path belief path}\label{subsec:public-state}

Reviewers interpret the initial report through the equilibrium belief that effort was high, and the continuation rule $\sigH$ encodes that interpretation. Define
\begin{equation}\label{eq:rhat}
  \hat r_t:=\PP^H(a_t=\theta),
\end{equation}
the probability under the on-path law that the public action at date $t$ matches the state. The initial value is
\[
  \hat r_0=\alpha_H,
\]
because reviewers read the analyst's report as the output of a high-effort investigation. We call the sequence $(\hat r_t)_{t\ge0}$ the \emph{on-path belief path}.

The path plays one role in everything that follows: it generates the thresholds reviewers apply to their private signals. Lemma~\ref{lem:action-symmetry} shows that $\hat r_t$ equals the posterior probability that $a_t$ is correct for an observer who sees only $a_t$, so each public action carries a log-likelihood ratio determined by the current coordinate of the path. A secret low-effort deviation leaves every threshold in place and changes the distribution of histories generated under those thresholds; Section~\ref{sec:hidden} tracks that second object through the actual correctness probabilities $r_t(e)$, with $r_t(H)=\hat r_t$ on path.

One scalar suffices for the public state because the prior is symmetric, the signal experiment is sign symmetric, and reviewer $t+1$ observes only $a_t$: at every date, the public history relevant for the next action compresses into the identity of the current action and the common correctness level $\hat r_t$.

\begin{lemma}[Public LLR carried by an action]\label{lem:action-symmetry}
For every $t$,
\begin{equation}\label{eq:posterior-correctness}
  \PP^H(\theta=a_t\mid a_t)=\hat r_t.
\end{equation}
Consequently, if
\begin{equation}\label{eq:s-def}
  s_t:=\log\frac{\hat r_t}{1-\hat r_t},
\end{equation}
then observing $a_t$ contributes public log-likelihood ratio $a_t s_t$ for the event $\theta=+1$.
\end{lemma}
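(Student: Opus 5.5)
The plan is to prove the sign-symmetry statement by induction on $t$ and then read off the posterior by Bayes' rule. Let $\phi$ denote the global flip $(\theta,a_0,x_1,\ldots)\mapsto(-\theta,-a_0,\tilde x_1,\ldots)$, where $\tilde x_t$ is any signal realization with LLR $-\ell_t$. By Assumption~\ref{ass:signals}, applying $\phi$ maps the conditional law of $\ell_t$ given $\theta=+1$ into its law given $\theta=-1$. The prior on $\theta$ is uniform, and the initial technology \eqref{eq:initial-tech} is symmetric, with $\PP(a_0=\theta\mid\theta)=\alpha_H$ for both states. Hence the joint law of $(\theta,a_0,\ell_1,\ell_2,\ldots)$ under $\PP^H$ is invariant under $\phi$.

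The second ingredient is that the continuation $\sigH$ is sign equivariant. I will prove, by induction on $t$, the following joint claim:
\begin{itemize}
\item the pair $(\theta,a_t)$ has a $\phi$-invariant law;
\item reviewer $t+1$'s rule satisfies $a_{t+1}(-a_t,-\ell_{t+1})=-a_{t+1}(a_t,\ell_{t+1})$.
\end{itemize}
At $t=0$ the invariance is immediate. Given invariance at date $t$, $\PP^H(a_t=\theta\mid\theta)$ does not depend on $\theta$, so it equals $\hat r_t$. The joint law then satisfies $\PP^H(\theta=\vartheta,a_t=a)=\tfrac12\hat r_t$ when $a=\vartheta$ and $\tfrac12(1-\hat r_t)$ otherwise. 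Consequently $\PP^H(a_t=a)=\tfrac12$ and $\PP^H(\theta=a_t\mid a_t)=\hat r_t$, which is \eqref{eq:posterior-correctness}. Equivalently, the likelihood ratio $\PP^H(a_t\mid\theta=+1)/\PP^H(a_t\mid\theta=-1)$ equals $\hat r_t/(1-\hat r_t)$ when $a_t=+1$ and its reciprocal when $a_t=-1$. This gives the public LLR $a_t s_t$.

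Reviewer $t+1$ observes $(a_t,x_{t+1})$. Because $x_{t+1}$ is conditionally independent of $a_t$ given $\theta$ (it is independent of the past chain), her posterior LLR is $a_t s_t+\ell_{t+1}$. She therefore chooses $a_{t+1}=\operatorname{sign}(a_t s_t+\ell_{t+1})$, and this rule is odd in $(a_t,\ell_{t+1})$. Ties occur with probability zero because $h$ is continuous, so any fixed tie-breaking is harmless. Since the rule is odd and the increment $\ell_{t+1}$ enters symmetrically, $(\theta,a_{t+1})$ inherits $\phi$-invariance. This closes the induction.

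The main point needing care is that the public state really reduces to the single action $a_t$. In memory-1 this holds by construction, since reviewer $t+1$ sees only $a_{t-1}$ and $a_t$ through the chain. The only other subtlety is to make sure conditional independence of $x_{t+1}$ from $a_t$ given $\theta$ holds, because $a_t$ depends only on $(a_0,x_1,\ldots,x_t)$. Both are routine, so no genuine obstacle is expected.
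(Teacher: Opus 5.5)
Your proposal is correct and follows essentially the paper's argument. It inducts on $t$: flip-invariance of the law of $(\theta,a_t)$, together with the conditionally independent, sign-symmetric $\ell_{t+1}$ and the oddness of the threshold rule $a_{t+1}=+1\iff a_t s_t+\ell_{t+1}\ge0$, carries over to date $t+1$, and then the uniform prior and Bayes' rule give $\PP^H(\theta=a_t\mid a_t)=\hat r_t$ and the public LLR $a_t s_t$. (One small slip: reviewer $t+1$ observes only $(a_t,x_{t+1})$, not $a_{t-1}$, but this does not affect the argument.)
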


Ignoring zero-probability ties, reviewer $t+1$ therefore chooses
\begin{equation}\label{eq:action-rule}
  a_{t+1}=+1
  \quad\Longleftrightarrow\quad
  a_t s_t+\ell_{t+1}\ge0.
\end{equation}
Equation \eqref{eq:action-rule} is the complete belief-to-action link in the benchmark: the inherited action enters through the threshold $s_t$, and the threshold is generated by the on-path belief path. Because effort is hidden, reviewer $t+1$ applies the same rule after either effort choice; a deviation reaches her only through the joint distribution of $(a_t,\ell_{t+1})$.


For $r\in(1/2,\rb)$, let
\[
  s(r):=\log\frac{r}{1-r}
\]
and define
\begin{equation}\label{eq:F}
  F(r)
  :=
  r\bigl[1-H(-s(r))\bigr]
  +(1-r)\bigl[1-H(s(r))\bigr].
\end{equation}
The first term is the probability that a correct predecessor remains correct. The second is the probability that an incorrect predecessor is overturned.

\begin{proposition}[On-path learning map]\label{prop:learning-map}
Whenever $\hat r_t\in(1/2,\rb)$,
\begin{equation}\label{eq:rhat-recursion}
  \hat r_{t+1}=F(\hat r_t).
\end{equation}
Moreover,
\begin{equation}\label{eq:Fprime}
  F'(r)=H(s(r))-H(-s(r))\in(0,1),
\end{equation}
and $F'$ is strictly increasing on $(1/2,\rb)$.
\end{proposition}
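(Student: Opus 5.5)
The plan is to prove \eqref{eq:rhat-recursion} by conditioning on whether the inherited action is correct. We then compute $F'$ directly, using Lemma~\ref{lem:llr-identity} to cancel the terms that come from differentiating the threshold $s(r)$.

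\emph{Recursion.} Fix $t$ with $\hat r_t\in(1/2,\rb)$ and write $s_t=s(\hat r_t)\in(0,\Lb)$. By Lemma~\ref{lem:action-symmetry} and the action rule \eqref{eq:action-rule}, reviewer $t+1$ plays $+1$ if and only if $a_t s_t+\ell_{t+1}\ge 0$. Ties have probability zero because $\ell_{t+1}$ has a continuous density.

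Under $\PP^H$, the signal $x_{t+1}$ is independent of $(a_0,x_1,\ldots,x_t)$ conditional on $\theta$. Hence, conditional on $\theta$, $\ell_{t+1}$ is independent of $a_t$.

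\begin{itemize}
\item Take $\theta=+1$ and $a_t=+1$. Then $a_{t+1}=\theta$ if and only if $\ell_{t+1}\ge -s_t$, which has probability $1-H(-s_t)$.
\item Take $\theta=+1$ and $a_t=-1$. Then $a_{t+1}=\theta$ if and only if $\ell_{t+1}\ge s_t$, which has probability $1-H(s_t)$.
\end{itemize}

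For $\theta=-1$, the sign symmetry \eqref{eq:signal-symmetry} gives the same two probabilities: one flips both $\ell$ and the actions. Averaging over the state, and using that $\PP^H(a_t=\theta)=\hat r_t$ by definition \eqref{eq:rhat}, gives $\hat r_{t+1}=\hat r_t[1-H(-s_t)]+(1-\hat r_t)[1-H(s_t)]=F(\hat r_t)$.

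\emph{Derivative.} Differentiate \eqref{eq:F}, using $s'(r)=1/(r(1-r))$:
\[
F'(r)=H(s)-H(-s)+s'(r)\bigl[r\,h(-s)-(1-r)\,h(s)\bigr],\qquad s=s(r).
\]
The key step, and the only non-routine one, is that the bracket vanishes. By Lemma~\ref{lem:llr-identity}, $h(-s)=e^{-s}h(s)$. Since $e^{-s}=(1-r)/r$, this gives $r\,h(-s)=(1-r)\,h(s)$.

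Economically, this is an envelope argument. The threshold is Bayes-optimal, so a marginal shift in it leaves the correctness probability unchanged to first order. This cancellation is where I expect the argument to require care: the LLR identity must be applied at the interior point $s\in(0,\Lb)$, where $h$ is continuous and positive.

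\emph{Range and monotonicity.} Because $0<s<\Lb$ and $h>0$ on $(-\Lb,\Lb)$, we get $H(s)-H(-s)=\int_{-s}^{s}h>0$. Since there are no atoms at the endpoints, $H(s)<1$ and $H(-s)>0$, so $F'(r)<1$.

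Finally, $s(r)$ is strictly increasing on $(1/2,\rb)$. The map $s\mapsto H(s)-H(-s)$ has derivative $h(s)+h(-s)>0$. Composing the two, $F'$ is strictly increasing on $(1/2,\rb)$.
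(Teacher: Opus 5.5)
Your proposal is correct and follows the paper's route: the recursion comes from conditioning on whether the predecessor is correct (the same computation the paper uses in its proof of Proposition~\ref{prop:hidden-recursion}), and the derivative formula comes from the threshold terms cancelling via $h(-s)=e^{-s}h(s)$, which is exactly the mechanism the paper cites after the statement. One small correction: $F'(r)<1$ holds because $0<s<\Lb$ and $h>0$ on the nondegenerate intervals $(-\Lb,-s)$ and $(s,\Lb)$, which give $H(-s)>0$ and $H(s)<1$; the absence of atoms at the endpoints is not what delivers this.
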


The derivative formula \eqref{eq:Fprime} relies on the LLR identity of Lemma~\ref{lem:llr-identity}: differentiating the thresholds produces density terms that cancel exactly because the signal is its own likelihood ratio.

\begin{corollary}[Two memory-1 regimes]\label{cor:regimes}
The on-path belief path satisfies one of two alternatives.

\begin{enumerate}[label=(\roman*)]
\item \textbf{Contestable initial report.} If $\alpha_H<\rb$, then
\begin{equation}\label{eq:rhat-frontier}
  \hat r_t\uparrow\rb.
\end{equation}
The sequence stays strictly below $\rb$ at every finite date.

\item \textbf{Immediate copying.} If $\alpha_H\ge\rb$, then
\begin{equation}\label{eq:immediate-copy}
  a_t=a_0
\end{equation}
almost surely for every $t\ge1$.
\end{enumerate}
\end{corollary}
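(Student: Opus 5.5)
The plan is to treat the two regimes separately, using Proposition~\ref{prop:learning-map} for the contestable case and the action rule \eqref{eq:action-rule} with the bounded support of $\ell_t$ for the copying case.

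\emph{Regime (ii).} Suppose $\alpha_H\ge\rb$. Then $s_0=\log(\alpha_H/(1-\alpha_H))\ge\Lb$. By \eqref{eq:action-rule}, reviewer $1$ chooses $a_1=+1$ exactly when $a_0s_0+\ell_1\ge0$. Since $|\ell_1|\le\Lb\le s_0$, the sign of $a_0s_0+\ell_1$ equals the sign of $a_0$, except possibly when $\ell_1=\mp\Lb$ and $s_0=\Lb$ exactly. That event has probability zero because the distribution has no atoms at the support endpoints (Assumption~\ref{ass:signals}). Hence $a_1=a_0$ almost surely.

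The argument then propagates by induction. If $a_t=a_0$ almost surely, then $\hat r_t=\hat r_0=\alpha_H$, so $s_t=s_0\ge\Lb$ and the same step gives $a_{t+1}=a_t$ almost surely. Here Lemma~\ref{lem:action-symmetry} supplies the fact that the threshold used at date $t$ is $s_t=s(\hat r_t)$.

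\emph{Regime (i).} Suppose $\alpha_H<\rb$, so $\hat r_0\in(1/2,\rb)$. I would first extend $F$ continuously to $[1/2,\rb]$ and record three properties.
\begin{itemize}
\item[(a)] $F(r)>r$ on $(1/2,\rb)$. Rewrite $F(r)-r=(1-r)[1-H(s)]-r H(-s)$ with $s=s(r)$. Apply Lemma~\ref{lem:llr-identity}: $H(-s)=\int_{s}^{\Lb}e^{-z}h(z)\,dz$. Then
\[
F(r)-r=\int_s^{\Lb}\bigl[(1-r)-re^{-z}\bigr]h(z)\,dz.
\]
Since $r/(1-r)=e^{s}$, the integrand is $(1-r)(1-e^{s-z})h(z)>0$ for $z>s$. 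This is positive because $s<\Lb$ and $h>0$ on $(-\Lb,\Lb)$.
\item[(b)] $F(\rb)=\rb$. At $s=\Lb$ we have $H(-\Lb)=0$ and $H(\Lb)=1$.
\item[(c)] $F$ is strictly increasing, by \eqref{eq:Fprime}.
\end{itemize}

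Induction then gives $\hat r_t<\rb$ for every $t$: if $\hat r_t<\rb$, then $\hat r_{t+1}=F(\hat r_t)<F(\rb)=\rb$ by (c) and continuity. By (a), $\hat r_t<\hat r_{t+1}$, so the recursion \eqref{eq:rhat-recursion} applies at every date and the sequence is strictly increasing and bounded by $\rb$.

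It therefore converges to some limit $r^*\in(\alpha_H,\rb]$. Continuity of $F$ gives $F(r^*)=r^*$, and (a) rules out $r^*<\rb$. Hence $\hat r_t\uparrow\rb$ while staying strictly below $\rb$ at every finite date.

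The main obstacle is step (a), the sign of $F(r)-r$. That is where the LLR identity must be used to convert $H(-s)$ into an integral over $(s,\Lb)$. A second point needing care is continuity of $F$ at $\rb$. It follows because $H$ is continuous with no endpoint atoms, so $H(-s(r))\to0$ and $H(s(r))\to1$ as $r\uparrow\rb$.
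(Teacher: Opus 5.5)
Your proof is correct and is essentially the argument the paper leaves implicit: the paper gives no separate proof, relying on $F(\rb)=\rb$ and $F'\in(0,1)$ from Proposition~\ref{prop:learning-map} for (i), and on the copying-frontier observation stated just before the corollary for (ii). The one small difference is that you obtain $F(r)>r$ from a direct LLR integral, whereas the identity $\rb-F(r)=\int_r^{\rb}F'(v)\,dv\in(0,\rb-r)$ gives both $r<F(r)<\rb$ at once from the stated bounds on $F'$.
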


In the first regime, on-path accuracy converges to $\rb<1$: public confidence settles at the point where a single bounded signal just loses the power to overturn the inherited action, and learning of the state stops there. The washout results below concern a second, distinct limit, the sensitivity of downstream records to the analyst's hidden effort along this path.

\subsection{The influence coefficient}\label{subsec:influence}

For an on-path state $\hat r_t<\rb$, define
\begin{equation}\label{eq:lambda}
  \lambda_t
  :=H(s_t)-H(-s_t)
  =F'(\hat r_t).
\end{equation}
The coefficient has two equivalent interpretations. First, it is the derivative of next-period public accuracy with respect to current public accuracy. Second, holding the public threshold $s_t$ fixed, it is the change in the next reviewer's correctness probability when the predecessor changes from incorrect to correct.

Under a common private-signal coupling and state $\theta=+1$, changing the predecessor action from $-1$ to $+1$ changes the next action exactly when
\[
  \ell_{t+1}\in[-s_t,s_t),
\]
which has probability $\lambda_t$. This is the event on which the predecessor's correctness stays pivotal for the next action.

The probability that reviewer $t+1$ copies her predecessor is a separate object. On path it equals
\begin{equation}\label{eq:copy-probability}
  q_t^{\mathrm{copy}}
  =
  \hat r_t\bigl[1-H(-s_t)\bigr]
  +(1-\hat r_t)H(s_t).
\end{equation}
The two expressions generally differ: a reviewer may copy because the predecessor is right or because the predecessor and signal mislead her together, while $\lambda_t$ counts only the realizations at which the inherited action is pivotal.

The distinction matters for the product formula below: the effort imprint travels only through pivotal signal realizations, and the attenuation product $\Pi_T$ multiplies their probabilities along the chain.

\section{Hidden effort and endpoint evaluation}\label{sec:hidden}

Fix the high-effort continuation rule $\sigH$ (downstream agents believe that high effort is exerted). This section asks how the laws of contractible records change when the analyst shirks, secretly deviating to low effort. The relevant downstream thresholds remain those generated by the on-path belief path $(\hat r_t)$.

\subsection{Propagation of a hidden deviation}\label{subsec:hidden-recursion}

Define the actual correctness probability under effort $e$ by
\begin{equation}\label{eq:actual-r}
  r_t(e):=\PP^e(a_t=\theta),
\end{equation}
and the high-versus-low gap by
\begin{equation}\label{eq:d-def}
  d_t:=r_t(H)-r_t(L).
\end{equation}
On path, $r_t(H)=\hat r_t$. Under a deviation, $r_t(L)$ starts from the lower accuracy $\alpha_L$ and evolves under the same thresholds $(s_t)$, because reviewers continue to use $\sigH$.

Suppose $\alpha_H<\rb$, and define
\begin{equation}\label{eq:b-def}
  b_t:=1-H(s_t),
\end{equation}
with $s_t$ generated by the on-path belief path.

\begin{proposition}[Hidden-deviation recursion]\label{prop:hidden-recursion}
Suppose $\alpha_H<\rb$. For each $e\in\{H,L\}$,
\begin{equation}\label{eq:actual-recursion}
  r_{t+1}(e)=b_t+\lambda_t r_t(e),
  \qquad
  r_0(e)=\alpha_e.
\end{equation}
Consequently,
\begin{equation}\label{eq:d-product}
  d_T=\Del\Pi_T,
  \qquad
  \Pi_T:=\prod_{t=0}^{T-1}\lambda_t,
  \qquad
  \Pi_0:=1.
\end{equation}
\end{proposition}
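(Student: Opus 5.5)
The plan is a one-step conditioning argument: compute $r_{t+1}(e)$ by conditioning on whether $a_t$ is correct, holding the thresholds $s_t$ fixed at their on-path values. Two facts from Corollary~\ref{cor:regimes}(i) are needed first. Since $\alpha_H<\rb$, we have $\hat r_t\in(1/2,\rb)$ at every finite date, so $s_t\in(0,\Lb)$ and reviewer $t+1$ uses the rule \eqref{eq:action-rule} under both effort levels. By Assumption~\ref{ass:exogeneity} and the memory-1 structure, $\ell_{t+1}$ is independent of $(a_0,\dots,a_t)$ given $\theta$ and has the same conditional law under $\PP^H$ and $\PP^L$. The deviation therefore changes only the law of $(\theta,a_t)$, never the transition from $a_t$ to $a_{t+1}$.

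Next I would compute the two transition probabilities, taking $\theta=+1$ first and transferring to $\theta=-1$ by the sign symmetry \eqref{eq:signal-symmetry}.
\begin{itemize}
\item \emph{Correct predecessor} ($a_t=+1$). The next action is correct iff $s_t+\ell_{t+1}\ge0$, which has probability $1-H(-s_t)$.
\item \emph{Mistaken predecessor} ($a_t=-1$). The next action is correct iff $\ell_{t+1}\ge s_t$, which has probability $1-H(s_t)=b_t$.
\end{itemize}
Ties have probability zero because $h$ is continuous. Under $\theta=-1$ the same two numbers arise, because the rule and the law of $\ell$ flip together. The event $a_t=\theta$ thus fully determines the conditional correctness probability of $a_{t+1}$, independent of $\theta$ and of effort.

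The law of total probability then gives
\[
r_{t+1}(e)=r_t(e)\bigl[1-H(-s_t)\bigr]+\bigl(1-r_t(e)\bigr)b_t=b_t+r_t(e)\bigl[H(s_t)-H(-s_t)\bigr]=b_t+\lambda_t r_t(e),
\]
with $\lambda_t$ as in \eqref{eq:lambda}. The initial condition $r_0(e)=\alpha_e$ is just \eqref{eq:initial-tech}. Subtracting the recursions for $H$ and $L$ cancels $b_t$, since both use the same on-path threshold. This gives $d_{t+1}=\lambda_t d_t$ with $d_0=\Del$, and iterating yields $d_T=\Del\Pi_T$.

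The main point needing care is the justification that the conditional correctness probability of $a_{t+1}$ given $\{a_t=\theta\}$ or $\{a_t\ne\theta\}$ is the same under both laws. This requires:
\begin{itemize}
\item $\ell_{t+1}$ is independent of the past given $\theta$, which holds by the memory-1 assumption and Assumption~\ref{ass:exogeneity};
\item the threshold $s_t$ is a deterministic on-path quantity rather than something recomputed from $\alpha_L$, which holds by the definition of $\PP^L$ in \eqref{eq:Pe}.
\end{itemize}
A quick consistency check is that setting $e=H$ recovers $\hat r_{t+1}=F(\hat r_t)$ from Proposition~\ref{prop:learning-map}.
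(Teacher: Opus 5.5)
Your proposal is correct and takes the same approach as the paper's proof. Like the paper, you condition on whether $a_t$ is correct under the fixed on-path threshold $s_t$, which gives transition probabilities $1-H(-s_t)$ and $1-H(s_t)=b_t$. You then apply total probability and difference the two effort paths so that $b_t$ cancels and $d_{t+1}=\lambda_t d_t$. Beyond the paper, you also spell out two steps it leaves implicit: the conditional independence of $\ell_{t+1}$ from the past given $\theta$, and the sign-symmetry transfer to $\theta=-1$.
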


The recursion is affine because thresholds respond to the on-path belief while actual effort enters only through the predecessor's correctness. The intercept $b_t$ is the probability that an incorrect predecessor gets corrected; the slope $\lambda_t$ is the incremental effect of predecessor correctness. Differencing the two effort paths removes the intercept, so the effort gap travels through the product of influence coefficients alone.

\begin{corollary}[Washout of terminal accuracy]\label{cor:washout}
If $\alpha_H<\rb$, then
\begin{equation}\label{eq:Pi-zero}
  \Pi_T\to0
  \qquad\text{and}\qquad
  d_T\to0.
\end{equation}
Moreover, $\Pi_T$ and $d_T$ are strictly decreasing in $T$.
\end{corollary}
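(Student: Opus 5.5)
The plan is to work directly from the product formula $d_T=\Del\Pi_T$ in Proposition~\ref{prop:hidden-recursion}. Since $\Del>0$ is fixed, every claim about $d_T$ follows from the corresponding claim about $\Pi_T$. So it suffices to show two things: that $\Pi_T$ is strictly decreasing in $T$, and that it tends to zero.

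\emph{Monotonicity.} By Corollary~\ref{cor:regimes}(i), when $\alpha_H<\rb$ every on-path state satisfies $\hat r_t\in[\alpha_H,\rb)\subset(1/2,\rb)$. Indeed, $\hat r_0=\alpha_H>1/2$ and the path increases. On this interval Proposition~\ref{prop:learning-map} gives $\lambda_t=F'(\hat r_t)\in(0,1)$. Hence $\Pi_{T+1}=\lambda_T\Pi_T$ with $0<\lambda_T<1$, and because $\Pi_T>0$ we get $\Pi_{T+1}<\Pi_T$. Multiplying by $\Del>0$ gives the same strict monotonicity for $d_T$.

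\emph{Convergence to zero.} Since $F'$ is strictly increasing and $\hat r_t\uparrow\rb$, the coefficients $\lambda_t$ increase toward the limit $\lambda_\infty:=\lim_{r\uparrow\rb}F'(r)=H(\Lb)-H(-\Lb)=1$. This is the main obstacle: because $\lambda_t\to1$, the trivial bound $\Pi_T\le(\sup_t\lambda_t)^T$ is unavailable. Instead I will show $\sum_t(1-\lambda_t)=\infty$, which gives $\Pi_T\le\exp(-\sum_{t<T}(1-\lambda_t))\to0$.

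To obtain the divergent sum, I will relate $1-\lambda_t$ to the speed at which $\hat r_t$ approaches $\rb$. Write $g(r):=1-F'(r)=H(-s(r))+1-H(s(r))$, so that $\lambda_t=1-g(\hat r_t)$. Since $F$ is increasing with $F(r)>r$ below $\rb$ and $F(\rb)=\rb$ in the limit, the increment can be written as
\[
\hat r_{t+1}-\hat r_t=F(\hat r_t)-\hat r_t=\int_{\hat r_t}^{\rb}\bigl(1-F'(u)\bigr)\,du=\int_{\hat r_t}^{\rb}g(u)\,du .
\]
Here I use $F(r)-r\to0$ as $r\uparrow\rb$, which holds by continuity and the absence of atoms at $\pm\Lb$. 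Because $g$ is decreasing, this gives $\hat r_{t+1}-\hat r_t\le g(\hat r_t)(\rb-\hat r_t)$. Summing over $t$, the left side telescopes to at most $\rb-\alpha_H$, which is not by itself divergent. So I will argue instead through ratios. Set $\epsilon_t:=\rb-\hat r_t$. The inequality above reads $\epsilon_t-\epsilon_{t+1}\le g(\hat r_t)\epsilon_t$, that is, $\epsilon_{t+1}\ge(1-g(\hat r_t))\epsilon_t=\lambda_t\epsilon_t$. Iterating yields $\epsilon_T\ge\Pi_T\,\epsilon_0$, which bounds $\Pi_T$ from the wrong side. I therefore need an upper bound on $\epsilon_{t+1}/\epsilon_t$ in terms of $\lambda_t$.

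For that, I will use the upper estimate $\epsilon_t-\epsilon_{t+1}=\int_{\hat r_t}^{\rb}g\ge g(\hat r_{t+1}')\cdot(\cdot)$, or more simply work directly with the identity $d_{t}$. Let $P_t:=\hat r_t-r_t(L)=\Del\Pi_t$. The recursion shows that $r_t(L)$ follows the same affine map from a lower start. Comparing it with the on-path trajectory started one or more steps earlier, i.e.\ the nonautonomous orbit of $F$, both $r_t(L)$ and $\hat r_t$ are squeezed toward $\rb$. More concretely, $r_{t}(L)\ge \hat r_{t-k}$ for a fixed lag $k$ once $\hat r_k\ge\alpha_L$... (actually $r_t(L)\ge$ the $F$-orbit from $\alpha_L$ by convexity of $F$, since $F'$ is increasing, so $F$ lies above its tangent). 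Convexity gives $r_{t+1}(L)=F(\hat r_t)-\lambda_t(\hat r_t-r_t(L))\le F(r_t(L))$. The fallback I will actually commit to is the following. Both orbits converge to $\rb$, since the $F$-orbit from $\alpha_L$ does so by Corollary~\ref{cor:regimes}, and a convexity comparison squeezes $r_t(L)$ between that orbit's lower envelope and $\hat r_t$. Hence $d_T=\hat r_T-r_T(L)\to0$, and therefore $\Pi_T=d_T/\Del\to0$. The step I expect to need the most care is exactly this squeeze: securing a lower bound on $r_t(L)$ that tends to $\rb$.
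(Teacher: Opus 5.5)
\emph{The monotonicity half is correct and matches the paper's. The convergence half has a genuine gap, but you already derived the fix and then discarded it.}

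Your inequality $\epsilon_{t+1}\ge\lambda_t\epsilon_t$, iterated, gives $\epsilon_T\ge\Pi_T\epsilon_0$, that is, $\Pi_T\le\epsilon_T/\epsilon_0$. This is an \emph{upper} bound on $\Pi_T$, so it is not on the wrong side. Since $\epsilon_0=\rb-\alpha_H>0$ and $\epsilon_T=\rb-\hat r_T\to0$ by Corollary~\ref{cor:regimes}(i), it forces $\Pi_T\to0$ and hence $d_T=\Del\Pi_T\to0$. This is exactly the paper's proof. The paper gets the same inequality $\lambda_t\le\delta_{t+1}/\delta_t$ from the mean value theorem, writing $\rb-F(\hat r_t)=F'(\xi_t)(\rb-\hat r_t)$ with $\xi_t\in(\hat r_t,\rb)$ and using that $F'$ is increasing. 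Your integral representation $\epsilon_t-\epsilon_{t+1}=\int_{\hat r_t}^{\rb}(1-F'(u))\,du\le(1-\lambda_t)\epsilon_t$ is an equivalent derivation, and it was already enough to finish.

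The fallback you say you will actually commit to does not work. Convexity gives $r_{t+1}(L)=F(\hat r_t)-\lambda_t(\hat r_t-r_t(L))\le F(r_t(L))$. By induction, since $F$ is increasing, this yields $r_t(L)\le\tilde r_t$, where $\tilde r_t$ is the $F$-orbit from $\alpha_L$. That is an upper bound on $r_t(L)$. Together with $r_t(L)\le\hat r_t$ you have two upper bounds and no lower bound, so nothing squeezes $d_T=\hat r_T-r_T(L)$ to zero.

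The lag comparison $r_t(L)\ge\hat r_{t-k}$ cannot be propagated either, for the same reason. The tangent line of the convex $F$ at $\hat r_t$ lies below $F$, which pushes the deviation path down relative to any $F$-orbit, never up. Moreover, a lower bound on $r_t(L)$ that tends to $\rb$ is equivalent to $d_t\to0$ itself, so the squeeze would need an independent argument that you do not supply. As written, then, the proposal does not establish $\Pi_T\to0$. Replace the fallback with the one-line consequence of $\epsilon_T\ge\Pi_T\epsilon_0$.
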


Along the frontier path, reviewers' on-path accuracy stays bounded by $\rb<1$ while the gap $d_T$ closes: under the fixed high-effort continuation rule, terminal accuracy ends up nearly the same whichever effort the analyst actually chose. The state remains imperfectly learned; the effort imprint is what vanishes.

\subsection{Preservation and repair branches}\label{subsec:branches}

The preservation and repair probabilities of Section~\ref{sec:simple-example} now acquire a horizon index. Define
\begin{equation}\label{eq:SR}
  S_T:=\PP(a_T=\theta\mid a_0=\theta,\sigH),
  \qquad
  R_T:=\PP(a_T=\theta\mid a_0\ne\theta,\sigH).
\end{equation}
The number $S_T$ is the probability that review preserves a correct initial report through date $T$; the number $R_T$ is the probability that review repairs a mistaken initial report by date $T$.

\begin{proposition}[Branch probabilities in memory-1]\label{prop:SR-memory1}
If $\alpha_H<\rb$, then
\begin{align}
  S_T&=\hat r_T+(1-\alpha_H)\Pi_T,\label{eq:S}\\
  R_T&=\hat r_T-\alpha_H\Pi_T.\label{eq:R}
\end{align}
Hence
\begin{align}
  S_T-R_T&=\Pi_T,\label{eq:SminusR}\\
  S_T+R_T-1
  &=2\hat r_T-1+(1-2\alpha_H)\Pi_T.\label{eq:SplusR}
\end{align}
\end{proposition}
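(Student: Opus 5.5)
Two linear identities pin down the pair $(S_T,R_T)$, and I will solve them. By Assumption~\ref{ass:exogeneity} and Theorem~\ref{thm:decomposition}, $S_T$ and $R_T$ do not depend on actual effort, because they are branch probabilities computed under the fixed continuation $\sigH$. Applying the mixture identity \eqref{eq:mixture} to the event $\{a_T=\theta\}$ gives, for each $e\in\{H,L\}$,
\[
  r_T(e)=\alpha_e S_T+(1-\alpha_e)R_T .
\]
This is the horizon-indexed version of \eqref{eq:simple-out-prob}.

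First, I subtract the $L$ equation from the $H$ equation. This gives $d_T=\Del(S_T-R_T)$. Proposition~\ref{prop:hidden-recursion} gives $d_T=\Del\Pi_T$, and $\Del>0$, so $S_T-R_T=\Pi_T$. This is \eqref{eq:SminusR}.

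Second, I use the $H$ equation on its own, with $r_T(H)=\hat r_T$. Substituting $S_T=R_T+\Pi_T$ yields $\hat r_T=R_T+\alpha_H\Pi_T$. This gives \eqref{eq:R}, and then \eqref{eq:S} follows from $S_T=\hat r_T-\alpha_H\Pi_T+\Pi_T$. Adding the two formulas and subtracting one gives \eqref{eq:SplusR} directly.

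The main obstacle is checking that $S_T$ and $R_T$ are genuinely effort-invariant, so that a single pair solves both equations. If I relied only on the subtraction, I could not divide out $\Delta_0$ legitimately. This invariance holds because $\sigH$ is fixed and Assumption~\ref{ass:exogeneity} makes the law of the downstream history given $(\theta,a_0)$ independent of $e$. One point needs stating: for the symmetric technology, the within-branch distribution of $(\theta,a_0)$ is also effort-free, as noted before Theorem~\ref{thm:decomposition}.

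As a cross-check, or as an alternative self-contained route, I would iterate the affine recursion \eqref{eq:actual-recursion}. The branch-conditional correctness probabilities obey the same recursion, $x_{t+1}=b_t+\lambda_t x_t$, with initial values $1$ on $C$ and $0$ on $M$. This holds because the derivation of Proposition~\ref{prop:hidden-recursion} only uses the predecessor's correctness probability under fixed thresholds. Hence $S_T-R_T=\Pi_T$ directly, and averaging the two branches with weights $\alpha_H,1-\alpha_H$ recovers $\hat r_T$.
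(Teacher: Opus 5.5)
Your proof is correct, but your main route differs from the paper's. It is your closing ``cross-check'' that essentially reproduces the paper's argument.

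The paper does not invoke Theorem~\ref{thm:decomposition} or the product formula for $d_T$. It notes that the branch-conditional correctness paths obey the affine recursion \eqref{eq:actual-recursion} with $S_0=1$ and $R_0=0$. It then subtracts the on-path recursion $\hat r_{t+1}=b_t+\lambda_t\hat r_t$ to obtain $S_{t+1}-\hat r_{t+1}=\lambda_t(S_t-\hat r_t)$ and $R_{t+1}-\hat r_{t+1}=\lambda_t(R_t-\hat r_t)$. Iterating from $S_0-\hat r_0=1-\alpha_H$ and $R_0-\hat r_0=-\alpha_H$ gives the result.

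You instead treat $(S_T,R_T)$ as unknowns in the two equations $r_T(e)=\alpha_eS_T+(1-\alpha_e)R_T$ supplied by \eqref{eq:mixture}. You close the system with the two memory-1 inputs $d_T=\Del\Pi_T$ and $r_T(H)=\hat r_T$.

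Your route isolates exactly where memory-1 enters. The relations $S_T-R_T=d_T/\Del$ and $\hat r_T=\alpha_HS_T+(1-\alpha_H)R_T$ hold in any architecture satisfying Assumption~\ref{ass:exogeneity}, so only Proposition~\ref{prop:hidden-recursion} is model specific.

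The paper's route is more self-contained and yields the deviations $S_T-\hat r_T$ and $R_T-\hat r_T$ directly. It does, however, lean on a claim stated without separate proof: that each branch-conditional path obeys the same one-step recursion. That claim rests on precisely the point you flag, namely that the derivation of \eqref{eq:actual-recursion} uses only the predecessor's correctness under fixed thresholds and the conditional independence of the fresh signal.
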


The difference $S_T-R_T$ measures how much terminal correctness still remembers whether the first report was correct. The sum $S_T+R_T-1$ measures how strongly review preserves correct starts while rejecting mistaken ones; this combination governs the informativeness of agreement.

\begin{theorem}[Opposite monotonicity of the two branch indices]\label{thm:monotonicity}
In the memory-1 frontier regime $\alpha_H<\rb$:
\begin{enumerate}[label=(\roman*)]
\item $S_T-R_T=\Pi_T$ is strictly decreasing from one to zero;
\item $R_T$ is strictly increasing from zero to $\rb$;
\item $S_T+R_T-1$ is strictly increasing from zero to $2\rb-1$.
\end{enumerate}
\end{theorem}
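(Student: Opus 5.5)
The plan is to read all three claims off the closed forms in Proposition~\ref{prop:SR-memory1}, combined with three facts about the memory-1 frontier regime: the on-path belief path $(\hat r_t)$ is increasing with limit $\rb$ (Corollary~\ref{cor:regimes}); each $\lambda_t\in(0,1)$ (Proposition~\ref{prop:learning-map} together with \eqref{eq:lambda}); and $\Pi_T\to0$ (Corollary~\ref{cor:washout}). For each index I will check the value at $T=0$, sign the one-step increment, and pass to the limit.

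\emph{Part (i).} By \eqref{eq:SminusR}, $S_T-R_T=\Pi_T$. Its initial value is $\Pi_0=1$. Since $\Pi_{T+1}=\lambda_T\Pi_T$ with $\lambda_T\in(0,1)$, the sequence is strictly decreasing. Corollary~\ref{cor:washout} gives $\Pi_T\to0$.

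\emph{Part (ii).} By \eqref{eq:R}, $R_0=\hat r_0-\alpha_H\Pi_0=\alpha_H-\alpha_H=0$. The one-step increment is
\[
  R_{T+1}-R_T=(\hat r_{T+1}-\hat r_T)+\alpha_H\Pi_T(1-\lambda_T).
\]
The second term is strictly positive because $\alpha_H>0$, $\Pi_T>0$ and $\lambda_T<1$. It therefore suffices that $\hat r_T$ is weakly increasing, which Corollary~\ref{cor:regimes} provides. For the limit, $\hat r_T\to\rb$ and $\Pi_T\to0$ give $R_T\to\rb$.

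\emph{Part (iii).} By \eqref{eq:SplusR}, the value at $T=0$ is $2\alpha_H-1+(1-2\alpha_H)=0$. The one-step increment is
\[
  2(\hat r_{T+1}-\hat r_T)+(2\alpha_H-1)\Pi_T(1-\lambda_T).
\]
The second term is strictly positive because $\alpha_H>1/2$, so the index is strictly increasing. Its limit is $2\rb-1$.

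The only place where extra care may be needed is monotonicity of $\hat r_T$, in case one wants it strictly or does not want to lean on the ``$\uparrow$'' in Corollary~\ref{cor:regimes}. Using \eqref{eq:F},
\[
  F(r)-r=(1-r)\bigl[1-H(s)\bigr]-rH(-s),
  \qquad s=s(r).
\]
I would rewrite $H(-s)$ through the change of variables $z\mapsto -z$ and Lemma~\ref{lem:llr-identity}:
\[
  H(-s)=\int_s^{\Lb}e^{-z}h(z)\,dz.
\]
For $z>s$ we have $re^{-z}<re^{-s}=1-r$. Because $s<\Lb$ and $h>0$ on $(s,\Lb)$, it follows that $rH(-s)<(1-r)\bigl[1-H(s)\bigr]$. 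Hence $F(r)>r$ on $(1/2,\rb)$, and $\hat r_T$ is strictly increasing, which fully secures the signs used in (ii) and (iii).
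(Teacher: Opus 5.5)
Your proposal is correct and follows essentially the same route as the paper: you read $S_T-R_T$, $R_T$, and $S_T+R_T-1$ off Proposition~\ref{prop:SR-memory1}, check the $T=0$ values, sign the one-step increments using $\Pi_T-\Pi_{T+1}=\Pi_T(1-\lambda_T)>0$ together with monotonicity of $\hat r_T$, and pass to the limits via Corollaries~\ref{cor:regimes} and~\ref{cor:washout}. Your added check that $F(r)>r$ on $(1/2,\rb)$, using $H(-s)=\int_s^{\Lb}e^{-z}h(z)\,dz$, is valid and gives a self-contained justification for the monotonicity of the belief path that the paper only cites; you also rightly note that weak monotonicity of $\hat r_T$ suffices, because the $\Pi$-term is already strictly positive.
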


Review progressively erases the branch difference in terminal \emph{accuracy} and progressively widens the branch difference in \emph{agreement with the first action}: the same learning process weakens one performance measure while it strengthens the other, date by date.

\subsection{Endpoint record laws}\label{subsec:endpoint-laws}

Under sign symmetry, the four cells of $(\theta,a_T)$ reduce to correctness versus error, and the four cells of $(a_0,a_T)$ reduce to agreement versus disagreement. The branch probabilities therefore yield closed-form distances.

\begin{proposition}[Endpoint distinguishability]\label{prop:endpoint-D}
For every sign-symmetric downstream architecture,
\begin{align}
  D(\theta,a_T)
  &=\Del\,|S_T-R_T|,\label{eq:D-out-general}\\
  D(a_0,a_T)
  &=\Del\,|S_T+R_T-1|.\label{eq:D-agree-general}
\end{align}
In the memory-1 frontier regime,
\begin{align}
  D(\theta,a_T)&=\Del\Pi_T,\label{eq:D-out-memory}\\
  D(a_0,a_T)&=\Del\bigl[S_T+R_T-1\bigr].\label{eq:D-agree-memory}
\end{align}
\end{proposition}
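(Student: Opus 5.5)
The plan is to apply Theorem~\ref{thm:decomposition}, which gives $D(G)=\Del\,\TV(\nu_G^C,\nu_G^M)$, and compute the two branch laws explicitly for $G=(\theta,a_T)$ and $G=(a_0,a_T)$. Throughout I use the remark before Theorem~\ref{thm:decomposition}: within $C$ the pairs $(\theta,a_0)\in\{(+1,+1),(-1,-1)\}$ each have probability one half, and within $M$ the pairs $(+1,-1),(-1,+1)$ each have probability one half.

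\textbf{Outcome record.} First I would show that, under sign symmetry, $\PP(a_T=\theta\mid\theta,a_0,\sigH)$ depends only on whether $a_0=\theta$. Flipping the state and every action maps the history starting from $(+1,+1)$ to the one starting from $(-1,-1)$, and it preserves the event $\{a_T=\theta\}$. Hence $\PP(a_T=\theta\mid\theta,a_0=\theta)=S_T$ for both values of $\theta$, and likewise the mistaken-start conditional equals $R_T$. The branch law $\nu^C_{(\theta,a_T)}$ then puts mass $S_T/2$ on each correct cell $(\theta,\theta)$ and $(1-S_T)/2$ on each error cell $(\theta,-\theta)$. Replacing $S_T$ by $R_T$ gives $\nu^M$. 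Summing absolute differences over the four cells and halving yields
\[
\TV=\tfrac12\bigl[2\cdot\tfrac12|S_T-R_T|+2\cdot\tfrac12|S_T-R_T|\bigr]=|S_T-R_T|.
\]
This proves \eqref{eq:D-out-general}.

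\textbf{Process record.} On $C$ we have $\{a_T=a_0\}=\{a_T=\theta\}$, and on $M$ we have $\{a_T=a_0\}=\{a_T\neq\theta\}$. By the same symmetry argument, $\nu^C_{(a_0,a_T)}$ puts mass $S_T/2$ on each agreement cell $(a,a)$ and $(1-S_T)/2$ on each disagreement cell. For $\nu^M$, the cell $(a_0,a_T)=(a,a)$ arises from the mistaken-start history with $\theta=-a$ and $a_T=a=-\theta$, so its mass is $(1-R_T)/2$, and each disagreement cell has mass $R_T/2$. The total-variation computation across the four cells then gives $|S_T-(1-R_T)|=|S_T+R_T-1|$, which proves \eqref{eq:D-agree-general}.

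\textbf{Memory-1 regime.} Here I would specialize using earlier results. Proposition~\ref{prop:SR-memory1} gives $S_T-R_T=\Pi_T$, and this is nonnegative because each $\lambda_t\in(0,1)$ by \eqref{eq:Fprime}; that yields \eqref{eq:D-out-memory}. Theorem~\ref{thm:monotonicity}(iii) shows that $S_T+R_T-1$ increases from zero and is therefore nonnegative, which removes the absolute value in \eqref{eq:D-agree-memory}.

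The memory-1 benchmark is sign symmetric because of \eqref{eq:signal-symmetry}, the symmetric prior, and the odd action rule \eqref{eq:action-rule}. So the main obstacle is only the symmetry step that makes the cell conditionals independent of the sign of $\theta$. I would verify it by checking that the kernel $K$ in Assumption~\ref{ass:exogeneity} commutes with the global flip.
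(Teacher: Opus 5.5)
Your proof is correct and follows essentially the paper's argument. Both use sign symmetry to collapse each four-cell record to the binary event of terminal correctness (respectively first--last agreement), read the distance off $S_T$ and $R_T$, and obtain the memory-1 case from Proposition~\ref{prop:SR-memory1} together with the positivity in Theorem~\ref{thm:monotonicity}. The only difference is bookkeeping, and it amounts to the same computation by Theorem~\ref{thm:decomposition}: the paper differences the effort-indexed probabilities $\alpha_e S_T+(1-\alpha_e)R_T$ and $\alpha_e S_T+(1-\alpha_e)(1-R_T)$, whereas you compute $\TV(\nu^C,\nu^M)$ on the branch laws and multiply by $\Del$.
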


Combining Proposition~\ref{prop:endpoint-D} with Theorem~\ref{thm:monotonicity} gives a strict horizon comparison.

\begin{corollary}[Outcome weakens while agreement strengthens]\label{cor:opposite-records}
If $\alpha_H<\rb$, then $D(\theta,a_T)$ is strictly decreasing in $T$, while $D(a_0,a_T)$ is strictly increasing in $T$. Their limits are
\begin{equation}\label{eq:record-limits-preview}
  \lim_{T\to\infty}D(\theta,a_T)=0,
  \qquad
  \lim_{T\to\infty}D(a_0,a_T)=\Del(2\rb-1).
\end{equation}
\end{corollary}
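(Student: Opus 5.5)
The corollary follows by combining Proposition~\ref{prop:endpoint-D} with Theorem~\ref{thm:monotonicity} and Corollary~\ref{cor:washout}. In the frontier regime $\alpha_H<\rb$, Proposition~\ref{prop:endpoint-D} gives the closed forms $D(\theta,a_T)=\Del\Pi_T$ and $D(a_0,a_T)=\Del[S_T+R_T-1]$. Since $\Del>0$ is a fixed constant independent of $T$, each record distance is a positive multiple of a branch index. Monotonicity and limits of the distances are therefore inherited directly from those of the indices.

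For outcome evaluation, Theorem~\ref{thm:monotonicity}(i), equivalently Corollary~\ref{cor:washout}, says that $\Pi_T=S_T-R_T$ is strictly decreasing in $T$ and tends to zero. Multiplying by $\Del$ shows that $D(\theta,a_T)$ is strictly decreasing with limit $0$. For process evaluation, Theorem~\ref{thm:monotonicity}(iii) says that $S_T+R_T-1$ is strictly increasing from zero toward $2\rb-1$. Multiplying by $\Del$ gives strict increase of $D(a_0,a_T)$ and the limit $\Del(2\rb-1)$.

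The limit in (iii) can also be checked directly from \eqref{eq:SplusR}: $S_T+R_T-1=2\hat r_T-1+(1-2\alpha_H)\Pi_T$. Here $\hat r_T\to\rb$ by Corollary~\ref{cor:regimes}(i), and $\Pi_T\to0$ by Corollary~\ref{cor:washout}. The first term therefore tends to $2\rb-1$ and the second vanishes.

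One detail needs care. The formula \eqref{eq:D-agree-memory} drops the absolute value appearing in \eqref{eq:D-agree-general}, which requires $S_T+R_T-1\ge0$. This holds because the index starts at $S_0+R_0-1=1+0-1=0$ and is increasing by Theorem~\ref{thm:monotonicity}(iii). Similarly, $\Pi_T>0$ because each $\lambda_t\in(0,1)$ by \eqref{eq:Fprime}. With these sign checks in place, the remaining steps are direct substitution.
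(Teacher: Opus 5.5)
Your proposal is correct and follows the paper's own route: the paper obtains this corollary by combining Proposition~\ref{prop:endpoint-D} with Theorem~\ref{thm:monotonicity}, so $D(\theta,a_T)=\Del\Pi_T$ and $D(a_0,a_T)=\Del[S_T+R_T-1]$ inherit monotonicity and limits from the two branch indices. Your sign checks are exactly the positivity from Theorem~\ref{thm:monotonicity} that the paper uses to drop the absolute values in \eqref{eq:D-agree-memory}: the index starts at $S_0+R_0-1=0$ and increases, and $\Pi_T>0$.
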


\subsection{Bonuses, wage caps, and expected costs}\label{subsec:cost-comparison}

For $T\ge1$ in the frontier regime, both relevant orientations are positive:
\[
  S_T>R_T,
  \qquad
  S_T+R_T>1.
\]
Thus outcome evaluation rewards terminal correctness and process evaluation rewards agreement.

\begin{proposition}[Minimum-cost endpoint contracts]\label{prop:endpoint-costs}
The minimum success bonuses are
\begin{align}
  b_T^\theta
  &=\frac{\kappa}{\Del(S_T-R_T)}
   =\frac{\kappa}{\Del\Pi_T},\label{eq:b-out}\\
  b_T^A
  &=\frac{\kappa}{\Del(S_T+R_T-1)}.\label{eq:b-agree}
\end{align}
Each contract is feasible exactly when its required bonus does not exceed $\wbar$.

When feasible, the minimum high-effort expected costs are
\begin{align}
  C_T^\theta
  &=
  \frac{\kappa\,[\alpha_H S_T+(1-\alpha_H)R_T]}
       {\Del(S_T-R_T)}
   =\frac{\kappa\hat r_T}{\Del\Pi_T},\label{eq:C-out}\\
  C_T^A
  &=
  \frac{\kappa\,[\alpha_H S_T+(1-\alpha_H)(1-R_T)]}
       {\Del(S_T+R_T-1)}.\label{eq:C-agree}
\end{align}
These contracts are optimal among all bounded contracts on their respective endpoint records.
\end{proposition}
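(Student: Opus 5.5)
The plan is to reduce each endpoint record to a binary event and then invoke Corollary~\ref{cor:binary-cost}. Optimality on the endpoint record itself will follow from the fractional-knapsack characterization in Proposition~\ref{prop:finite-contract}.

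\textbf{Step 1: reduce each record to one event.} Under outcome evaluation the record is $(\theta,a_T)$, with four cells. Sign symmetry of the memory-1 architecture (symmetric prior, symmetric initial technology, Assumption~\ref{ass:signals}, and a symmetric $\sigH$) implies that, under each $\PP^e$, the cells $(+1,+1)$ and $(-1,-1)$ have equal probability, and so do the two error cells. Hence $\delta_i$ and $p_i^H$ are equal within each pair. From Theorem~\ref{thm:decomposition},
\[
  \PP^H(a_T=\theta)-\PP^L(a_T=\theta)=\Del(S_T-R_T)=\Del\Pi_T>0,
\]
using \eqref{eq:SminusR}. So the two correctness cells have $\delta_i>0$, share a common ratio $\rho_i$, and carry equal incentive. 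The two error cells have $\delta_i<0$, and by Proposition~\ref{prop:finite-contract} they are never paid.

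By the tie clause of that proposition, paying a common amount on both correctness cells is without loss. The optimal contract is therefore a bonus on $E_T^\theta$, and Corollary~\ref{cor:binary-cost} gives \eqref{eq:b-out}. The same argument applies to $(a_0,a_T)$: the agreement cells $(+1,+1),(-1,-1)$ are symmetric and have gap $\Del(S_T+R_T-1)$ by \eqref{eq:event-gap}, which is positive by Theorem~\ref{thm:monotonicity}(iii) for $T\ge1$. This yields \eqref{eq:b-agree}.

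\textbf{Step 2: feasibility.} Corollary~\ref{cor:binary-cost} states that the contract is feasible iff $b^*\le\wbar$. Equivalently, $\wbar D(G)\ge\kappa$ with $D$ from Proposition~\ref{prop:endpoint-D}, which matches Proposition~\ref{prop:maxgap}.

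\textbf{Step 3: costs.} The expected cost is $b^*p_H$. For outcome evaluation, $p_H=\PP^H(a_T=\theta)=\alpha_H S_T+(1-\alpha_H)R_T$ by the mixture \eqref{eq:mixture}. Substituting \eqref{eq:S} and \eqref{eq:R} gives
\[
  \alpha_H\hat r_T+\alpha_H(1-\alpha_H)\Pi_T+(1-\alpha_H)\hat r_T-(1-\alpha_H)\alpha_H\Pi_T=\hat r_T,
\]
which is consistent with $\hat r_T=\PP^H(a_T=\theta)$ by definition. Together with $S_T-R_T=\Pi_T$, this gives \eqref{eq:C-out}. For agreement, $p_H=\alpha_H S_T+(1-\alpha_H)(1-R_T)$, again by the mixture, and this gives \eqref{eq:C-agree}.

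\textbf{Main obstacle.} The delicate point is the reduction from four cells to one event. Proposition~\ref{prop:finite-contract} only asserts that \emph{some} minimum-cost contract has the knapsack form, so I must check that the two favorable cells really tie in $\rho_i$. Otherwise the knapsack could fill one cell partially before the other, and the optimum would not be a uniform bonus.

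The tie requires sign symmetry of $\PP^e$ jointly in $(\theta,a_0,a_T)$ under both efforts. I would verify it by noting three facts. First, $q_e$ is symmetric by \eqref{eq:initial-tech}. Second, the kernel $K$ is invariant under flipping $\theta$ and all actions, since the thresholds \eqref{eq:action-rule} are odd in $(a_t,\ell_{t+1})$ and \eqref{eq:signal-symmetry} holds. Third, combining these through \eqref{eq:kernel-factorization} gives the joint symmetry. With the ties established, the minimum cost equals the binary-event cost, and no richer payment on these records can do better.
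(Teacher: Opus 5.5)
Your proposal is correct and follows essentially the same route as the paper: reduce each four-cell endpoint record to a binary event via sign symmetry, apply Corollary~\ref{cor:binary-cost}, and compute the high-effort probability of the favorable event from the mixture and Proposition~\ref{prop:SR-memory1}. The only difference is minor: the paper justifies the four-cell optimality by averaging any wage within each symmetric pair of cells, which preserves both cost and incentive gap, whereas you reach the same reduction through the tie clause of the knapsack characterization in Proposition~\ref{prop:finite-contract}.
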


The outcome bonus rises strictly with the horizon because $\Pi_T$ falls. The agreement bonus falls strictly because $S_T+R_T-1$ rises. A longer review chain therefore imposes opposite wage-cap requirements on the two technologies.

\begin{theorem}[The repair threshold]\label{thm:repair-threshold}
At every finite horizon $T\ge1$ in the memory-1 frontier regime, the following statements are equivalent:
\begin{align*}
  D(a_0,a_T)&\ge D(\theta,a_T),\\
  b_T^A&\le b_T^\theta,\\
  R_T&\ge\frac12.
\end{align*}
If both endpoint contracts are feasible, then
\begin{equation}\label{eq:cost-ranking}
  C_T^A\le C_T^\theta
  \quad\Longleftrightarrow\quad
  R_T\ge\frac12.
\end{equation}
More precisely,
\begin{equation}\label{eq:cost-difference}
  C_T^A-C_T^\theta
  =-
  \frac{\kappa S_T(2R_T-1)}
       {\Del(S_T-R_T)(S_T+R_T-1)}.
\end{equation}
\end{theorem}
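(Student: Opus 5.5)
The plan is to reduce everything to the two branch indices $S_T-R_T$ and $S_T+R_T-1$, which are both strictly positive at every $T\ge1$ in the frontier regime, and then repeat the algebra of Proposition~\ref{prop:simple-threshold} at horizon $T$. For positivity, Theorem~\ref{thm:monotonicity} gives $S_T-R_T=\Pi_T>0$, since every $\lambda_t\in(0,1)$ by Proposition~\ref{prop:learning-map}. It also gives $S_T+R_T-1>0$ for $T\ge1$, because this index is strictly increasing from zero. With both indices positive, Proposition~\ref{prop:endpoint-D} writes the two distances as $\Del(S_T-R_T)$ and $\Del(S_T+R_T-1)$ without absolute values. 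Proposition~\ref{prop:endpoint-costs} writes the bonuses as $\kappa$ divided by exactly these two quantities.

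The first two equivalences then follow directly. Since $\Del>0$, the inequality $D(a_0,a_T)\ge D(\theta,a_T)$ is equivalent to $S_T+R_T-1\ge S_T-R_T$, which is $2R_T\ge1$. Since $\kappa>0$ and both denominators are positive, $b_T^A\le b_T^\theta$ holds exactly when the agreement denominator is weakly larger, which is the same condition.

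For the cost comparison, I would start from \eqref{eq:C-out} and \eqref{eq:C-agree}, put both over the common denominator $\Del(S_T-R_T)(S_T+R_T-1)$, and expand the numerator:
\[
  N:=\bigl[\alpha_H S_T+(1-\alpha_H)(1-R_T)\bigr](S_T-R_T)-\bigl[\alpha_H S_T+(1-\alpha_H)R_T\bigr](S_T+R_T-1).
\]
Write $x:=\alpha_H S_T+(1-\alpha_H)R_T$. The first bracket equals $x+(1-\alpha_H)(1-2R_T)$, so
\[
  N=x\bigl[(S_T-R_T)-(S_T+R_T-1)\bigr]+(1-\alpha_H)(1-2R_T)(S_T-R_T)=(1-2R_T)\bigl[x+(1-\alpha_H)(S_T-R_T)\bigr].
\]
The bracket simplifies to $\alpha_H S_T+(1-\alpha_H)S_T=S_T$, so $N=-S_T(2R_T-1)$. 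Multiplying by $\kappa$ and dividing by the common denominator gives \eqref{eq:cost-difference}. Because $\kappa$, $S_T$, $\Del$ and both branch indices are positive, $C_T^A-C_T^\theta$ has the sign of $-(2R_T-1)$, which yields \eqref{eq:cost-ranking}. The feasibility hypothesis is needed only so that both costs are given by these finite formulas rather than $+\infty$.

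The main obstacle is the factorization of $N$, which is easy to get wrong if one expands term by term. Grouping around $x$, the terminal-correctness probability, is what makes the factor $S_T$ appear. The only other point that needs care is the strict positivity at $T\ge1$ of the two branch indices, which rules out a division by zero and any sign flip. Both facts are supplied by Theorem~\ref{thm:monotonicity}.
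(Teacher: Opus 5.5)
Your proof is correct and follows the paper's route. Both arguments get positivity of $S_T-R_T$ and $S_T+R_T-1$ from Theorem~\ref{thm:monotonicity}, obtain the distance gap $\Del(2R_T-1)$ and the inverse-bonus comparison, and subtract \eqref{eq:C-out} from \eqref{eq:C-agree}. Your factorization of the numerator around $x$ is simply the ``direct simplification'' the paper leaves implicit, and the algebra checks out.
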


The threshold has a direct interpretation. If bad initial reports are usually left uncorrected, agreement is easy to obtain for the wrong reason and terminal correctness is the stronger signal of initial effort. Once a bad initial report is more likely than not to be repaired, agreement becomes demanding: it tends to occur on the correct-report branch and fail on the incorrect-report branch.

\begin{corollary}[A unique switch in the preferred endpoint record]\label{cor:unique-switch}
If $\alpha_H<\rb$, define
\begin{equation}\label{eq:switch-horizon}
  T^\times
  :=\min\left\{T\ge1:R_T\ge\frac12\right\}.
\end{equation}
Then $T^\times$ is finite. Outcome evaluation is strictly stronger for $T<T^\times$, process evaluation is weakly stronger at $T=T^\times$, and process evaluation is strictly stronger for every $T>T^\times$ unless equality happens exactly at the crossing date.
\end{corollary}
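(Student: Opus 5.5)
The argument combines Theorem~\ref{thm:monotonicity}(ii), which makes $R_T$ strictly increasing in $T$ with limit $\rb$, and Theorem~\ref{thm:repair-threshold}, which reduces the ranking of the two endpoint records at each horizon $T\ge1$ to the sign of $R_T-\tfrac12$.

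\emph{Finiteness of $T^\times$.} By Corollary~\ref{cor:regimes}(i), the frontier regime has $\rb>\alpha_H>\tfrac12$. Since $R_T\uparrow\rb$ and $\rb>\tfrac12$, there is some finite $T$ with $R_T>\tfrac12$, so the set in \eqref{eq:switch-horizon} is nonempty and $T^\times$ is finite. To rule out the degenerate case $T^\times=0$, I restrict to $T\ge1$ as in the definition. This is harmless anyway, because $R_0=0$.

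\emph{Ranking before the switch.} For $1\le T<T^\times$, minimality of $T^\times$ gives $R_T<\tfrac12$. The equivalences in Theorem~\ref{thm:repair-threshold} then give $D(a_0,a_T)<D(\theta,a_T)$ and $b_T^A>b_T^\theta$. When both contracts are feasible, \eqref{eq:cost-difference} gives $C_T^A>C_T^\theta$, because its denominator is positive ($S_T-R_T=\Pi_T>0$ and $S_T+R_T-1>0$ for $T\ge1$) and $S_T>0$. So outcome evaluation is strictly stronger.

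\emph{Ranking at and after the switch.} At $T=T^\times$ we have $R_T\ge\tfrac12$, and Theorem~\ref{thm:repair-threshold} gives the weak ranking in favour of process evaluation. For $T>T^\times$, strict monotonicity of $R_T$ gives $R_T>R_{T^\times}\ge\tfrac12$, so $2R_T-1>0$. Strictness then has to be read off the explicit formulas, since the cited theorem only supplies the weak equivalence:
\begin{itemize}
\item for the distances, $D(a_0,a_T)-D(\theta,a_T)=\Del(2R_T-1)$ by Proposition~\ref{prop:endpoint-D};
\item for the bonuses, $b_T^A<b_T^\theta$ follows from the strict comparison of the gaps;
\item for the costs, the numerator $S_T(2R_T-1)$ in \eqref{eq:cost-difference} is strictly positive.
\end{itemize}
Equality can only occur at a date where $R_T=\tfrac12$ exactly. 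By strict monotonicity there is at most one such date, and it must be $T^\times$ itself, which is the ``unless'' clause. The only step needing care is this upgrade from weak to strict, which the identity $D(a_0,a_T)-D(\theta,a_T)=\Del(2R_T-1)$ makes immediate.
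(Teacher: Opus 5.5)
Your proof is correct and follows the paper's route. Finiteness of $T^\times$ comes from $R_T$ increasing strictly to $\rb>\tfrac12$ (Theorem~\ref{thm:monotonicity}), and the ranking at each horizon comes from Theorem~\ref{thm:repair-threshold}. Using $D(a_0,a_T)-D(\theta,a_T)=\Del(2R_T-1)$ and the cost-difference formula to get strictness after $T^\times$ only spells out what the paper leaves implicit. One small citation point: $\rb>\tfrac12$ follows directly from $\Lb>0$ (or from $\alpha_H>\tfrac12$ in the model's standing assumption \eqref{eq:initial-tech}), not from Corollary~\ref{cor:regimes}.
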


The unique switch follows from strict monotonicity of $R_T$, and it occurs along the review horizon itself: for a fixed organization and fixed signal technology, the preferred endpoint record changes as the number of review stages grows.

\subsection{Long-horizon duality}\label{subsec:duality}

\begin{theorem}[Endpoint duality in memory-1]\label{thm:duality}
Fix $\Del>0$, $\kappa>0$, and $\wbar<\infty$.

\begin{enumerate}[label=(\roman*)]
\item \textbf{Contestable initial report.} If $\alpha_H<\rb$, then
\[
  \hat r_T\uparrow\rb,
  \qquad
  \Pi_T\to0,
  \qquad
  S_T,R_T\to\rb.
\]
Consequently,
\begin{align}
  D(\theta,a_T)&\to0,\label{eq:out-dies}\\
  D(a_0,a_T)&\to\Del(2\rb-1)
  =\Del\tanh(\Lb/2)>0.\label{eq:agree-survives}
\end{align}
Outcome evaluation is eventually infeasible under every fixed $(\wbar,\kappa)$. Process evaluation is feasible for all sufficiently large $T$ if and only if
\begin{equation}\label{eq:process-asym-feasible}
  \wbar\Del(2\rb-1)>\kappa.
\end{equation}
If $\wbar\Del(2\rb-1)\le\kappa$, it is infeasible at every finite horizon.

\item \textbf{Immediate copying.} If $\alpha_H\ge\rb$, then $a_T=a_0$ almost surely for every $T\ge1$, and
\begin{align}
  D(\theta,a_T)&=\Del,\label{eq:out-copy}\\
  D(a_0,a_T)&=0.\label{eq:agree-copy}
\end{align}
Outcome evaluation preserves the full initial effort effect, while process evaluation is exactly degenerate.
\end{enumerate}
\end{theorem}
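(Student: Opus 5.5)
The argument assembles earlier results; I treat the two regimes separately.

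For part (i), with $\alpha_H<\rb$, Corollary~\ref{cor:regimes}(i) gives $\hat r_T\uparrow\rb$ and Corollary~\ref{cor:washout} gives $\Pi_T\to0$. Proposition~\ref{prop:SR-memory1} then yields $S_T=\hat r_T+(1-\alpha_H)\Pi_T\to\rb$ and $R_T=\hat r_T-\alpha_H\Pi_T\to\rb$.

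Proposition~\ref{prop:endpoint-D} supplies the two distances:
\begin{itemize}
\item $D(\theta,a_T)=\Del\Pi_T\to0$, which is \eqref{eq:out-dies};
\item $D(a_0,a_T)=\Del(S_T+R_T-1)\to\Del(2\rb-1)$, which is \eqref{eq:agree-survives}.
\end{itemize}
For the closed form, $2\rb-1=(e^{\Lb}-1)/(e^{\Lb}+1)=\tanh(\Lb/2)$, and this is positive because $\Lb>0$.

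Feasibility comes from Proposition~\ref{prop:maxgap}: a bounded contract exists iff $\wbar D\ge\kappa$. Since $D(\theta,a_T)\to0$ while $\kappa>0$ is fixed, eventually $\wbar D(\theta,a_T)<\kappa$. Corollary~\ref{cor:opposite-records} makes $D(\theta,a_T)$ strictly decreasing, so infeasibility persists once reached.

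For process evaluation, Corollary~\ref{cor:opposite-records} says $D(a_0,a_T)$ is strictly increasing with limit $\Del(2\rb-1)$. Hence $D(a_0,a_T)<\Del(2\rb-1)$ at every finite $T$, and the two directions follow:
\begin{itemize}
\item If \eqref{eq:process-asym-feasible} holds strictly, the increasing sequence $\wbar D(a_0,a_T)$ eventually exceeds $\kappa$ and stays above it.
\item If $\wbar\Del(2\rb-1)\le\kappa$, then $\wbar D(a_0,a_T)<\kappa$ at every finite $T$, so process evaluation is infeasible throughout.
\end{itemize}
Together these give the ``if and only if'' for feasibility at all sufficiently large $T$. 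The strict inequality in \eqref{eq:process-asym-feasible} is essential, and it relies on the strict monotonicity of $D(a_0,a_T)$. That monotonicity comes from Theorem~\ref{thm:monotonicity}(iii), so I take it as given.

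For part (ii), with $\alpha_H\ge\rb$, Corollary~\ref{cor:regimes}(ii) gives $a_T=a_0$ almost surely for $T\ge1$. Therefore $S_T=1$ and $R_T=0$. The general sign-symmetric formulas \eqref{eq:D-out-general} and \eqref{eq:D-agree-general} then give $D(\theta,a_T)=\Del$ and $D(a_0,a_T)=0$.

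There is a more direct route for each. The record $(\theta,a_T)=(\theta,a_0)$ almost surely, so Corollary~\ref{cor:direct-audit} gives distance $\Del$. The record $(a_0,a_T)=(a_0,a_0)$ is a function of $a_0$ alone, so Corollary~\ref{cor:single-action} gives distance $0$.

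One check is needed: the sign symmetry hypothesis of Proposition~\ref{prop:endpoint-D} must hold in the copying regime. It does, because the copying rule is symmetric under flipping the state and every action.

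The main obstacle is only the boundary case of the process-feasibility dichotomy, and strict monotonicity of $D(a_0,a_T)$ resolves it.
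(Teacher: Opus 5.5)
Your proposal is correct and follows the paper's proof essentially step for step: the same chain of Corollaries~\ref{cor:regimes} and~\ref{cor:washout}, Proposition~\ref{prop:SR-memory1}, and Proposition~\ref{prop:endpoint-D} gives the limits. Proposition~\ref{prop:maxgap}, together with the fact that $D(a_0,a_T)$ stays strictly below its limit at every finite horizon, settles the boundary case $\wbar\Del(2\rb-1)=\kappa$. Your treatment of the copying regime, via $S_T=1$, $R_T=0$ or the direct-audit and single-action corollaries, just repackages the paper's direct computation that correctness has probability $\alpha_e$ and agreement has probability one under both efforts.
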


In the frontier regime, the limiting process contract has bonus
\begin{equation}\label{eq:process-limit-bonus}
  b_\infty^A
  =\frac{\kappa}{\Del(2\rb-1)}
\end{equation}
and expected high-effort cost
\begin{align}
  C_\infty^A
  &=
  \frac{\kappa\,[\alpha_H\rb+(1-\alpha_H)(1-\rb)]}
       {\Del(2\rb-1)}\label{eq:process-limit-cost-a}\\
  &=
  \frac{\kappa\alpha_H}{\Del}
  +
  \frac{\kappa(1-\rb)}{\Del(2\rb-1)}.\label{eq:process-limit-cost-b}
\end{align}
The first term in \eqref{eq:process-limit-cost-b} is the direct-audit cost from \eqref{eq:direct-cost}. The second is the price of using an imperfect downstream verifier when the truth is not contractible. Even though agreement retains positive incentive power, it cannot match the cost of observing initial-report correctness itself.

Under the additional frontier regularity stated in Appendix~\ref{app:rates}, the contrast is quantitative: outcome distinguishability is $\Theta(T^{-2})$, so its required bonus and the associated uncapped wage bill grow as $\Theta(T^2)$ before any fixed wage cap makes implementation infeasible. Process distinguishability approaches its positive limit at rate $O(T^{-1})$; when its limiting bonus is below the cap, its bonus and expected cost converge to finite limits.

\section{Numerical Example}\label{sec:numerical}

In this section, a specified example shows the speed of the two opposing record dynamics and verifies that the signal family is a genuine LLR experiment. The analytical results themselves hold for any distribution satisfying Assumption~\ref{ass:signals}.

\subsection{A bounded LLR family}

Fix $\Lb>0$. Under $\theta=+1$, let the LLR density be
\begin{equation}\label{eq:numerical-density}
  h(z)=
  \begin{cases}
  c e^z, & -\Lb<z<0,\\
  c, & 0\le z<\Lb,
  \end{cases}
  \qquad
  c:=\frac{1}{\Lb+1-e^{-\Lb}}.
\end{equation}
The corresponding CDF is
\begin{equation}\label{eq:numerical-cdf}
  H(z)=
  \begin{cases}
  0, & z\le-\Lb,\\
  c(e^z-e^{-\Lb}), & -\Lb<z<0,\\
  c(1-e^{-\Lb}+z), & 0\le z<\Lb,\\
  1, & z\ge\Lb.
  \end{cases}
\end{equation}
For $z>0$, $h(-z)=ce^{-z}=e^{-z}h(z)$, so the density satisfies the valid-LLR identity. Defining the state-$-1$ distribution by sign reversal produces a symmetric binary experiment. Moreover, $h(z)/h(-z)=e^z$, so the observation $z$ is indeed its own log-likelihood ratio rather than an arbitrary threshold index. The density is continuous and strictly positive on the open support, and it is smooth in a neighborhood of the upper frontier.

This example uses
\begin{equation}\label{eq:numerical-parameters}
  \Lb=1.2,
  \qquad
  \alpha_H=0.65,
  \qquad
  \alpha_L=0.55.
\end{equation}
Then
\[
  \Del=0.10,
  \qquad
  \rb=\frac{e^{1.2}}{1+e^{1.2}}\approx0.7685,
\]
so the initial report is in the contestable regime.

\subsection{Belief, repair, and record strength}

Figure~\ref{fig:branch-paths} plots the on-path accuracy $\hat r_T$, the correct-report branch $S_T$, and the repair probability $R_T$. The on-path state rises toward $\rb$. The two branch probabilities approach the same limit, which is why terminal correctness loses its dependence on the initial effort. At the same time, $R_T$ crosses one half between horizons two and three. The preferred endpoint record therefore switches at $T^\times=3$.

\begin{figure}[htbp!]
\centering
\includegraphics[width=0.78\textwidth]{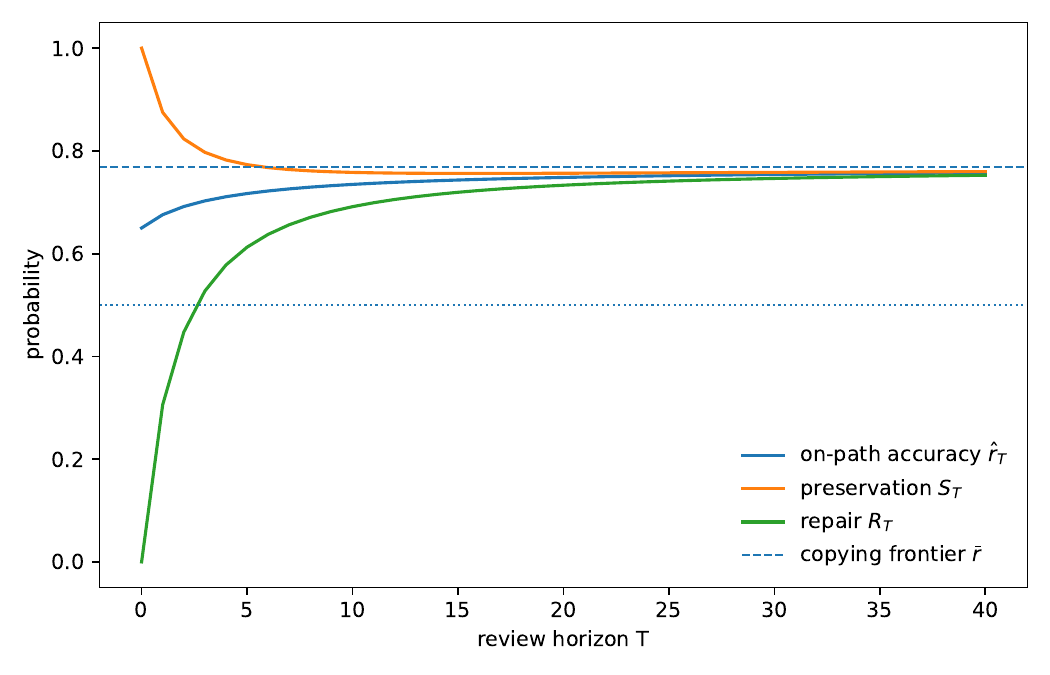}
\caption{On-path accuracy and branch probabilities. Parameters are given in \eqref{eq:numerical-parameters}. The dotted horizontal line is one half; the dashed line is the copying frontier $\rb$.}
\label{fig:branch-paths}
\end{figure}

Figure~\ref{fig:record-strength} shows the corresponding total-variation distances. For normalization, the figure includes $T=0$, where the outcome record coincides with the direct audit $(\theta,a_0)$ and its distinguishability equals the full initial-report distance $\Del$. Thereafter outcome distinguishability falls monotonically. Agreement distinguishability begins at zero and rises toward
\[
  \Del(2\rb-1)\approx0.0537.
\]
The curves cross at the same horizon at which $R_T$ crosses one half, as Theorem~\ref{thm:repair-threshold} requires.

\begin{figure}[htbp!]
\centering
\includegraphics[width=0.78\textwidth]{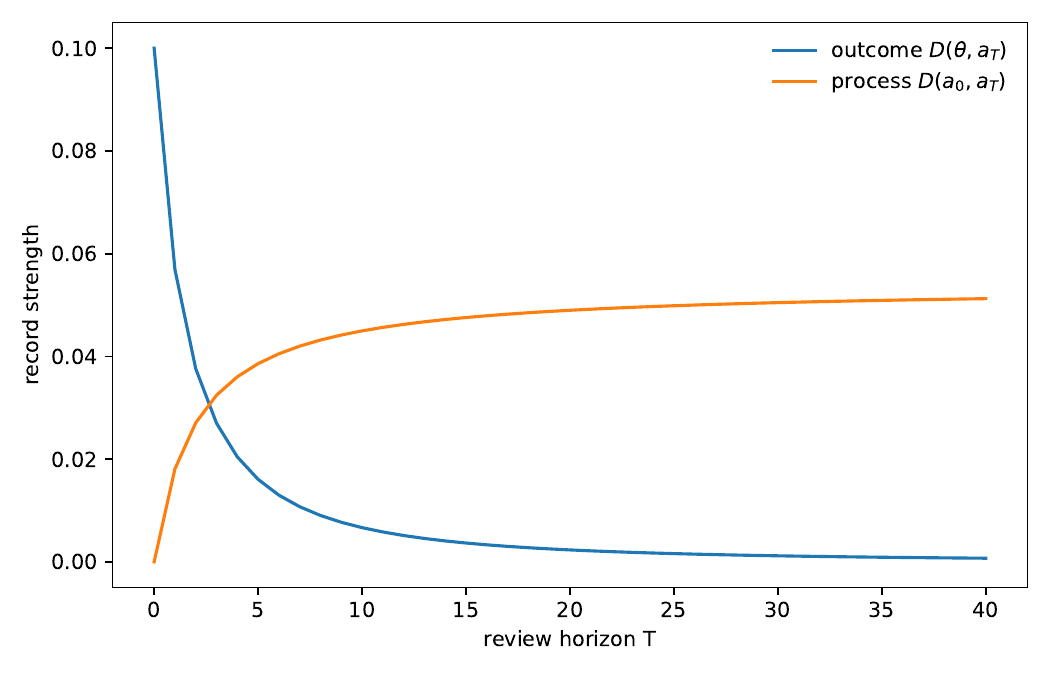}
\caption{Incentive power of the two endpoint records. The outcome record weakens with the review horizon, while the process record strengthens. The $T=0$ outcome point is the direct-audit normalization $(\theta,a_0)$.}
\label{fig:record-strength}
\end{figure}

Table~\ref{tab:numerical} reports selected values. The attenuation product falls quickly at first and then polynomially near the frontier. The repair probability, by contrast, passes one half at a short horizon and continues toward $\rb$.

\begin{table}[t]
\centering
\caption{Selected numerical values}
\label{tab:numerical}
\begin{tabular}{@{}rrrrrr@{}}
\toprule
$T$ & $\hat r_T$ & $R_T$ & $\Pi_T$ & $D(\theta,a_T)$ & $D(a_0,a_T)$ \\
\midrule
1  & 0.6759 & 0.3060 & 0.5691 & 0.0569 & 0.0181 \\
2  & 0.6918 & 0.4472 & 0.3762 & 0.0376 & 0.0271 \\
3  & 0.7027 & 0.5272 & 0.2700 & 0.0270 & 0.0324 \\
5  & 0.7171 & 0.6126 & 0.1607 & 0.0161 & 0.0386 \\
10 & 0.7349 & 0.6915 & 0.0667 & 0.0067 & 0.0450 \\
20 & 0.7484 & 0.7333 & 0.0233 & 0.0023 & 0.0490 \\
40 & 0.7573 & 0.7526 & 0.0072 & 0.0007 & 0.0512 \\
\bottomrule
\end{tabular}
\end{table}

\subsection{Bounded wages}

To illustrate feasibility, set $\kappa=0.005, \wbar=0.20.$ Figure~\ref{fig:bonuses} plots the minimum success bonuses. The flat line is the wage cap; a record can implement high effort only where its bonus curve lies below it. The outcome curve starts well under the cap and climbs, crossing it after $T=3$. At $T=1$, the outcome contract is feasible, but the agreement contract is not. At $T=2$, both are feasible, and the outcome contract still requires the smaller bonus. At $T=3$, the agreement contract becomes cheaper in incentive terms. From $T=4$ onward, the outcome bonus exceeds the wage cap, while the agreement bonus remains below it and converges to approximately $0.0931$.

\begin{figure}[htbp!]
\centering
\includegraphics[width=0.78\textwidth]{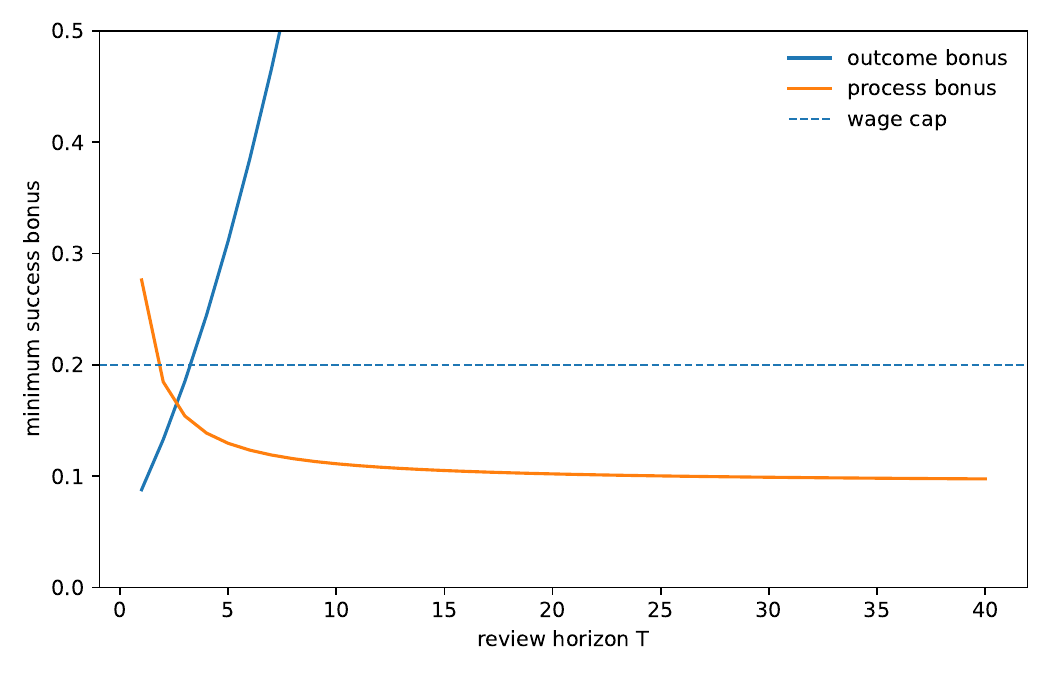}
\caption{Minimum success bonuses for $\kappa=0.005$. The dashed line is the wage cap $\wbar=0.20$.}
\label{fig:bonuses}
\end{figure}

All figures are generated directly from the recursion in Proposition~\ref{prop:hidden-recursion}. 

\section{Beyond memory-1}\label{sec:beyond}

The memory-1 benchmark buys one thing: a one-dimensional belief recursion, and with it the monotonicity, crossing, and washout-rate results. The branch decomposition applies far more widely. This section states which conclusions hold for arbitrary sign-symmetric architectures and which the benchmark alone supplies.

Fix an arbitrary sign-symmetric public architecture and an arbitrary continuation strategy profile that is held fixed across the hidden-effort comparison. Reviewers may observe the full action history, a network neighborhood, or another public state. The objects $S_T$ and $R_T$ continue to summarize the two endpoint records, but their evolution with the horizon now depends on the architecture.

\subsection{Endpoint geometry}\label{subsec:universal}

Continue to define
\[
  S_T:=\PP(a_T=\theta\mid a_0=\theta,\sigH),
  \qquad
  R_T:=\PP(a_T=\theta\mid a_0\ne\theta,\sigH).
\]
No Markov or memory restriction is imposed.

\begin{proposition}[Universal endpoint formulas]\label{prop:universal}
For every sign-symmetric downstream architecture satisfying Assumption~\ref{ass:exogeneity},
\begin{align}
  D(\theta,a_T)&=\Del|S_T-R_T|,\label{eq:univ-out}\\
  D(a_0,a_T)&=\Del|S_T+R_T-1|.\label{eq:univ-agree}
\end{align}
If
\begin{equation}\label{eq:positive-orientation}
  S_T>R_T,
  \qquad
  S_T+R_T>1,
\end{equation}
then process evaluation has weakly greater distinguishability than outcome evaluation if and only if
\begin{equation}\label{eq:universal-repair-threshold}
  R_T\ge\frac12.
\end{equation}
The same threshold ranks the required bonuses and, whenever both contracts are feasible, the minimum expected wage costs.
\end{proposition}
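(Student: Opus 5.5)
The plan is to reduce everything to Theorem~\ref{thm:decomposition} applied to two coarse records, and then reuse the algebra of Proposition~\ref{prop:simple-threshold}. First, take $G=(\theta,a_T)$. By Theorem~\ref{thm:decomposition}, $D(\theta,a_T)=\Del\,\TV(\nu^C_G,\nu^M_G)$, so it suffices to compute the two branch laws on the four cells of $(\theta,a_T)$. Sign symmetry of the architecture and of the strategy profile, together with the symmetric prior, gives the following. On branch $C$, the cells $(+1,+1)$ and $(-1,-1)$ each carry mass $S_T/2$, and the cells $(+1,-1)$ and $(-1,+1)$ each carry $(1-S_T)/2$. The same holds on branch $M$ with $R_T$ in place of $S_T$. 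The step to justify carefully is that $\PP(a_T=\theta\mid C,\theta=+1)=\PP(a_T=\theta\mid C,\theta=-1)$. This follows because flipping the state and every action maps the conditional law given $(\theta,a_0)=(+1,+1)$ onto that given $(-1,-1)$; Assumption~\ref{ass:exogeneity} ensures this conditional law is the kernel $K$, independent of effort. Summing half the absolute cell differences then yields $\TV=|S_T-R_T|$, which proves \eqref{eq:univ-out}.

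Second, take $G=(a_0,a_T)$ and repeat the argument. On $C$, agreement means correctness, so each agreement cell $(+1,+1)$ and $(-1,-1)$ has mass $S_T/2$. On $M$, agreement means the mistake survived, so these cells have mass $(1-R_T)/2$. Again by symmetry, the two agreement cells are equal, and the two disagreement cells are equal. The total variation is therefore $|S_T-(1-R_T)|=|S_T+R_T-1|$, which proves \eqref{eq:univ-agree}. The same cell computation shows that, when \eqref{eq:positive-orientation} holds, the positive-$\delta$ cells are exactly the correctness cells and the agreement cells respectively. These reduce each record to a binary event, so Corollary~\ref{cor:binary-cost} applies.

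Third, under \eqref{eq:positive-orientation}, both absolute values drop. The comparison $\Del(S_T+R_T-1)\ge\Del(S_T-R_T)$ is then equivalent to $R_T\ge1/2$, exactly as in Proposition~\ref{prop:simple-threshold}. For bonuses, Corollary~\ref{cor:binary-cost} gives $b=\kappa/(\text{gap})$, so the ranking follows at once. For costs, I would compute
\[
C^A-C^\theta=\frac{\kappa}{\Del}\left[\frac{\alpha_HS_T+(1-\alpha_H)(1-R_T)}{S_T+R_T-1}-\frac{\alpha_HS_T+(1-\alpha_H)R_T}{S_T-R_T}\right].
\]
Cross-multiplying, I expect this to collapse to $-\kappa S_T(2R_T-1)/[\Del(S_T-R_T)(S_T+R_T-1)]$, as in \eqref{eq:cost-difference}. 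The derivation there used only these two formulas and no memory-1 structure, so it carries over. Since the denominators are positive, the sign is governed by $2R_T-1$, with $S_T>0$ guaranteed by $S_T>R_T\ge0$.

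The main obstacle is the first step: stating the sign-symmetry argument rigorously for an arbitrary architecture, so that within-branch cell masses are equal across the two states. Everything after that is routine algebra. I would handle it by defining the flip map on $(\theta,a_0,Z_T)$ and invoking the hypothesis that it preserves induced probabilities. I would also note that the branch law of $(\theta,a_0)$ is uniform on two cells independently of effort, as recorded before Theorem~\ref{thm:decomposition}.
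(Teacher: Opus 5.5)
Your proposal is correct and follows essentially the paper's route. The paper proves this proposition by pointing back to the proof of Proposition~\ref{prop:endpoint-D}, which combines the initial-report decomposition with sign symmetry to collapse each four-cell record to correctness (resp.\ agreement), and then repeats the algebra of Theorem~\ref{thm:repair-threshold}; your cross-multiplication does reproduce \eqref{eq:cost-difference}, since the numerator collapses to $S_T(1-2R_T)$.

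The differences are cosmetic. You compute the total variation between the branch laws $\nu^C,\nu^M$ rather than between the two effort laws. You justify the binary reduction through the positive-$\delta$ cells and their tied ratios in Proposition~\ref{prop:finite-contract}, whereas the paper uses a wage-averaging argument in Proposition~\ref{prop:endpoint-costs}. Your explicit flip-map argument for equal within-branch cell masses spells out a step the paper only asserts.
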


The repair threshold therefore rests on the two-branch structure and sign symmetry alone; the memory-1 benchmark contributes the horizon dynamics of $(S_T,R_T)$.

Outside the positive-orientation region, the favorable event may be terminal error or first--last disagreement. The relevant strengths are then $|S_T-R_T|$ and $|S_T+R_T-1|$. The formulas remain valid, but the simple one-half ranking must be stated with the appropriate signs.

\subsection{Full-history Case}\label{subsec:full-history}

For the full action history $H_T=(a_0,\ldots,a_T)$, Theorem~\ref{thm:decomposition} gives
\begin{equation}\label{eq:full-history-D-again}
  D(H_T)
  =\Del W_T,
\end{equation}
where
\begin{equation}\label{eq:WT}
  W_T
  :=\TV\!\left(
    \cL(H_T\mid a_0=\theta,\sigH),
    \cL(H_T\mid a_0\ne\theta,\sigH)
  \right)\in[0,1].
\end{equation}
The coefficient $W_T$ is the best branch separation available from the complete public trace. By data processing,
\begin{equation}\label{eq:endpoint-below-full}
  D(\theta,a_T)\le D(\theta,H_T),
  \qquad
  D(a_0,a_T)\le D(H_T),
\end{equation}
with the first comparison interpreted in the relevant state-observable information environment.

A rich history can matter even when an endpoint statistic is weak. For example, a sequence containing an early reversal followed by stable actions may reveal more about the initial branch than the final action alone. Proposition~\ref{prop:finite-contract} describes how a principal with a finite history record allocates payments across such histories. The agreement contract earns its place through transparency and a two-action archival requirement; richer records can strictly improve on it.

\begin{corollary}[When both branches eventually reach the truth]\label{cor:asym-learning}
Suppose an arbitrary architecture satisfies
\begin{equation}\label{eq:both-branches-learn}
  S_T\to1,
  \qquad
  R_T\to1.
\end{equation}
Then
\begin{align}
  D(\theta,a_T)&\to0,\label{eq:learn-out-zero}\\
  D(a_0,a_T)&\to\Del,\label{eq:learn-agree-Delta}\\
  D(H_T)&\to\Del.\label{eq:learn-full-Delta}
\end{align}
Hence first--last agreement asymptotically exhausts the hidden-effort information in the full public history.
\end{corollary}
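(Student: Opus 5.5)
The first two limits follow from the universal endpoint formulas of Proposition~\ref{prop:universal}. By \eqref{eq:univ-out}, $D(\theta,a_T)=\Del|S_T-R_T|$. Under \eqref{eq:both-branches-learn} both $S_T$ and $R_T$ tend to one, so $|S_T-R_T|\to0$, which gives \eqref{eq:learn-out-zero}. By \eqref{eq:univ-agree}, $D(a_0,a_T)=\Del|S_T+R_T-1|$, and $S_T+R_T-1\to1$, which gives \eqref{eq:learn-agree-Delta}. Both identities need only sign symmetry and Assumption~\ref{ass:exogeneity}, and the corollary assumes these. No Markov structure is used.

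For \eqref{eq:learn-full-Delta} I will sandwich $D(H_T)$. The upper bound $D(H_T)\le\Del$ is \eqref{eq:D-upper} from Theorem~\ref{thm:decomposition}. For the lower bound, note that $(a_0,a_T)$ is a measurable function of $H_T=(a_0,\ldots,a_T)$. Proposition~\ref{prop:data-processing} then gives $D(a_0,a_T)\le D(H_T)$, as recorded in \eqref{eq:endpoint-below-full}. So
\[
\Del|S_T+R_T-1|\le D(H_T)\le\Del ,
\]
and the left side tends to $\Del$. Equivalently, in terms of \eqref{eq:full-history-D-again}, $W_T\to1$. The final sentence of the corollary, that agreement asymptotically exhausts the information in the full history, is the statement $D(H_T)-D(a_0,a_T)\to0$, which follows from the same sandwich.

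One point needs care. The paper defines $D(H_T)$ in the hidden-truth regime, so the record $H_T$ does not include $\theta$. This is the right comparison for $(a_0,a_T)$, and data processing applies directly. I do not expect a real obstacle. The only step to check is that the absolute value in \eqref{eq:univ-agree} causes no sign problem. It does not: for large $T$, $S_T+R_T-1>0$, so the absolute value can be dropped. If desired, I will add a quantitative remark that $\Del-D(a_0,a_T)=\Del\bigl[(1-S_T)+(1-R_T)\bigr]$ once the sign is positive. This bounds the residual gap to the full-history benchmark by the two branch error rates.
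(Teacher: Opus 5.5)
Your proposal is correct and matches the paper's proof step for step. The universal endpoint formulas of Proposition~\ref{prop:universal} give the first two limits. The squeeze $D(a_0,a_T)\le D(H_T)\le\Del$ then gives the third, using data processing (Proposition~\ref{prop:data-processing}) for the lower bound and \eqref{eq:D-upper} for the upper bound.
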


The result illustrates the difference between decision quality and attribution. Perfect downstream correction makes the terminal action an excellent decision but a useless measure of which effort level produced the initial report. At the same time, perfect correction turns agreement into an exact indicator of whether the initial report was correct: good reports survive and bad reports are reversed.

\subsection{Bounded full-history cascade}\label{subsec:cascade}

Full-history observation with bounded signals can keep both endpoint records alive in the limit.

\begin{example}[Binary full-history cascade]\label{ex:cascade}
Suppose $\alpha_H=q\in(1/2,1)$ and $\alpha_L<q$. Each downstream reviewer observes the full public action history and one binary private signal of accuracy $q$. Reviewers believe that the initial report also has accuracy $q$. Ties are broken in favor of the reviewer's private signal.

This is the standard two-signal cascade geometry. After two identical public actions, a strict cascade begins. Let $\Gamma$ be the probability that the eventual cascade is correct when public evidence is neutral. Starting from neutrality, the next two informative actions are both correct with probability $q^2$, both incorrect with probability $(1-q)^2$, and split with probability $2q(1-q)$, which returns the process to neutrality. Therefore
\begin{equation}\label{eq:Gamma-recursion}
  \Gamma=q^2+2q(1-q)\Gamma,
\end{equation}
so
\begin{equation}\label{eq:Gamma}
  \Gamma=\frac{q^2}{q^2+(1-q)^2}.
\end{equation}

Conditional on a correct initial report, the next reviewer confirms it with probability $q$ and otherwise returns the process to neutrality. Conditional on an incorrect initial report, a correct private signal, which occurs with probability $q$, returns the process to neutrality. Hence
\begin{equation}\label{eq:cascade-SR}
  S_\infty=q+(1-q)\Gamma,
  \qquad
  R_\infty=q\Gamma.
\end{equation}
Substituting into Proposition~\ref{prop:universal} gives
\begin{align}
  \lim_{T\to\infty}D(\theta,a_T)
  &=\Del\frac{q(1-q)}{q^2+(1-q)^2}>0,\label{eq:cascade-out}\\
  \lim_{T\to\infty}D(a_0,a_T)
  &=\Del\frac{(2q-1)(q^2-q+1)}{q^2+(1-q)^2}>0.\label{eq:cascade-agree}
\end{align}
Both endpoint records survive because the eventual cascade remains dependent on the initial branch.
\end{example}

The repair threshold still determines the ranking. In the cascade example,
\[
  R_\infty
  =q\Gamma
  =\frac{q^3}{q^2+(1-q)^2}.
\]

\begin{corollary}[Asymptotic ranking in the cascade example]\label{cor:cascade-threshold}
There is a unique $q^\star\in(1/2,1)$ satisfying
\begin{equation}\label{eq:qstar-equation}
  2(q^\star)^3-2(q^\star)^2+2q^\star-1=0.
\end{equation}
Numerically, $q^\star\approx0.647799$. If $q>q^\star$, process evaluation is asymptotically stronger, requires a smaller wage cap, and is cheaper whenever both contracts are feasible. If $1/2<q<q^\star$, outcome evaluation has the corresponding advantages.
\end{corollary}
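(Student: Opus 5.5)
The plan is to reduce the asymptotic ranking to the repair threshold of Proposition~\ref{prop:universal}, applied in the limit, and then to solve the resulting polynomial inequality in $q$.

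First, I will check that the positive-orientation conditions \eqref{eq:positive-orientation} hold in the limit for every $q\in(1/2,1)$. Using \eqref{eq:cascade-SR}, the limits \eqref{eq:cascade-out} and \eqref{eq:cascade-agree} are strictly positive, so $S_\infty>R_\infty$ and $S_\infty+R_\infty>1$. By continuity of $\Del|S_T-R_T|$ and $\Del|S_T+R_T-1|$, the same strict inequalities then hold for all large $T$, since the endpoint records converge to their limits. This justifies reading the asymptotic comparison of $D(\theta,a_T)$ and $D(a_0,a_T)$ directly from the sign of $(S_\infty+R_\infty-1)-(S_\infty-R_\infty)=2R_\infty-1$.

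Second, I will translate $R_\infty\ge1/2$ into the polynomial condition. With $R_\infty=q^3/(q^2+(1-q)^2)$ and a positive denominator, $R_\infty>1/2$ is equivalent to $2q^3>q^2+(1-q)^2=2q^2-2q+1$. This is exactly $p(q):=2q^3-2q^2+2q-1>0$. For uniqueness of the root, note that $p'(q)=6q^2-4q+2$ has negative discriminant ($16-48<0$), so $p$ is strictly increasing. Moreover $p(1/2)=1/4-1/2+1-1=-1/4<0$ and $p(1)=1>0$. Hence there is a unique root $q^\star\in(1/2,1)$, with $p<0$ below it and $p>0$ above it. I will confirm the value $q^\star\approx0.6478$ by evaluating $p$ near that point.

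Third, I will transfer the ranking to bonuses and costs. The limiting bonuses are $\kappa/(\Del(S_\infty-R_\infty))$ and $\kappa/(\Del(S_\infty+R_\infty-1))$, so the record with larger limiting distinguishability also needs the smaller bonus, and therefore the smaller wage cap. For expected costs I will use the limiting analogue of \eqref{eq:cost-difference}:
\[
C^A-C^\theta=-\frac{\kappa S(2R-1)}{\Del(S-R)(S+R-1)}.
\]
This identity is pure algebra from the binary-record formula \eqref{eq:binary-cost} together with the probabilities $\alpha_HS+(1-\alpha_H)R$ and $\alpha_HS+(1-\alpha_H)(1-R)$, so it holds for any positively oriented pair $(S,R)$ and not only in memory-1. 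Evaluating it at $(S_\infty,R_\infty)$ gives a sign equal to that of $-(2R_\infty-1)$, which proves the cost statement whenever both contracts are feasible.

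The main obstacle is the phrase ``asymptotically'': \eqref{eq:cascade-SR} describes the eventual cascade rather than the finite-$T$ values. The argument needs $S_T\to S_\infty$ and $R_T\to R_\infty$. To get this, I will observe that a strict cascade starts almost surely in finite time, because from neutrality each two-step block ends the process with probability $q^2+(1-q)^2>0$, and once a cascade starts $a_T$ is constant. Bounded convergence then gives the limits, and the strict inequalities carry over to all sufficiently large $T$.
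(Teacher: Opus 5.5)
Your proposal is correct and follows the paper's argument: you reduce the ranking to the repair threshold $R_\infty\ge 1/2$ of Proposition~\ref{prop:universal}, rewrite it as $2q^3-2q^2+2q-1\ge 0$, and get a unique root in $(1/2,1)$ from the negative discriminant of the derivative together with the signs at $q=1/2$ and $q=1$. Your extra steps go beyond the paper's proof. Checking positive orientation of the limits, deriving the cost-difference identity for any positively oriented $(S,R)$, and showing $S_T\to S_\infty$, $R_T\to R_\infty$ (because a cascade starts almost surely) fill in what the paper leaves implicit when it reads ``asymptotically'' directly off the limiting branch probabilities.
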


The example locates the boundary of the memory-1 conclusions: the decomposition into $(S_T,R_T)$ and the one-half repair threshold carry over, while the vanishing of $S_T-R_T$ in the contestable regime is supplied by the memory-1 dynamics.

The results can be summarized as follows. Three statements are architecture-free under the maintained exogeneity and symmetry assumptions:
\begin{enumerate}[label=(\roman*)]
\item every record's incentive power is the primitive effort effect times branch separation;
\item the canonical endpoint distances are determined by $(S_T,R_T)$;
\item in the positive-orientation region, the repair probability $R_T$ ranks the two endpoint contracts.
\end{enumerate}
By contrast, the evolution of $S_T$, $R_T$, and the full-history coefficient $W_T$ depends on who observes what, which signals are bounded, and how actions reveal private information. Those are the margins on which alternative organizational architectures should be compared.

\section{Retained traces and audited nodes}\label{sec:trace}

The benchmark uses the first--last agreement because that is one of the smallest records to have the attribution force present. This section explains how to read that benchmark in a broader spectrum. The organization may retain a randomly selected earlier recommendation, or it may audit a node other than the first analyst. The formulas below are local incentive formulas: given an intended high-effort continuation, they measure how much incentive power a retained trace gives to the audited information producer, one deviation at a time.

\subsection{Random trace records}\label{subsec:random-trace}

Let $J$ be an audit index taking values in a subset of $\{0,1,\ldots,T\}$, with distribution $\mu$, independent of the state, effort, and downstream private signals. Suppose the hidden-truth settlement record is
\begin{equation}\label{eq:random-trace-record}
  G_T^\mu:=(J,a_J,a_T).
\end{equation}
The case $\mu(0)=1$ is the first--last pair record. A distribution with support on several dates represents an organization that always observes the final recommendation but retains, or samples, one earlier recommendation from the chain.

\begin{proposition}[Random trace decomposition]\label{prop:random-trace}
For the random trace record in \eqref{eq:random-trace-record},
\begin{equation}\label{eq:random-trace-D}
  D(G_T^\mu)
  =
  \Del
  \sum_j \mu(j)
  \TV\!\left(
    \cL(a_j,a_T\mid a_0=\theta),
    \cL(a_j,a_T\mid a_0\ne\theta)
  \right).
\end{equation}
Consequently,
\begin{equation}\label{eq:random-trace-bound}
  D(G_T^\mu)
  \le
  \Del.
\end{equation}
If $\mu(0)=1$, \eqref{eq:random-trace-D} reduces to the pair-record branch distance studied below.
\end{proposition}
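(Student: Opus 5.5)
The plan is to apply Theorem~\ref{thm:decomposition} to the record $G_T^\mu=(J,a_J,a_T)$ and then evaluate the branch distance by exploiting the independence of the audit index $J$. First I need the audit randomization to fit Assumption~\ref{ass:exogeneity}. Since $J$ is drawn from $\mu$ independently of $\theta$, effort, and the downstream signals, I will include $J$ in the downstream randomness $Z_T$. Its law given $(\theta,a_0)$ is then $\mu$, which does not depend on $e$, so the kernel factorization \eqref{eq:kernel-factorization} still holds. Theorem~\ref{thm:decomposition} then gives
\[
  D(G_T^\mu)=\Del\,\TV\bigl(\nu^C_{G},\nu^M_{G}\bigr),
\]
where $\nu^C_G$ and $\nu^M_G$ are the laws of $(J,a_J,a_T)$ conditional on $C$ and on $M$ under $\sigH$.

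The key step is computing this total-variation distance. The events $C$ and $M$ are measurable with respect to $(\theta,a_0)$, and $J$ is independent of $(\theta,a_0,Z'_T)$, where $Z'_T$ denotes the rest of the downstream randomness. Hence, conditional on either branch, $J$ still has law $\mu$, and given $J=j$ the pair $(a_J,a_T)$ has the law $\cL(a_j,a_T\mid \text{branch})$. Both branch laws are therefore mixtures over $j$ with the same weights $\mu(j)$, and the components sit on disjoint slices $\{J=j\}$ of the record space. For such measures I will prove a small lemma: with $P=\sum_j\mu(j)\,\delta_j\otimes P_j$ and $Q=\sum_j\mu(j)\,\delta_j\otimes Q_j$, one has $\TV(P,Q)=\sum_j\mu(j)\TV(P_j,Q_j)$.

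The proof of the lemma goes in two directions. For the upper bound, write any event $B$ as the union of its slices $B_j$, so that $|P(B)-Q(B)|\le\sum_j\mu(j)|P_j(B_j)-Q_j(B_j)|$. For the lower bound, on each slice take $B_j=\{P_j>Q_j\}$, the Hahn set, and note that the union of these sets attains the sum. Since the records are finite on each slice, this can equally be written as half the $\ell^1$ distance, summed over slices. Substituting the lemma yields \eqref{eq:random-trace-D}.

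For the bound \eqref{eq:random-trace-bound}, each term $\TV(\cdot,\cdot)\le1$ and $\sum_j\mu(j)=1$, which gives $D(G_T^\mu)\le\Del$. This bound also follows directly from \eqref{eq:D-upper}. When $\mu(0)=1$ the sum collapses to $\TV(\cL(a_0,a_T\mid C),\cL(a_0,a_T\mid M))$, which is the pair-record branch distance.

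I expect the main obstacle to be the careful justification that $J$ can be absorbed into the exogenous kernel, and that conditioning on the branch does not alter $\mu$. This needs independence of $J$ from the joint vector $(\theta,e,a_0,Z'_T)$, not merely from each variable separately. I would state this as the reading of the hypothesis. The slice lemma itself is routine.
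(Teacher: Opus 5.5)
Your proposal is correct and follows the paper's proof. You apply Theorem~\ref{thm:decomposition} to $(J,a_J,a_T)$, then split the branch total-variation distance across the disjoint slices $\{J=j\}$, using that $J$ has law $\mu$ on both branches and choosing the separating event slice by slice. Your explicit check that $J$ can be absorbed into the exogenous kernel, which needs joint independence, is a step the paper leaves implicit.
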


\begin{proof}
Apply Theorem~\ref{thm:decomposition} to $G_T^\mu$. Conditional on either initial branch, the audit index $J$ has the same distribution $\mu$ and is observed as part of the record. Because the realized audit date is part of the record, any event in the record space decomposes across the disjoint cells $\{J=j\}$, and within each cell the branch-separating event can be chosen freely; the total variation distance between the two conditional laws of $(J,a_J,a_T)$ therefore decomposes as
\[
  \TV\!\left(\cL(J,a_J,a_T\mid C),\cL(J,a_J,a_T\mid M)\right)
  =
  \sum_j \mu(j)
  \TV\!\left(\cL(a_j,a_T\mid C),\cL(a_j,a_T\mid M)\right),
\]
where $C=\{a_0=\theta\}$ and $M=\{a_0\ne\theta\}$. Multiplying by $\Del$ gives \eqref{eq:random-trace-D}. The bound follows because each total variation distance is at most one and the audit probabilities sum to one.
\end{proof}

The formula clarifies why the initial--terminal record should not be interpreted as the only possible process record. It is the degenerate audit that always opens the file at date zero. A richer organization might audit an intermediate recommendation, a random earlier recommendation, or several recommendations. The same criterion applies: the retained trace is valuable only to the extent that it separates histories generated after a correct initial report from histories generated after a mistaken one.

\subsection{Local accountability for an audited node}\label{subsec:local-node}

Now fix an arbitrary date $i<T$ and suppose agent $i$ is the evaluated information producer. Let $H_i$ denote the public history observed before agent $i$ acts. Conditional on $H_i=h$, agent $i$ chooses effort $e_i\in\{H,L\}$, and effort changes the probability that her recommendation matches the state:
\begin{equation}\label{eq:local-alpha}
  \PP(a_i=\theta\mid H_i=h,e_i=e)=\alpha_i^e(h),
  \qquad
  \Delta_i(h):=\alpha_i^H(h)-\alpha_i^L(h)\ge0.
\end{equation}
The downstream continuation strategy is fixed at the intended high-effort profile. The local analogue of Assumption~\ref{ass:exogeneity} is that, conditional on $(H_i,\theta,a_i)$, the future trace is independent of $e_i$.

For a future record $G_{i,T}$, define the conditional branch laws
\begin{align}
  \nu_{i,G}^{C,h}&:=\cL(G_{i,T}\mid H_i=h,
a_i=\theta),\label{eq:local-branch-C}\\
  \nu_{i,G}^{M,h}&:=\cL(G_{i,T}\mid H_i=h,
a_i\ne\theta).\label{eq:local-branch-M}
\end{align}

\begin{proposition}[Local branch decomposition]\label{prop:local-branch}
For each realized pre-action history $h$ and each effort $e$,
\begin{equation}\label{eq:local-mixture}
  \cL(G_{i,T}\mid H_i=h,e_i=e)
  =
  \alpha_i^e(h)\nu_{i,G}^{C,h}
  +
  \bigl(1-
  \alpha_i^e(h)\bigr)\nu_{i,G}^{M,h}.
\end{equation}
Thus the conditional incentive distance is
\begin{equation}\label{eq:local-D}
  D_i(G_{i,T}\mid h)
  =
  \Delta_i(h)\,
  \TV\!\left(\nu_{i,G}^{C,h},\nu_{i,G}^{M,h}\right).
\end{equation}
If the settlement record includes $H_i$, and the distribution of $H_i$ is not affected by agent $i$'s effort, then
\begin{equation}\label{eq:local-integrated-D}
  D_i(H_i,G_{i,T})
  =
  \EE\!\left[
    \Delta_i(H_i)
    \TV\!\left(\nu_{i,G}^{C,H_i},\nu_{i,G}^{M,H_i}\right)
  \right].
\end{equation}
\end{proposition}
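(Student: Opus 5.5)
The plan is to rerun the argument of Theorem~\ref{thm:decomposition} conditionally on the pre-action history $h$, and then integrate over $h$ using the fact that $h$ is part of the record. Fix $h$ and $e$. By the law of total probability over the branch events $\{a_i=\theta\}$ and $\{a_i\ne\theta\}$, for every measurable $B$,
\[
  \PP(G_{i,T}\in B\mid h,e)
  =\PP(a_i=\theta\mid h,e)\,\PP(G_{i,T}\in B\mid h,e,a_i=\theta)
  +\PP(a_i\ne\theta\mid h,e)\,\PP(G_{i,T}\in B\mid h,e,a_i\ne\theta).
\]
The first factors are $\alpha_i^e(h)$ and $1-\alpha_i^e(h)$ by \eqref{eq:local-alpha}. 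The key step is to show that the conditional laws do not depend on $e$.

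For that step I would use the local exogeneity assumption: conditional on $(H_i,\theta,a_i)$, the future trace is independent of $e_i$. I also need that, within a branch, the conditional distribution of $(\theta,a_i)$ given $h$ does not depend on effort. Effort enters only through $\PP(a_i=\theta\mid h,e)$, so within the branch $a_i=\theta$ the split between $\theta=+1$ and $\theta=-1$ equals $\PP(\theta=\cdot\mid h)$. I would take this as implicit in the local technology, as the symmetric analogue of \eqref{eq:kernel-factorization}: agent $i$'s effort does not shift the posterior on $\theta$ given $h$ beyond the correctness event. Mixing the effort-free kernel over this effort-free distribution of $(\theta,a_i)$ within each branch gives $\nu^{C,h}_{i,G}$ and $\nu^{M,h}_{i,G}$, which proves \eqref{eq:local-mixture}. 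This is the step I expect to be the main obstacle: the exact primitive (a local kernel factorization in $(\theta,a_i)$ given $h$) must be stated, and I would state it explicitly as the maintained reading of the local exogeneity assumption.

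Given \eqref{eq:local-mixture}, subtracting the high- and low-effort mixtures gives
\[
  \PP(G\in B\mid h,H)-\PP(G\in B\mid h,L)=\Delta_i(h)\bigl[\nu^{C,h}(B)-\nu^{M,h}(B)\bigr].
\]
Since $\Delta_i(h)\ge0$, taking the supremum over $B$ yields \eqref{eq:local-D}.

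For \eqref{eq:local-integrated-D}, note that because the law of $H_i$ is effort-invariant, the joint laws of $(H_i,G_{i,T})$ under the two efforts share the marginal of $H_i$ and differ only in their conditional kernels. The total variation between two such joint laws equals $\EE[\TV(\text{conditionals given }H_i)]$. To see this, take the half-$L^1$ form: the density difference factorizes as the common marginal times the conditional difference. Equivalently, the optimal event can be chosen separately on each $h$-slice, exactly as in the proof of Proposition~\ref{prop:random-trace}. Measurability of the slice-wise optimal set follows from a Hahn decomposition of the signed measure. Substituting \eqref{eq:local-D} on each slice completes the proof.
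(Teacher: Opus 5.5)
Your proposal is correct and follows essentially the same route as the paper: condition on $h$, mix over the branches $\{a_i=\theta\}$ and $\{a_i\ne\theta\}$ with weights $\alpha_i^e(h)$ and $1-\alpha_i^e(h)$, subtract to get $\Delta_i(h)$ times the branch-law difference, and for the integrated formula choose the separating event slice by slice in $h$. The paper does that last step using finiteness of $H_i$, where you use a Hahn decomposition. Your explicit requirement that the within-branch law of $\theta$ given $h$ be effort-free, for instance state-independent accuracy in \eqref{eq:local-alpha}, is a step the paper's proof passes over with a ``hence,'' so stating it as a maintained assumption tightens the argument rather than departing from it.
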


\begin{proof}
Fix $h$ with $\PP(H_i=h)>0$. Conditional on $(H_i,\theta,a_i)$, the future trace is independent of $e_i$ by the local exogeneity assumption, and the continuation profile is fixed; hence the conditional law of $G_{i,T}$ given $\{H_i=h,\,a_i=\theta\}$, and likewise given $\{H_i=h,\,a_i\ne\theta\}$, does not depend on $e_i$. These are the branch laws \eqref{eq:local-branch-C}--\eqref{eq:local-branch-M}, and effort enters the conditional law of $G_{i,T}$ given $H_i=h$ only through the weights $\alpha_i^e(h)$ and $1-\alpha_i^e(h)$, which is \eqref{eq:local-mixture}. For any event $B$ in the record space, \eqref{eq:local-mixture} gives
\[
  \PP(G_{i,T}\in B\mid H_i=h,e_i=H)-\PP(G_{i,T}\in B\mid H_i=h,e_i=L)
  =
  \Delta_i(h)\bigl[\nu_{i,G}^{C,h}(B)-\nu_{i,G}^{M,h}(B)\bigr],
\]
and taking the supremum over $B$ yields \eqref{eq:local-D}. Finally, suppose the settlement record is the pair $(H_i,G_{i,T})$ and the law of $H_i$ is the same under both effort levels; write $\mu$ for that common law and recall that $H_i$, a finite history of binary actions, takes finitely many values. Any event $B'$ in the pair-record space decomposes across the cells $\{H_i=h\}$ with $h$-sections $B'_h$, so
\[
  \PP\bigl((H_i,G_{i,T})\in B'\mid e_i=H\bigr)-\PP\bigl((H_i,G_{i,T})\in B'\mid e_i=L\bigr)
  =
  \sum_h \mu(h)\,\Delta_i(h)\bigl[\nu_{i,G}^{C,h}(B'_h)-\nu_{i,G}^{M,h}(B'_h)\bigr].
\]
Each summand is maximized separately by choosing $B'_h$ to attain $\TV(\nu_{i,G}^{C,h},\nu_{i,G}^{M,h})$, and those choices assemble into a single event $B'$. The supremum of the left side over $B'$ therefore equals the right side of \eqref{eq:local-integrated-D}.
\end{proof}

Proposition~\ref{prop:local-branch} is the object needed to discuss multiple evaluated agents without choosing among all equilibria. Fix an intended profile in which the relevant agents exert high effort. For each agent $i$, the principal checks whether the record assigned to that agent generates enough local distance to cover $\kappa_i$. If agent $i$ is audited with probability $\pi_i$, and the record is otherwise uninformative about $e_i$, the maximal incentive gap is multiplied by $\pi_i$:
\begin{equation}\label{eq:audited-IC}
  \pi_i\wbar\,D_i(H_i,G_{i,T})\ge \kappa_i.
\end{equation}
Equation~\eqref{eq:audited-IC} is the standard one-deviation requirement for supporting the intended high-effort profile, holding the downstream continuation fixed; uniqueness of implementation is a separate question left open.

\begin{remark}[Endpoint formulas for later nodes]
For the initial analyst, symmetry reduces the endpoint records $(\theta,a_T)$ and $(a_0,a_T)$ to the simple repair probabilities studied in Section~\ref{sec:hidden}. For a later audited node, the public history $H_i$ may contain an asymmetric public belief, so the optimal process score need not be literal agreement. The robust object is the branch distance in \eqref{eq:local-D}. In symmetric local subproblems the same formulas reappear with $a_i$ replacing $a_0$; outside symmetry the bounded-wage optimal contract ranks the cells of $(H_i,a_i,a_T)$ by their likelihood ratio under high versus low effort.
\end{remark}

\section{Asymmetric priors and directional contestability}\label{sec:asymmetry}

The symmetric benchmark makes first--last agreement a sufficient process statistic. With an asymmetric prior, the sign of the initial report itself carries effort information, and the two report directions can differ in how contestable they are. The branch logic carries over; the canonical process score becomes a likelihood-ratio ranking of the four endpoint cells, with pooled agreement as its symmetric special case.

This section keeps the binary state and the memory-1 information structure but allows the prior to favor one state. Let
\begin{equation}\label{eq:asym-prior}
  p:=\PP(\theta=+1),
  \qquad
  \omega:=\log\frac{p}{1-p}
\end{equation}
be the prior log-odds. Under the high-effort belief, the initial report has log-likelihood ratio
\begin{equation}\label{eq:asym-mH}
  m_H:=\log\frac{\alpha_H}{1-\alpha_H}.
\end{equation}
Thus the public log-odds after report $+1$ and after report $-1$ are
\begin{equation}\label{eq:asym-uv}
  u_0:=\omega+m_H,
  \qquad
  v_0:=\omega-m_H.
\end{equation}
The two post-report beliefs are shifted together by the prior, while their distance $u_0-v_0=2m_H$ is the perceived informativeness of the initial report under the high-effort continuation.

The private signal support creates two frontiers, $-\Lb$ and $\Lb$. A post-report belief inside $(-\Lb,\Lb)$ is contestable: a sufficiently strong private signal can overturn it. A belief above $\Lb$ forces the next reviewer to choose $+1$, and a belief below $-\Lb$ forces the next reviewer to choose $-1$. Figure~\ref{fig:asym-regimes} summarizes the six possible positions of the ordered pair $(u_0,v_0)$.

\begin{figure}[htbp!]
\centering
\includegraphics[width=0.92\textwidth]{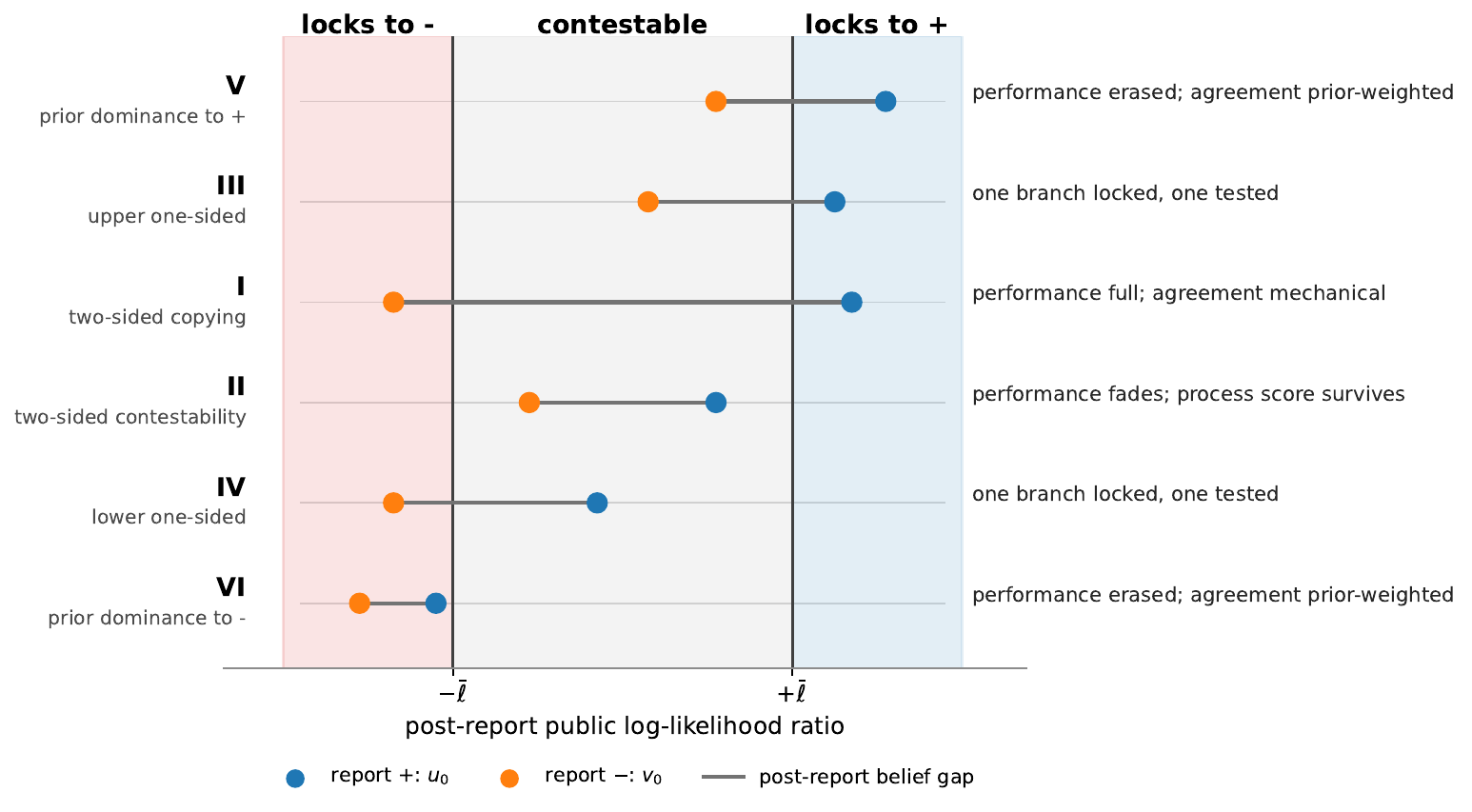}
\caption{Directional contestability under asymmetric priors. The horizontal axis is the public log-odds after observing the initial report. The two frontiers are the maximum private-signal strengths. Since $u_0>v_0$, the ordered pair can occupy only six of the nine cells generated by the three regions.}
\label{fig:asym-regimes}
\end{figure}

\begin{proposition}[Initial directional regimes]\label{prop:asym-regimes}
Assume $u_0>v_0$ and that private signals have support $[-\Lb,\Lb]$ with no atoms at the frontiers. Exactly one of the following six cases holds.

\begin{enumerate}[label=(\roman*),leftmargin=2.2em]
\item \emph{Two-sided copying.} If $u_0\ge\Lb$ and $v_0\le-\Lb$, report $+1$ is uncontestable upward and report $-1$ is uncontestable downward. The first report determines the downstream action direction. Terminal correctness preserves the initial-report branch, while first--last agreement is mechanical.

\item \emph{Two-sided contestability.} If $-\Lb<v_0<u_0<\Lb$, both report directions are contestable. In the symmetric case, this is the frontier regime studied above. In asymmetric chains that remain in the interior and converge to the two frontiers, terminal correctness loses first-report attribution direction by direction, while process cells retain directional branch information.

\item \emph{Upper one-sided contestability.} If $u_0\ge\Lb$ and $v_0\in(-\Lb,\Lb)$, a positive report is locked in but a negative report is tested. Attribution is partial and directional.

\item \emph{Lower one-sided contestability.} If $v_0\le-\Lb$ and $u_0\in(-\Lb,\Lb)$, a negative report is locked in but a positive report is tested. Attribution is again partial and directional.

\item \emph{Prior dominance toward state $+1$.} If $v_0\ge\Lb$, even a negative report leaves the public belief above the upper frontier. From the first downstream review on, both reports are pushed to action $+1$. Terminal correctness cannot distinguish the two initial-report directions, and agreement mostly rewards conformity with the prior-favored action rather than survival of scrutiny.

\item \emph{Prior dominance toward state $-1$.} If $u_0\le-\Lb$, even a positive report leaves the public belief below the lower frontier. From the first downstream review on, both reports are pushed to action $-1$. Terminal correctness again collapses the two initial-report directions, and agreement is prior-contaminated.
\end{enumerate}
\end{proposition}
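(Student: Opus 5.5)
The proof has two parts. First comes a purely combinatorial classification showing that the six cases partition the admissible region. Then, case by case, the memory-1 action rule is translated into a statement about which inherited actions can be overturned.

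For the partition, I place each of $u_0$ and $v_0$ in one of the three regions $(-\infty,-\Lb]$, $(-\Lb,\Lb)$, and $[\Lb,\infty)$. This gives nine ordered cells. The constraint $u_0>v_0$ eliminates exactly the three cells in which $v_0$ lies in a strictly higher region than $u_0$. The remaining six are the following.
\begin{itemize}
\item Both in the middle region: case (ii).
\item $u_0$ high and $v_0$ low: case (i).
\item $u_0$ high and $v_0$ middle: case (iii).
\item $u_0$ middle and $v_0$ low: case (iv).
\item Both high, i.e.\ $v_0\ge\Lb$: case (v). Here $u_0\ge\Lb$ is automatic.
\item Both low, i.e.\ $u_0\le-\Lb$: case (vi). Here $v_0\le-\Lb$ is automatic.
\end{itemize}
The regions are disjoint and exhaustive, so exactly one case holds. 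I will state the three excluded cells explicitly so that the claim "exactly one" is transparent and matches Figure~\ref{fig:asym-regimes}.

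Next I establish the one-step behavior. Under the high-effort interpretation, reviewer $1$ observes $a_0$ and $x_1$. She therefore chooses $+1$ if and only if $w+\ell_1\ge0$, where $w$ equals $u_0$ after $a_0=+1$ and $v_0$ after $a_0=-1$. This is the asymmetric analogue of \eqref{eq:action-rule}; it uses Bayes' rule with the prior log-odds $\omega$ and the report LLR $m_H$. Since $\ell_1\in[-\Lb,\Lb]$ with no atoms at the endpoints, there are three possibilities.
\begin{itemize}
\item If $w\ge\Lb$, then $a_1=+1$ almost surely.
\item If $w\le-\Lb$, then $a_1=-1$ almost surely.
\item If $w\in(-\Lb,\Lb)$, both actions have positive probability, because $h>0$ on the open support.
\end{itemize}
Knife-edge ties have probability zero, which is exactly where the no-atom assumption is used. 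This step makes the labels "uncontestable", "contestable", and "pushed to action $\pm1$" precise for each report direction.

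The interpretive consequences then follow from results already in the paper. In case (i) the copying argument of Corollary~\ref{cor:regimes}(ii) applies direction by direction. Once $a_1=a_0$, the inherited belief is at least as extreme as before, so the chain continues to copy forever. Hence $a_T=a_0$ almost surely. The agreement event then has probability one under both efforts and is mechanical, so its distance is zero. Meanwhile $(\theta,a_T)=(\theta,a_0)$, which is the direct-audit record of Corollary~\ref{cor:direct-audit}. In cases (v) and (vi), $a_1$ is constant in $a_0$, and by induction so is every later action. So $a_T$ carries no information about the initial branch beyond the prior. Conditional on $a_0$, agreement reduces to the event $a_0=+1$ (respectively $-1$), which is why I describe it as prior-contaminated. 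In cases (iii) and (iv), one direction copies and the other is tested, so the branch laws of Theorem~\ref{thm:decomposition} differ only through the contestable direction.

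The main obstacle is the qualitative claim in case (ii) about asymmetric chains that "converge to the two frontiers". A full proof would require a two-dimensional analogue of Proposition~\ref{prop:learning-map} and Corollary~\ref{cor:washout} for the pair of public LLRs. Since the statement is conditional on the chain remaining interior and converging, I plan to prove only the conditional version. Under that hypothesis, the per-direction influence coefficients are bounded away from one along the path. A product argument like \eqref{eq:d-product} then sends the terminal-correctness branch difference to zero, while agreement cells keep distinct branch laws because the preservation and repair probabilities converge to positive values.
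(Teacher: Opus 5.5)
Your partition of the nine cells and your one-step analysis of reviewer~$1$'s rule $w+\ell_1\ge0$ are exactly the paper's proof, and they establish the proposition to the extent the paper proves it. Your case-by-case readings for (i) and (iii)--(vi) are also fine. In (v)--(vi), state explicitly that once $a_1$ is constant it is uninformative, so later reviewers act on the prior log-odds $\omega$; since $\omega>v_0\ge\Lb$ (respectively $\omega<u_0\le-\Lb$), the induction goes through.

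The gap is in the conditional argument you sketch for case (ii). Write $u_t,v_t$ for the public log-odds after $a_t=+1$ and $a_t=-1$. The slope of the per-state affine recursion for $\PP(a_{t+1}=+1\mid\theta)$ is the probability that the inherited action is pivotal, $\PP(-u_t\le\ell_{t+1}<-v_t\mid\theta)$. This tends to one as $(u_t,v_t)\to(\Lb,-\Lb)$. So under your own hypothesis the influence coefficients are \emph{not} bounded away from one; they converge to one. The symmetric special case already shows this. There $\lambda_t=H(s_t)-H(-s_t)\to H(\Lb)-H(-\Lb)=1$, and under Assumption~\ref{ass:frontier-density} the proof of Proposition~\ref{prop:rate} gives $\lambda_t=1-2/t+O(\log t/t^2)$ and $\Pi_T=\Theta(T^{-2})$. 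No product of factors bounded below one could produce that polynomial rate.

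Washout therefore requires $\Pi_T\to0$ even though $\lambda_t\to1$, i.e.\ $\sum_t(1-\lambda_t)=\infty$, and that is the real content of the claim. In the symmetric benchmark it comes from the mean-value comparison in the proof of Corollary~\ref{cor:washout}. With $\delta_t=\rb-\hat r_t$ and $F'$ increasing, $\lambda_t\le\delta_{t+1}/\delta_t$, so $\Pi_T\le\delta_T/\delta_0\to0$: the on-path approach to the frontier itself drives the product to zero. An asymmetric version needs a two-dimensional analogue of that comparison for each state's slope, not convergence plus a uniform bound.

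Similarly, preservation and repair converging to positive values does not keep agreement informative. You need the limiting agreement-cell probabilities to differ across branches; in the symmetric case this is $S_\infty+R_\infty-1=2\rb-1>0$, not mere positivity of $S_\infty,R_\infty$. The paper's own proof stops at the one-step classification and treats the asymptotic sentence in (ii) as interpretation. Either do the same, or supply the rate-type comparison described above.
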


The classification highlights the difference between copying and prior dominance. In two-sided copying, the two reports are locked into opposite actions, so terminal correctness retains the initial-report branch. In prior dominance, both reports are locked into the same action, so terminal correctness erases that branch. Both regimes make private signals irrelevant, but they have opposite implications for attribution.

The contracting object is the endpoint record itself, cell by cell. Let
\begin{equation}\label{eq:asym-endpoint-record}
  G_T^{A}:=(a_0,a_T)
\end{equation}
be the process endpoint record and let
\begin{equation}\label{eq:asym-cell-delta}
  \delta_{st}:=\PP^H(a_0=s,a_T=t)-\PP^L(a_0=s,a_T=t),
  \qquad s,t\in\{-1,+1\}.
\end{equation}
The sign and magnitude of $\delta_{st}$ tell the principal whether a cell is favorable or unfavorable evidence about high effort.

\begin{proposition}[Cell-specific process scores]\label{prop:asym-cells}
For the finite endpoint record $G_T^A=(a_0,a_T)$, the maximum bounded incentive gap equals
\begin{equation}\label{eq:asym-cell-gap}
  \wbar\sum_{s,t:\,\delta_{st}>0}\delta_{st}.
\end{equation}
A gap-maximizing bounded contract pays the cap on cells with $\delta_{st}>0$ and zero on cells with $\delta_{st}<0$. A minimum-cost bounded contract pays first on positive cells with the smallest ratio
\begin{equation}\label{eq:asym-cell-ratio}
  \frac{\PP^H(a_0=s,a_T=t)}{\delta_{st}}.
\end{equation}
\end{proposition}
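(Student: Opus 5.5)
The plan is to treat Proposition~\ref{prop:asym-cells} as a direct specialization of Propositions~\ref{prop:maxgap} and~\ref{prop:finite-contract} to the four-cell record $G_T^A=(a_0,a_T)$. Nothing in those results used sign symmetry or the symmetric prior. Proposition~\ref{prop:maxgap} needs only that $\EE^H[w(G)]-\EE^L[w(G)]$ is linear in $w$, and Proposition~\ref{prop:finite-contract} is a fractional-knapsack argument on an arbitrary finite record. Both therefore carry over verbatim once $p_i^e$ is reinterpreted as $\PP^e(a_0=s,a_T=t)$ under the asymmetric prior, with the continuation still fixed at $\sigH$.

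\emph{Step 1: the maximum gap.} For any wage $w$ on the four cells,
\[
  \EE^H[w]-\EE^L[w]=\sum_{s,t} w_{st}\,\delta_{st}.
\]
Under $0\le w_{st}\le\wbar$, this linear objective is maximized cell by cell. Set $w_{st}=\wbar$ where $\delta_{st}>0$, set $w_{st}=0$ where $\delta_{st}<0$, and leave zero cells arbitrary. This yields \eqref{eq:asym-cell-gap}. Since $\sum_{s,t}\delta_{st}=0$, the positive part equals $\TV(\cL^H(G_T^A),\cL^L(G_T^A))$, so the result agrees with $\wbar D(G_T^A)$ in Proposition~\ref{prop:maxgap}. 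The cap-on-positive-cells contract attains the supremum, which gives the second sentence.

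\emph{Step 2: the minimum-cost contract.} Minimize $\sum_{s,t} w_{st}\PP^H(a_0=s,a_T=t)$ subject to $\sum_{s,t} w_{st}\delta_{st}\ge\kappa$ and $0\le w_{st}\le\wbar$. First, paying on a cell with $\delta_{st}\le0$ raises cost weakly and lowers the incentive gap weakly. Moving such payment to zero therefore preserves feasibility and weakly lowers cost, so only positive cells are paid. On positive cells the problem is a fractional knapsack: each unit of incentive bought on cell $(s,t)$ costs $\PP^H(a_0=s,a_T=t)/\delta_{st}$. The standard exchange argument then applies. If a higher-ratio cell receives payment while a lower-ratio cell sits below the cap, shift $\epsilon/\delta$ of incentive between them. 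This keeps the IC constraint and weakly reduces cost. Hence the optimum fills cells in increasing order of \eqref{eq:asym-cell-ratio}, and the ordering is exactly that of Proposition~\ref{prop:finite-contract}.

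The only point needing care, and the place I expect the main (mild) obstacle, is justifying that the primitive objects are well defined without symmetry. Concretely, $\PP^e$ must still be the law under the fixed continuation $\sigH$ built from the asymmetric-prior thresholds. I would also flag that under an asymmetric prior a cell's ratio can vanish if $\PP^H$ of that cell is zero. This is harmless: such a cell has $\delta_{st}\le 0$ and is never positive. Apart from this, no branch decomposition is needed, since the claim is purely about finite-record contracting.
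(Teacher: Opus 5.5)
Your proposal is correct and follows the paper's proof: both specialize Propositions~\ref{prop:maxgap} and~\ref{prop:finite-contract} to the four endpoint cells. The maximum gap is read off from the positive $\delta_{st}$, whose sum equals the total-variation distance because the four $\delta_{st}$ sum to zero, and the minimum-cost ordering comes from the fractional-knapsack exchange argument, with your direct cell-by-cell maximization of the linear objective simply a more explicit version of the paper's set-$B$ argument.
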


\begin{proof}
Apply Proposition~\ref{prop:finite-contract} to the four cells of $G_T^A$. For any set $B$ of endpoint cells,
\[
  \PP^H(G_T^A\in B)-\PP^L(G_T^A\in B)
  =\sum_{(s,t)\in B}\delta_{st},
\]
and because the four $\delta_{st}$ sum to zero, the difference is maximized by collecting exactly the cells with $\delta_{st}>0$; scaling by the cap gives the maximum bounded incentive gap \eqref{eq:asym-cell-gap}, attained by paying $\wbar$ on those cells and zero elsewhere. The minimum-cost contract follows from the fractional-knapsack argument of Proposition~\ref{prop:finite-contract}: among positive cells, pay first on those with the lowest high-effort expected cost per unit of incentive, the ratio in \eqref{eq:asym-cell-ratio}.
\end{proof}

Under symmetry, the two agreement cells have the same sign and the two disagreement cells have the opposite sign, so the likelihood-ratio ranking collapses to the event $a_0=a_T$. Under asymmetric priors, the four cells need not collapse in this way. A contrarian report that survives review may be stronger evidence of effort than a popular report that survives review; a popular report that is overturned may be especially unfavorable. Agreement is thus the symmetric special case of a general cell-by-cell process score.

\section{Discussion}\label{sec:discussion}

A review process improves the final decision by correcting mistaken initial reports, and the better it does this, the less it reveals about the analyst who filed the first one. As the repair probability $R_T$ rises, the correct and mistaken rates converge to the same terminal accuracy, so the outcome record no longer reflects the analyst's hidden effort. The same rise makes first--last
agreement informative, since a mistaken report is now unlikely to survive review. Organizational correction and individual attribution move in opposite directions, and this is what turns record-keeping into part of incentive design: which trace an organization retains, and which one it pays on, remains open even once the quality of review is fixed. The rest of this section develops that point as an empirical diagnostic, under behavioral reviewers, and at the boundaries of the model.

\subsection{An empirical diagnostic}

The branch representation suggests a direct empirical diagnostic when the state is eventually observed in a validation sample. For a fixed review architecture and horizon, one can estimate
\[
  S_T=\PP(a_T=\theta\mid a_0=\theta),
  \qquad
  R_T=\PP(a_T=\theta\mid a_0\ne\theta)
\]
from cases in which both the initial and terminal actions can be linked to a realized state. These estimates immediately imply the relative endpoint strengths
\[
  |S_T-R_T|
  \qquad\text{and}\qquad
  |S_T+R_T-1|,
\]
without estimating the full structural learning model. In the positive-orientation region, checking whether $R_T$ exceeds one half identifies which endpoint record requires the smaller success bonus.

Estimating the absolute implementation requirement additionally needs the effort effect $\Del$ and the incremental cost $\kappa$. Those objects may come from an effort intervention, a change in investigation resources, or a calibrated production technology. The formulas are then operational:
\[
  \wbar_{\min}^\theta
  =\frac{\kappa}{\Del|S_T-R_T|},
  \qquad
  \wbar_{\min}^A
  =\frac{\kappa}{\Del|S_T+R_T-1|}.
\]
Carrying them across architectures requires re-estimating the branch probabilities, which are equilibrium objects under the specified continuation rule.

\subsection{Behavioral social learning}
The benchmark assumes reviewers read predecessors' actions with correct Bayesian logic. Laboratory evidence is more nuanced: \citet{AngrisaniGuarinoJehielKitagawa2021} find that subjects weight their own signal correctly but discount the information in predecessors' actions, a pattern they read as relative overconfidence in one's own signal over others'. 

Such behavior amounts to a fixed continuation rule with a constant discount on inherited evidence, and the branch decomposition prices it through its effect on the preservation and repair probabilities $(S_T, R_T)$. Discounting predecessors makes reviewers more likely to override an inherited action when they receive their own signal, which raises the repair probability $R_T$. By Theorem~\ref{thm:repair-threshold}, higher $R_T$ makes the first--last agreement contract dominate and accelerates the washing-out of outcome-based attribution. Such downstream overconfidence pushes organizations toward a regime in which reducing herding improves final decisions while eroding, faster, the record that ties those decisions to early effort. The divergence between decision quality and attribution extends beyond the Bayesian benchmark.

\subsection{Scope}\label{subsec:scope}
The results hold within four boundaries. First, the initial report is the recorded output of the investigation technology; a model in which the analyst could revise it after seeing her signal would need honest reporting to be incentive-compatible as well, bringing in scoring rules and elicitation. Second, once the state and initial report are fixed, the downstream record is the same under high and low effort, which rules out side traces such as raw notes or confidence scores that would themselves reveal effort; admitting them calls for a multibranch version of Theorem~\ref{thm:decomposition}. Third, reviewers are taken as given---they match the state and receive their information exogenously---so endogenizing review effort and the review structure around it is left open, though Section~\ref{sec:trace} does hold one audited reviewer to an incentive constraint. Fourth, the benchmark is binary and symmetric, which is what buys the two-number summary $(S_T, R_T)$; under an asymmetric prior or signal, the direction of the initial report itself signals effort and the best wage varies across endpoint cells, so the branch decomposition continues to apply while the two-number representation gives way to the likelihood-ratio cell ranking that Section~\ref{sec:asymmetry} develops for the prior.

\section{Conclusion}\label{sec:conclusion}

This paper places a hidden-action problem at the head of a sequential review chain and works out what the chain's records can still say about the effort behind the first report. The central tool is an accounting identity: every downstream record earns its incentive value from the separation it preserves between review histories that began with a correct report and histories that began with a mistaken one. For the two canonical settlement records, that separation reduces to the preservation and repair probabilities $(S_T,R_T)$, and the repair probability alone ranks the records---once review repairs a mistaken report more often than not, paying on first--last agreement requires a smaller bonus than paying on terminal correctness.

The memory-1 benchmark converts this accounting into horizon dynamics. When the initial report is contestable, public confidence climbs toward the copying frontier, the outcome bonus grows quadratically in the review horizon until any fixed wage cap makes outcome evaluation infeasible, and the agreement bonus falls toward a finite limit; the preferred endpoint record switches exactly once, at the horizon where the repair probability crosses one half. When the initial report is strong enough to be copied outright, the same accounting reverses the ranking: the terminal action inherits the full effort imprint and agreement is degenerate.

Beyond the benchmark, the same branch decomposition organizes each extension. Retained traces and audited intermediate nodes are priced by the local version of the identity, one deviation at a time; reviewers who discount their predecessors enter through the preservation and repair probabilities they induce; and asymmetric priors replace pooled agreement with a likelihood-ratio ranking of endpoint cells, of which agreement is the symmetric special case. The objects at the center of the analysis---$S_T$, $R_T$, and the primitive effort effect $\Del$---are conditional probabilities that a validation sample can estimate, so the theory doubles as a diagnostic for choosing what a review process should archive and what it should pay on.

\appendix
\begin{appendices}

\section{Proofs for Section~\ref{sec:records}}\label{app:record-proofs}

\begin{proof}[Proof of Proposition~\ref{prop:maxgap}]
Let $P:=\cL^H(G)$ and $Q:=\cL^L(G)$ on the record space $\Omega_G$, and let $\eta:=P-Q$, a finite signed measure with $\eta(\Omega_G)=0$. By the Hahn decomposition there is a measurable set $A\subseteq\Omega_G$ on whose subsets $\eta$ is nonnegative and on whose complement's subsets $\eta$ is nonpositive; the Jordan parts are $\eta^+(\cdot)=\eta(\cdot\cap A)$ and $\eta^-(\cdot)=-\eta(\cdot\cap A^c)$. Because $\eta(\Omega_G)=0$, the two parts have equal total mass, and for any event $B$,
\[
  \eta(B)\le\eta(B\cap A)\le\eta(A)=\eta^+(\Omega_G),
\]
with equality at $B=A$; since $\eta(B^c)=-\eta(B)$, the same value bounds $|\eta(B)|$, so
\[
  \eta^+(\Omega_G)=\eta^-(\Omega_G)=\sup_B\,|P(B)-Q(B)|=\TV(P,Q).
\]
Now take any wage $w$ with $0\le w\le\wbar$. Then
\[
  \EE^H[w(G)]-\EE^L[w(G)]
  =\int w\,d\eta
  =\int w\,d\eta^+-\int w\,d\eta^-
  \le\wbar\,\eta^+(\Omega_G)-0
  =\wbar\,\TV(P,Q),
\]
and the wage $w^*:=\wbar\1_A$ attains the bound. This proves \eqref{eq:maxgap}: the gap-maximizing contract pays the cap exactly on the region of record values that are relatively more likely after high effort and pays zero elsewhere. Since the incentive constraint \eqref{eq:IC} requires the gap to reach $\kappa$, and the supremum of the gap over bounded wages is attained, some bounded wage on $G$ implements high effort if and only if $\wbar D(G)\ge\kappa$, which is \eqref{eq:feasibility}.
\end{proof}

\begin{proof}[Proof of Proposition~\ref{prop:finite-contract}]
A wage on $G$ is a vector $(w_i)_{i\le n}$ with $0\le w_i\le\wbar$; its incentive gap is $\sum_i w_i\delta_i$ and its high-effort expected cost is $\sum_i w_i p_i^H$. In any minimum-cost wage, $w_i=0$ whenever $\delta_i\le0$: lowering such a coordinate weakly raises the gap and weakly lowers cost, strictly so unless $p_i^H=0$, in which case the coordinate can be set to zero without affecting either side. The problem is therefore the linear program
\[
  \min_{0\le w_i\le\wbar}\ \sum_{i\in I_+}w_i\,p_i^H
  \qquad\text{subject to}\qquad
  \sum_{i\in I_+}w_i\,\delta_i\ \ge\ \kappa.
\]
For feasibility, note that for any set $B$ of cells, $\PP^H(G\in B)-\PP^L(G\in B)=\sum_{i\in B}\delta_i$, and because $\sum_{i\le n}\delta_i=0$, the sum is maximized by collecting exactly the positive cells; hence the maximal bounded gap is $\wbar\sum_{i\in I_+}\delta_i=\wbar D(G)$, and the program is feasible if and only if \eqref{eq:feasibility} holds.

Substituting $y_i:=w_i\delta_i$ for $i\in I_+$ turns the program into a continuous knapsack: minimize $\sum_i\rho_i y_i$ subject to $\sum_i y_i\ge\kappa$ and $0\le y_i\le\wbar\delta_i$, with unit costs $\rho_i=p_i^H/\delta_i$. Order the cells so that $\rho_1\le\cdots\le\rho_m$. If a feasible solution left slack on some cell $j$ while paying on a cell $k$ with $\rho_k>\rho_j$, transferring incentive mass $\varepsilon>0$ from $k$ to $j$ would preserve the constraint and change cost by $\varepsilon(\rho_j-\rho_k)\le0$, strictly if $\rho_j<\rho_k$; iterating, some optimal solution fills capacity in increasing order of $\rho_i$ until the constraint binds. Translating back through $w_i=y_i/\delta_i$ gives \eqref{eq:finite-optimal-wage}. If several cells share the cutoff ratio $\rho_k$, transfers among them move cost at the common rate $\rho_k$ per unit of incentive, so payments can be redistributed across the tied cells without changing cost.
\end{proof}

\begin{proof}[Proof of Theorem~\ref{thm:decomposition}]
Every object verifiable at settlement is generated by the state, the initial report, and the downstream variables collected in $Z_T$, so the record can be written $G=g(\theta,a_0,Z_T)$ for a measurable function $g$. Fix an event $B$ in the record space and let $A_B:=g^{-1}(B)$. By Assumption~\ref{ass:exogeneity}, conditional on $(\theta,a_0)$ the law of $Z_T$ is the kernel $K(\cdot\mid\theta,a_0)$ under either effort level, so
\[
  \PP^e\bigl(G\in B\mid\theta,a_0\bigr)
  =K\bigl(\{z:(\theta,a_0,z)\in A_B\}\,\big|\,\theta,a_0\bigr)
\]
carries no dependence on $e$. Under the symmetric technology \eqref{eq:initial-tech}, the conditional law of the pair $(\theta,a_0)$ within either branch is also effort free: on $C$ it places probability one half on each of $(+1,+1)$ and $(-1,-1)$, and on $M$ probability one half on each of $(+1,-1)$ and $(-1,+1)$. Combining the two facts, $\PP^e(G\in B\mid C)=\nu_G^C(B)$ and $\PP^e(G\in B\mid M)=\nu_G^M(B)$ for both effort levels. Effort therefore enters the law of $G$ only through the branch weights:
\[
  \PP^e(G\in B)
  =\alpha_e\,\nu_G^C(B)+(1-\alpha_e)\,\nu_G^M(B),
\]
which is \eqref{eq:mixture}. Subtracting the two effort levels gives \eqref{eq:event-gap} for every event $B$, and taking the supremum over $B$ yields \eqref{eq:D-decomp}. The bound \eqref{eq:D-upper} follows because the total-variation distance between two probability measures is at most one.
\end{proof}

\begin{proof}[Proof of Proposition~\ref{prop:data-processing}]
Write $G'=g(G)$ for a measurable $g$. For any event $B$ in the range of $G'$, the preimage $A_B:=g^{-1}(B)$ satisfies $\{G'\in B\}=\{G\in A_B\}$, so
\[
  \bigl|\nu_{G'}^C(B)-\nu_{G'}^M(B)\bigr|
  =\bigl|\nu_{G}^C(A_B)-\nu_{G}^M(A_B)\bigr|
  \le\TV\!\left(\nu_G^C,\nu_G^M\right).
\]
Taking the supremum over $B$ gives the branch-distance comparison; multiplying by $\Del$ and using \eqref{eq:D-decomp} gives \eqref{eq:data-processing}. Every event the coarse record can price is an event the fine record already prices, so coarsening can only remove separating events.
\end{proof}

\section{Proofs for Section~\ref{sec:memory1}}\label{app:memory-proofs}

\begin{proof}[Proof of Lemma~\ref{lem:llr-identity}]
Let $m$ be a common reference measure for the signal densities $f_+$ and
$f_-$. Let $h_+$ and $h_-$ be the densities of the random variable $\ell_t$
under states $+1$ and $-1$, respectively. By definition,
\[
  \ell(x)=\log\frac{f_+(x)}{f_-(x)},
  \qquad\text{so}\qquad
  f_-(x)=e^{-\ell(x)}f_+(x).
\]
For every bounded measurable function $\varphi$,
\begin{align*}
  \EE[\varphi(\ell_t)\mid\theta=-1]
  &=
  \int \varphi(\ell(x)) f_-(x)\,m(dx)\\
  &=
  \int \varphi(\ell(x)) e^{-\ell(x)} f_+(x)\,m(dx)\\
  &=
  \EE[\varphi(\ell_t)e^{-\ell_t}\mid\theta=+1].
\end{align*}
Hence the law of $\ell_t$ under $\theta=-1$ is the law under $\theta=+1$
reweighted by $e^{-\ell_t}$. In density form,
\[
  h_-(z)=e^{-z}h_+(z)
\]
for almost every $z$. Since the densities are continuous on $(-\Lb,\Lb)$, the
identity holds for every $z$ in that interval. Sign symmetry gives
$h_-(z)=h_+(-z)$. Therefore, writing $h=h_+$,
\[
  h(-z)=h_+(-z)=h_-(z)=e^{-z}h_+(z)=e^{-z}h(z),
\]
as required.
\end{proof}

\begin{proof}[Proof of Lemma~\ref{lem:action-symmetry}]
All probabilities are under the on-path law $\PP^H$. The proof establishes, by induction on $t$, that
\[
  \PP^H(a_t=+1)=\tfrac12
  \qquad\text{and}\qquad
  \PP^H(\theta=a_t\mid a_t)=\hat r_t.
\]
At $t=0$, the symmetric prior and the symmetric technology \eqref{eq:initial-tech} give $\PP^H(a_0=+1)=\tfrac12$ and, by Bayes' rule,
\[
  \PP^H(\theta=+1\mid a_0=+1)
  =\frac{\alpha_H\cdot\tfrac12}{\tfrac12}
  =\alpha_H=\hat r_0,
\]
with the symmetric computation after $a_0=-1$.

For the inductive step, suppose the claim holds at date $t$. Reviewer $t+1$ observes $(a_t,\ell_{t+1})$, where $\ell_{t+1}$ is independent of $(a_t,a_0,\ldots)$ conditional on $\theta$, and chooses $a_{t+1}$ by the threshold rule \eqref{eq:action-rule} with threshold $s_t$ generated by $\hat r_t$. Consider the sign-flip map $(\theta,a_t,\ell_{t+1})\mapsto(-\theta,-a_t,-\ell_{t+1})$. It preserves the joint on-path law of $(\theta,a_t,\ell_{t+1})$: the prior is symmetric, the inductive hypothesis makes the law of $(\theta,a_t)$ invariant, and the experiment satisfies \eqref{eq:signal-symmetry}. The threshold rule \eqref{eq:action-rule} is odd under the map, so applying it flips $a_{t+1}$. Invariance of the law together with the flip of $a_{t+1}$ yields
\[
  \PP^H(a_{t+1}=+1)=\tfrac12
  \qquad\text{and}\qquad
  \PP^H(\theta=+1\mid a_{t+1}=+1)=\PP^H(\theta=-1\mid a_{t+1}=-1).
\]
Writing $x$ for the common conditional value and averaging over the two equally likely realizations of $a_{t+1}$,
\[
  \PP^H(\theta=a_{t+1})
  =\tfrac12\,x+\tfrac12\,x=x,
\]
so $x=\hat r_{t+1}$ by the definition \eqref{eq:rhat}, and \eqref{eq:posterior-correctness} holds at $t+1$.

Finally, for an observer whose information is $a_t$ alone, prior odds are even, so Bayes' rule gives log posterior odds for the event $\theta=+1$ equal to
\[
  \log\frac{\PP^H(\theta=+1\mid a_t)}{\PP^H(\theta=-1\mid a_t)}
  =a_t\log\frac{\hat r_t}{1-\hat r_t}
  =a_t s_t,
\]
which is the public log-likelihood ratio claimed in \eqref{eq:s-def}.
\end{proof}

\section{Proofs for Section~\ref{sec:hidden}}\label{app:hidden-proofs}

\begin{proof}[Proof of Proposition~\ref{prop:hidden-recursion}]
Under either actual effort, downstream reviewers use the threshold $s_t$ generated by the high-effort on-path belief path. Conditional on a correct predecessor, the next action is correct with probability $1-H(-s_t)$. Conditional on an incorrect predecessor, it is correct with probability $1-H(s_t)$. Therefore
\begin{align*}
  r_{t+1}(e)
  &=r_t(e)[1-H(-s_t)]
    +[1-r_t(e)][1-H(s_t)]\\
  &=1-H(s_t)+[H(s_t)-H(-s_t)]r_t(e)\\
  &=b_t+\lambda_t r_t(e).
\end{align*}
Subtracting the recursions for $H$ and $L$ gives
\[
  d_{t+1}=\lambda_t d_t.
\]
Since $d_0=\Del$, iteration gives \eqref{eq:d-product}.
\end{proof}

\begin{proof}[Proof of Corollary~\ref{cor:washout}]
Let $\delta_t:=\rb-\hat r_t$. Corollary~\ref{cor:regimes} gives $\delta_t\downarrow0$. Since $F(\rb)=\rb$, the mean value theorem gives
\[
  \delta_{t+1}
  =\rb-F(\hat r_t)
  =F'(\xi_t)\delta_t
\]
for some $\xi_t\in(\hat r_t,\rb)$. Because $F'$ is increasing,
\[
  \lambda_t=F'(\hat r_t)
  \le F'(\xi_t)
  =\frac{\delta_{t+1}}{\delta_t}.
\]
Multiplying from $t=0$ to $T-1$ yields
\[
  \Pi_T\le\frac{\delta_T}{\delta_0}\to0.
\]
Each $\lambda_t$ lies strictly between zero and one, so $\Pi_T$ is strictly decreasing. Equation \eqref{eq:d-product} gives the same conclusions for $d_T$.
\end{proof}

\begin{proof}[Proof of Proposition~\ref{prop:SR-memory1}]
Conditional on either initial branch, terminal correctness follows the affine recursion \eqref{eq:actual-recursion}. Let $S_t$ and $R_t$ denote the two branch paths at date $t$. Their initial values are
\[
  S_0=1,
  \qquad
  R_0=0,
  \qquad
  \hat r_0=\alpha_H.
\]
Subtracting the on-path recursion from the branch recursions gives
\[
  S_{t+1}-\hat r_{t+1}
  =\lambda_t(S_t-\hat r_t),
\]
and
\[
  R_{t+1}-\hat r_{t+1}
  =\lambda_t(R_t-\hat r_t).
\]
Iteration yields
\[
  S_T-\hat r_T=(1-\alpha_H)\Pi_T,
  \qquad
  R_T-\hat r_T=-\alpha_H\Pi_T,
\]
which are \eqref{eq:S} and \eqref{eq:R}. Adding and subtracting gives \eqref{eq:SminusR} and \eqref{eq:SplusR}.
\end{proof}

\begin{proof}[Proof of Theorem~\ref{thm:monotonicity}]
Part (i) follows because every $\lambda_t\in(0,1)$ and $\Pi_0=1$.

For part (ii), use \eqref{eq:R}:
\begin{align*}
  R_{T+1}-R_T
  &=(\hat r_{T+1}-\hat r_T)
    +\alpha_H(\Pi_T-\Pi_{T+1})>0.
\end{align*}
Both terms are strictly positive. Since $\hat r_T\to\rb$ and $\Pi_T\to0$, $R_T\to\rb$.

For part (iii), write \eqref{eq:SplusR} as
\[
  S_T+R_T-1
  =2\hat r_T-1-(2\alpha_H-1)\Pi_T.
\]
Hence
\begin{align*}
 &(S_{T+1}+R_{T+1}-1)-(S_T+R_T-1)\\
 &\qquad=2(\hat r_{T+1}-\hat r_T)
 +(2\alpha_H-1)(\Pi_T-\Pi_{T+1})>0.
\end{align*}
The initial value is zero, and the limit is $2\rb-1$.
\end{proof}

\begin{proof}[Proof of Proposition~\ref{prop:endpoint-D}]
Under effort $e$, terminal correctness occurs with probability
\[
  p_e^\theta
  =\alpha_e S_T+(1-\alpha_e)R_T.
\]
By sign symmetry, the two correct cells of $(\theta,a_T)$ each have probability $p_e^\theta/2$ and the two incorrect cells each have probability $(1-p_e^\theta)/2$. The total-variation distance between the high- and low-effort four-cell laws is therefore
\[
  |p_H^\theta-p_L^\theta|
  =\Del|S_T-R_T|.
\]

Agreement occurs with probability
\[
  p_e^A
  =\alpha_e S_T+(1-\alpha_e)(1-R_T),
\]
because agreement on the incorrect-initial-report branch means terminal error. Sign symmetry again reduces the four-cell pair law to agreement versus disagreement, so
\[
  D(a_0,a_T)
  =|p_H^A-p_L^A|
  =\Del|S_T+R_T-1|.
\]
The memory-1 formulas follow from Proposition~\ref{prop:SR-memory1} and positivity in Theorem~\ref{thm:monotonicity}.
\end{proof}

\begin{proof}[Proof of Proposition~\ref{prop:endpoint-costs}]
Under the positive orientation, the effort differences in the favorable-event probabilities are
\[
  p_H^\theta-p_L^\theta=\Del(S_T-R_T)
\]
and
\[
  p_H^A-p_L^A=\Del(S_T+R_T-1).
\]
Corollary~\ref{cor:binary-cost} gives the two bonuses and expected costs. On path,
\[
  p_H^\theta
  =\alpha_HS_T+(1-\alpha_H)R_T
  =\hat r_T,
\]
which gives the second expression in \eqref{eq:C-out}.

It remains to justify optimality on the original four-cell records. Under sign symmetry, the two correctness cells of $(\theta,a_T)$ have identical high- and low-effort probabilities, as do the two error cells. Averaging wages within each pair preserves both expected cost and the incentive gap. Thus an optimal wage may depend only on correctness. The same argument reduces $(a_0,a_T)$ to agreement versus disagreement. Corollary~\ref{cor:binary-cost} is therefore globally optimal on each endpoint record.
\end{proof}

\begin{proof}[Proof of Theorem~\ref{thm:repair-threshold}]
Using Proposition~\ref{prop:endpoint-D},
\begin{align*}
  D(a_0,a_T)-D(\theta,a_T)
  &=\Del[(S_T+R_T-1)-(S_T-R_T)]\\
  &=\Del(2R_T-1).
\end{align*}
This proves the distance ranking. Since the required bonuses are the inverses of the two positive incentive gaps, the same threshold gives the bonus ranking.

For expected costs, subtract \eqref{eq:C-out} from \eqref{eq:C-agree}. Direct simplification gives
\[
  C_T^A-C_T^\theta
  =-
  \frac{\kappa S_T(2R_T-1)}
       {\Del(S_T-R_T)(S_T+R_T-1)}.
\]
All factors other than $2R_T-1$ are positive, proving \eqref{eq:cost-ranking}.
\end{proof}

\begin{proof}[Proof of Corollary~\ref{cor:unique-switch}]
Theorem~\ref{thm:monotonicity} gives $R_0=0$, strict monotonicity, and $R_T\to\rb>1/2$. Hence the set in \eqref{eq:switch-horizon} is nonempty and has a finite minimum. The ranking before, at, and after that date follows from Theorem~\ref{thm:repair-threshold}.
\end{proof}

\begin{proof}[Proof of Theorem~\ref{thm:duality}]
In the frontier regime, Corollary~\ref{cor:regimes}, Corollary~\ref{cor:washout}, and Proposition~\ref{prop:SR-memory1} give
\[
  \hat r_T\to\rb,
  \qquad
  \Pi_T\to0,
  \qquad
  S_T,R_T\to\rb.
\]
Proposition~\ref{prop:endpoint-D} then yields \eqref{eq:out-dies} and \eqref{eq:agree-survives}. Proposition~\ref{prop:maxgap} gives the feasibility statements. In particular, process distinguishability is strictly below its limit at every finite horizon. Hence equality $\wbar\Del(2\rb-1)=\kappa$ still implies finite-horizon infeasibility.

In the immediate-copying regime, $a_T=a_0$ almost surely. Terminal correctness therefore has probability $\alpha_e$ under effort $e$, so its distance is $\Del$. Agreement occurs with probability one under both efforts, so its distance is zero.
\end{proof}

\section{Frontier rates and contract growth}\label{app:rates}

The main text uses only $\Pi_T\to0$. A polynomial rate requires regularity near the boundary of the LLR support.

\begin{assumption}[Positive smooth frontier density]\label{ass:frontier-density}
The density $h$ extends to a continuously differentiable function in neighborhoods of $-\Lb$ and $\Lb$, and
\[
  h(\Lb)>0.
\]
\end{assumption}

For this appendix, let $\widetilde C_T^\theta$ and $\widetilde C_T^A$ denote the high-effort wage bills obtained by inserting the minimum required bonuses into the two binary contracts, without imposing the cap $\wbar$. Whenever the corresponding bounded contract is feasible, $\widetilde C_T^\theta=C_T^\theta$ and $\widetilde C_T^A=C_T^A$.

\begin{proposition}[Frontier and attenuation rates]\label{prop:rate}
Suppose $\alpha_H<\rb$ and Assumption~\ref{ass:frontier-density} holds. Let
\[
  \delta_t:=\rb-\hat r_t.
\]
Then
\begin{equation}\label{eq:delta-rate}
  \delta_t
  =
  \frac{2\rb^2(1-\rb)}{h(\Lb)}\frac1t
  +O\!\left(\frac{\log t}{t^2}\right),
\end{equation}
and
\begin{equation}\label{eq:Pi-rate}
  \Pi_T=\Theta(T^{-2}).
\end{equation}
Consequently,
\begin{align}
  D(\theta,a_T)&=\Theta(T^{-2}),\label{eq:D-out-rate}\\
  \Del(2\rb-1)-D(a_0,a_T)&=O(T^{-1}),\label{eq:D-agree-rate}\\
  b_T^\theta,\ \widetilde C_T^\theta&=\Theta(T^2),\label{eq:out-contract-rate}\\
  b_T^A-b_\infty^A,
  \ \widetilde C_T^A-C_\infty^A&=O(T^{-1}).\label{eq:agree-contract-rate}
\end{align}
\end{proposition}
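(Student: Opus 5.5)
The plan is a second-order expansion of the learning map $F$ at the copying frontier $\rb$, which is a fixed point with unit slope. That expansion gives a ``parabolic'' fixed-point iteration for $\delta_t$. The attenuation product $\Pi_T$ then follows from $\lambda_t=F'(\hat r_t)$, and every endpoint quantity follows from the closed forms in Propositions~\ref{prop:SR-memory1}, \ref{prop:endpoint-D} and \ref{prop:endpoint-costs}.

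\emph{Step 1 (local expansion of $F$).} From \eqref{eq:F} and \eqref{eq:Fprime}, $F(\rb)=\rb$ by continuity and $F'(\rb^-)=H(\Lb)-H(-\Lb)=1$. Differentiating \eqref{eq:Fprime} with $s'(r)=1/(r(1-r))$ gives
\[
  F''(r)=\frac{h(s(r))+h(-s(r))}{r(1-r)}.
\]
At $r=\rb$, Lemma~\ref{lem:llr-identity} gives $h(-\Lb)=e^{-\Lb}h(\Lb)$. Using $1+e^{-\Lb}=1/\rb$, this yields $F''(\rb)=h(\Lb)/(\rb^2(1-\rb))=:2c>0$. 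Assumption~\ref{ass:frontier-density} makes $h$ continuously differentiable near $\pm\Lb$, so $F$ is $C^3$ on a left neighborhood of $\rb$. Taylor's theorem then gives
\[
  \delta_{t+1}=\delta_t-c\,\delta_t^2+O(\delta_t^3),
  \qquad
  \lambda_t=1-2c\,\delta_t+O(\delta_t^2).
\]
Corollary~\ref{cor:regimes} supplies $\delta_t\downarrow0$, so these expansions apply for all large $t$.

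\emph{Step 2 (rate for $\delta_t$), the main obstacle.} Inverting the recursion gives $1/\delta_{t+1}=1/\delta_t+c+O(\delta_t)$. I would first obtain the crude bound $\delta_t=O(1/t)$. Summing $1/\delta_{t+1}-1/\delta_t\ge c/2$ for $t\ge t_0$ shows $1/\delta_t\ge ct/2-K$. Feeding that bound back into the error terms gives $1/\delta_t=ct+O(\sum_{u\le t}1/u)=ct+O(\log t)$. Inverting, $\delta_t=1/(ct)+O(\log t/t^2)$, and $1/c=2\rb^2(1-\rb)/h(\Lb)$ is exactly \eqref{eq:delta-rate}. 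The care needed is the bootstrap: the error must be controlled uniformly once $\delta_t$ lies in the Taylor neighborhood. Finitely many early dates only shift constants.

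\emph{Step 3 (attenuation).} Substituting Step 2 into the expansion of $\lambda_t$ gives $\lambda_t=1-2/t+O(\log t/t^2)$. Every $\lambda_t\in(0,1)$ by Proposition~\ref{prop:learning-map}, so $\log\Pi_T=\sum_{t<T}\log\lambda_t$ is well defined. The sum equals $-2\log T+C+o(1)$, because $\sum(\log t/t^2)$ converges and the first $t_0$ terms are finite. Hence $\Pi_T=\Theta(T^{-2})$, which is \eqref{eq:Pi-rate}.

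\emph{Step 4 (records and contracts).} By \eqref{eq:D-out-memory}, $D(\theta,a_T)=\Del\Pi_T=\Theta(T^{-2})$. For agreement, \eqref{eq:SplusR} gives
\[
  \Del(2\rb-1)-D(a_0,a_T)=\Del\bigl[2\delta_T+(2\alpha_H-1)\Pi_T\bigr]=O(T^{-1}).
\]
The outcome bonus is $b_T^\theta=\kappa/(\Del\Pi_T)=\Theta(T^2)$, and $\widetilde C_T^\theta=\kappa\hat r_T/(\Del\Pi_T)$ is also $\Theta(T^2)$ since $\hat r_T\to\rb>0$.

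For agreement, $b_T^A$ is a smooth function of $S_T+R_T-1$ near its positive limit, so $b_T^A-b^A_\infty=O(T^{-1})$. The numerator of \eqref{eq:C-agree} is $\alpha_HS_T+(1-\alpha_H)(1-R_T)$. By \eqref{eq:S}--\eqref{eq:R}, $S_T$ and $R_T$ differ from $\rb$ by $O(\delta_T+\Pi_T)=O(T^{-1})$, so the numerator is within $O(T^{-1})$ of its limit. A Lipschitz argument for the ratio then gives $\widetilde C_T^A-C_\infty^A=O(T^{-1})$, which establishes \eqref{eq:agree-contract-rate}.
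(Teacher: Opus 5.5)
Your proposal is correct and follows essentially the paper's route: a second-order expansion of $F$ at the unit-slope fixed point $\rb$, inversion to $1/\delta_{t+1}-1/\delta_t=c+O(\delta_t)$ and summation, then $\lambda_t=1-2/t+O(\log t/t^2)$, and finally the closed forms for the records and contracts. You obtain $F''(\rb)$ by differentiating \eqref{eq:Fprime} directly rather than through the paper's logit substitution $s(r)=\Lb-u$, and your explicit bootstrap for the crude bound $\delta_t=O(1/t)$ fills in a step the paper only asserts.
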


\begin{proof}
Set
\[
  B:=\frac{h(\Lb)}{\rb^2(1-\rb)}.
\]
For $r=\rb-x$ with $x\downarrow0$, write $s(r)=\Lb-u$. A Taylor expansion of the logit gives
\begin{equation}\label{eq:u-x}
  u=\frac{x}{\rb(1-\rb)}+O(x^2).
\end{equation}
Using $F'(r)=H(s(r))-H(-s(r))$,
\begin{align*}
  1-F'(\rb-x)
  &=[1-H(\Lb-u)]+H(-\Lb+u)\\
  &=[h(\Lb)+h(-\Lb)]u+O(u^2).
\end{align*}
The LLR identity gives $h(-\Lb)=e^{-\Lb}h(\Lb)$, and
\[
  1+e^{-\Lb}=\frac1\rb.
\]
Combining this identity with \eqref{eq:u-x} yields
\begin{equation}\label{eq:Fprime-boundary}
  1-F'(\rb-x)=Bx+O(x^2).
\end{equation}
Since $F(\rb)=\rb$,
\begin{align*}
  \delta_{t+1}
  &=\rb-F(\rb-\delta_t)\\
  &=\int_{\rb-\delta_t}^{\rb}F'(v)\,dv\\
  &=\delta_t-\frac{B}{2}\delta_t^2+O(\delta_t^3).
\end{align*}
For sufficiently large $t$, this recursion implies $\delta_t=\Theta(1/t)$. Moreover,
\[
  \frac1{\delta_{t+1}}-\frac1{\delta_t}
  =\frac{B}{2}+O(\delta_t).
\]
Summing and using $\sum_{k\le t}\delta_k=O(\log t)$ gives
\[
  \frac1{\delta_t}=\frac{B}{2}t+O(\log t),
\]
which is equivalent to \eqref{eq:delta-rate}.

Next, \eqref{eq:Fprime-boundary} and \eqref{eq:delta-rate} imply
\[
  \lambda_t
  =F'(\hat r_t)
  =1-B\delta_t+O(\delta_t^2)
  =1-\frac2t+O\!\left(\frac{\log t}{t^2}\right).
\]
Therefore
\[
  \log\Pi_T
  =\sum_{t=0}^{T-1}\log\lambda_t
  =-2\log T+O(1),
\]
which proves \eqref{eq:Pi-rate}.

Equation \eqref{eq:D-out-rate} follows from $D(\theta,a_T)=\Del\Pi_T$. For the process record,
\begin{align*}
  \Del(2\rb-1)-D(a_0,a_T)
  &=\Del\left[2(\rb-\hat r_T)+(2\alpha_H-1)\Pi_T\right],
\end{align*}
which is $O(T^{-1})$. The outcome bonus and uncapped wage-bill formulas then give \eqref{eq:out-contract-rate}. The process bonus and uncapped wage bill are smooth rational functions of $(\hat r_T,\Pi_T)$ in a neighborhood of $(\rb,0)$ with a strictly positive limiting denominator, so \eqref{eq:agree-contract-rate} follows from \eqref{eq:delta-rate} and \eqref{eq:Pi-rate}.
\end{proof}

\section{Proofs and derivations for Section~\ref{sec:beyond}}\label{app:beyond-proofs}

\begin{proof}[Proof of Proposition~\ref{prop:universal}]
The proof of Proposition~\ref{prop:endpoint-D} used only the initial-report decomposition and sign symmetry, not memory-1. It therefore gives \eqref{eq:univ-out} and \eqref{eq:univ-agree} for every maintained architecture.

Under \eqref{eq:positive-orientation}, compare the two positive branch combinations:
\[
  (S_T+R_T-1)-(S_T-R_T)=2R_T-1.
\]
Thus process evaluation has the larger distance if and only if $R_T\ge1/2$. The bonus and cost comparisons repeat the algebra in Theorem~\ref{thm:repair-threshold}; in particular, the cost difference remains
\[
  -\frac{\kappa S_T(2R_T-1)}
  {\Del(S_T-R_T)(S_T+R_T-1)}.
\]
\end{proof}

\begin{proof}[Proof of Corollary~\ref{cor:asym-learning}]
Proposition~\ref{prop:universal} and \eqref{eq:both-branches-learn} imply
\[
  D(\theta,a_T)=\Del|S_T-R_T|\to0
\]
and
\[
  D(a_0,a_T)=\Del|S_T+R_T-1|\to\Del.
\]
The pair $(a_0,a_T)$ is a measurable function of the full history, so Proposition~\ref{prop:data-processing} gives
\[
  D(H_T)\ge D(a_0,a_T).
\]
The general upper bound \eqref{eq:D-upper} gives $D(H_T)\le\Del$. The squeeze theorem yields $D(H_T)\to\Del$.
\end{proof}

\begin{proof}[Derivation for Example~\ref{ex:cascade}]
At a neutral public history, two successive informative actions both favor the true state with probability $q^2$, both favor the false state with probability $(1-q)^2$, and disagree with probability $2q(1-q)$. Disagreement restores neutrality. This gives \eqref{eq:Gamma-recursion} and \eqref{eq:Gamma}.

If the initial report is correct, the first downstream private signal agrees with it with probability $q$, creating a correct two-action cascade. With probability $1-q$, the two actions disagree and the process returns to neutrality. Hence
\[
  S_\infty=q+(1-q)\Gamma.
\]
If the initial report is incorrect, a private signal that also points in the wrong direction, probability $1-q$, creates an incorrect cascade. A correct private signal, probability $q$, restores neutrality. Hence
\[
  R_\infty=q\Gamma.
\]
Using \eqref{eq:Gamma},
\begin{align*}
  S_\infty-R_\infty
  &=q+(1-2q)\Gamma
   =\frac{q(1-q)}{q^2+(1-q)^2},\\
  S_\infty+R_\infty-1
  &=q-1+\Gamma\\
  &=\frac{(2q-1)(q^2-q+1)}{q^2+(1-q)^2}.
\end{align*}
Multiplication by $\Del$ gives \eqref{eq:cascade-out} and \eqref{eq:cascade-agree}.
\end{proof}

\begin{proof}[Proof of Corollary~\ref{cor:cascade-threshold}]
The universal repair threshold requires
\[
  R_\infty
  =\frac{q^3}{q^2+(1-q)^2}
  \ge\frac12.
\]
This is equivalent to
\[
  g(q):=2q^3-2q^2+2q-1\ge0.
\]
Now
\[
  g'(q)=6q^2-4q+2>0
\]
for every $q$, because its discriminant is negative. Moreover,
\[
  g(1/2)=-1/4<0,
  \qquad
  g(1)=1>0.
\]
There is therefore a unique root $q^\star\in(1/2,1)$, and the ranking follows from Proposition~\ref{prop:universal}. Numerical root finding gives $q^\star\approx0.647799$.
\end{proof}

\section{Proofs for Section~\ref{sec:asymmetry}}\label{app:asymmetry-proofs}

\begin{proof}[Proof of Proposition~\ref{prop:asym-regimes}]
The two frontiers divide the real line into three regions: below $-\Lb$, inside $(-\Lb,\Lb)$, and above $\Lb$. Because $u_0>v_0$, the ordered pair $(u_0,v_0)$ can occupy only six of the nine cells in the resulting three-by-three grid. These six cells are exactly the six cases listed in the proposition.

If a post-report belief is above $\Lb$, then even the lowest possible private LLR cannot make the posterior log-odds negative, so the next reviewer chooses $+1$. If it is below $-\Lb$, even the highest possible private LLR cannot make the posterior log-odds positive, so the next reviewer chooses $-1$. If it lies strictly between the frontiers, some private-signal realizations overturn the inherited report and some do not. These observations give the economic descriptions of the six regimes. Boundary cases have probability zero under the no-atoms assumption and can be assigned to the adjacent uncontestable regimes without affecting the induced laws.
\end{proof}

\section{Numerical implementation details}\label{app:numerical}

For the density in \eqref{eq:numerical-density}, normalization follows from
\[
  \int_{-\Lb}^0 ce^z\,dz+\int_0^{\Lb}c\,dz
  =c(1-e^{-\Lb}+\Lb)=1.
\]
The CDF in \eqref{eq:numerical-cdf} follows by direct integration. The recursion used for the figures is
\begin{align*}
  s_t&=\log\frac{\hat r_t}{1-\hat r_t},\\
  \lambda_t&=H(s_t)-H(-s_t),\\
  \hat r_{t+1}
  &=\hat r_t[1-H(-s_t)]
    +(1-\hat r_t)[1-H(s_t)],\\
  \Pi_{t+1}&=\Pi_t\lambda_t,\\
  S_t&=\hat r_t+(1-\alpha_H)\Pi_t,\\
  R_t&=\hat r_t-\alpha_H\Pi_t.
\end{align*}
The source archive contains the script \texttt{generate\_figures.py} and the resulting comma-separated path. No simulation is needed: all plotted values are deterministic iterations of these equations.

\end{appendices}

\bibliographystyle{plainnat}
\bibliography{washed_out_refs}
\end{document}